\documentclass[aps,prb,reprint,amsmath,amssymb,longbibliography,nofootinbib]{revtex4-2}
\usepackage{graphicx}

\graphicspath{{./figs/}}  
\usepackage[usenames]{color}
\usepackage{tikz}
\usepackage{comment}
\usepackage{amsmath,amssymb,amsfonts}	
\usepackage{bm,braket,array}
\usepackage{multirow}
\usepackage[all]{xy}
\usepackage{amsthm}
\usepackage{amscd}
\usepackage{slashed} 
\usepackage{framed}
\usepackage{slashed}

\usepackage[hidelinks]{hyperref}

\makeatletter
\let\PTsavedendpbox\@endpbox
\protected\def\@endpbox{\PTsavedendpbox}
\makeatother

\newtheorem{thm}{Theorem}[section]

\newtheorem{lem}[thm]{Lemma}
\newtheorem{cor}[thm]{Corollary}

\theoremstyle{definition}

\theoremstyle{remark}

\def\re{{\rm Re\,}}
\def\im{{\rm Im\,}}
\def\pf{{\rm Pf\,}}

\def\Tr{{\rm Tr }}
\def\tr{{\rm tr\,}}

\def\dim{{\rm dim\,}}

\def\ker{{\rm Ker\,}}
\def\coker{{\rm Coker\,}}

\def\Mat{{\rm Mat}}

\def\ua{\uparrow}
\def\da{\downarrow}

\newcommand{\s}{\sigma}

\newcommand{\G}{\Gamma}

\newcommand{\C}{\mathbb{C}}

\newcommand{\Q}{\mathbb{Q}}
\newcommand{\R}{\mathbb{R}}
\newcommand{\Z}{\mathbb{Z}}
\newcommand{\T}{\mathbb{T}}

\newcommand{\bx}{{\bm{x}}}
\newcommand{\bk}{{\bm{k}}}
\newcommand{\br}{{\bm{r}}}

\newcommand{\bR}{{\bm{R}}}
\newcommand{\bG}{{\bm{G}}}

\def\widebar{\accentset{{\cc@style\underline{\mskip10mu}}}} 
\def\wideubar{\underaccent{{\cc@style\underline{\mskip10mu}}}} 

\begin{document}
\title{Symmetry-enforced topological parity: revisiting \(2\mathbb Z\) classifications beyond cellwise symmetry actions}
\author{Ken Shiozaki}
\affiliation{Center for Gravitational Physics and Quantum Information, Yukawa Institute for Theoretical Physics, Kyoto University, Kyoto 606-8502, Japan}
\date{\today}

\begin{abstract}
Momentum-dependent symmetry actions can turn the even-integer constraint of a $2\Z$ classification of topological insulators and superconductors into an odd-integer constraint without changing the symmetry algebra.
For internal and order-two crystalline symmetries in up to four spatial dimensions, we derive parity relations between the Chern or winding number of a gapped Hamiltonian and topological invariants of its momentum-dependent antiunitary symmetry matrices.
Whenever the symmetry action enforces odd parity, any symmetry-preserving gapped phase must be topologically nontrivial.
Using cellwise actions, which are represented by momentum-independent matrices in a fixed unit-cell basis, as a reference, we determine how locality constrains the realization of odd topological numbers.
Explicit model constructions and no-go theorems distinguish three cases: odd values can be realized by finite-range symmetry actions, can be realized by exponentially decaying quasilocal actions but not by finite-range actions, or are impossible in finite-dimensional Bloch systems.
\end{abstract}

\maketitle

\tableofcontents

\section{Introduction}
\label{sec:introduction}

The free-fermion classification of topological insulators and superconductors is based on the ten Altland--Zirnbauer (AZ) classes of internal symmetries. 
The classes specify which of time-reversal symmetry (TRS), particle-hole symmetry (PHS), and chiral symmetry are present, together with their algebraic relations~\cite{Altland-Zirnbauer}. For each class and spatial dimension, the strong stable classification is $0$, $\Z$, $2\Z$ or $\Z_2$~\cite{Schnyder-Ryu-Furusaki-Ludwig,Kitaev,RyuSchnyderFurusakiLudwig2010}. 
An entry $2\Z$ restricts the Chern or winding number to even integers, even though the classification group is abstractly $\Z$. 
The simplest example is zero-dimensional class AII: Kramers degeneracy under TRS squaring to $-1$ requires an even number of negative-energy states.

We revisit the even-integer constraint by allowing the symmetry action to depend on momentum while keeping its algebra fixed. 
Topological invariants carried by the symmetry matrices then determine the parity of the integer invariant of a gapped Hamiltonian. 
Schematically, our main results take the form 
\begin{align*}
    &\text{Topological number of Hamiltonian}\\
    &\quad\begin{aligned}[t]
      \equiv{}&\text{Topological number of}\\
              &\text{symmetry operator}\pmod2.
    \end{aligned}
\end{align*}
An appropriate symmetry action can therefore enforce odd values without changing the symmetry algebra, so that a symmetry-preserving gap requires a nontrivial topological phase. 

We consider internal and order-two crystalline symmetries in up to four spatial dimensions, with finitely many single-particle degrees of freedom per unit cell. Equivalently, we work with tight-binding models in which high-energy continuum states are neglected. 
TRS and PHS commute or anticommute with the Hamiltonian, respectively, and their squares are fixed to $\pm1$. 
In condensed-matter systems, an AZ class does not uniquely specify its many-body realization; in particular, physical particle-hole exchange must be distinguished from the redundancy of the Bogoliubov--de Gennes (BdG) representation~\cite{ChiuTeoSchnyderRyu2016}. 
Appendix~\ref{sec:summary-manybody} explains the many-body realizations and their distinctions. 
Treating momentum coordinates fixed by the antiunitary symmetry as adiabatic parameters also extends our results to adiabatic processes in up to three spatial dimensions, with symmetry operators that depend on those parameters.

We define locality in a fixed unit cell and local orbital basis. 
We call a symmetry action {\it cellwise} if it maps degrees of freedom in cell $\bR$ only to those in the image cell $\tau\bR$ under the spatial operation $\tau$. 
Equivalently, its matrix is momentum independent in the periodic Bloch basis. 
Cellwise actions provide the reference for the usual $2\Z$ constraint. 
For an internal symmetry, a cellwise action is onsite when the unit cell is regarded as the unit of local degrees of freedom. 
Our definition also covers actions involving reflections or rotations. We thus distinguish cellwise from the onsite/non-onsite terminology used for internal symmetries in the literature~\cite{ChenLiuWen2011,Seifnashri2024}.

After factoring out the geometric motion $\bR\mapsto\tau\bR$, we call an action finite range if its spread is bounded and quasilocal if its matrix elements decay exponentially with distance. 
These conditions obey 
\begin{align*}
    \text{cellwise}\subseteq\text{finite range}\subseteq\text{quasilocal}.
\end{align*} 
Even an ordinary crystalline symmetry produces momentum-dependent $U(1)$ phases when it maps an orbital to a different unit cell. 
An action that maps the degrees of freedom at each physical site only to those at its geometric image is ultralocal in the terminology of Ref.~\cite{ThorngrenElse2018}. Such an action can be finite range without being cellwise. 
Momentum dependence alone therefore does not establish nonlocality or an anomaly. 
Section~\ref{sec:summary-locality} gives precise definitions and explains their dependence on the basis. 
We treat the locality of the Hamiltonian and that of the symmetry action as separate properties.

Locality imposes further restrictions on whether odd values can be realized. 
Explicit models and no-go theorems distinguish cases that admit finite-range realizations from those that admit quasilocal realizations but no finite-range ones. 
For the latter, we derive constraints by treating the symmetry matrix as a flat Hamiltonian and applying flat-band no-go theorems. 
An isolated, exactly flat band in a finite-range model cannot have a nonzero Chern number~\cite{ChenMazaheriSeidelTang2014,DubailRead2015}. 
More generally, in the AZ classification without crystalline symmetries, gapped finite-range flat Hamiltonians cannot realize strong stable topological phases in two or more dimensions~\cite{Read-2017}. 
The flat-band result does not apply directly to crystalline symmetries, for which we derive separate no-go theorems. 
In some cases, such as four-dimensional class AI, odd values remain forbidden for finite-dimensional Bloch Hamiltonians and symmetry matrices even when arbitrary continuous momentum dependence is allowed.

Symmetry-protected topological (SPT) versions of Lieb--Schultz--Mattis (LSM) theorems identify symmetry actions for which a compatible gapped phase must have nontrivial topology~\cite{YangJiangVishwanathRan2018,ElseThorngren2020,JiangChengQiLu2021,Lu2024MagneticLSM,ChengWang2022Rotation}. 
Our parity constraints are an example: a symmetry action on the full Bloch space, including occupied and unoccupied bands, constrains the integer invariant of a gapped Hamiltonian. 
This perspective is related to the real-space Atiyah--Hirzebruch spectral sequence (AHSS), which describes anomaly cancellation between cells with higher and lower symmetry~\cite{Shiozaki-Xiong-Gomi}. 
We note that SPT--LSM theorems differ from symmetry indicators~\cite{FuKane2007,FangGilbertBernevig2012,PoVishwanathWatanabe2017}, which use the representations of {\rm occupied bands} at high-symmetry momenta.

For chiral classes, we relate the constraints for a cellwise chiral operator $\G$ to those for momentum-dependent $\G$. 
A momentum-dependent chiral operator makes the integer winding number depend on the chosen trivialization of $\G$, so the integer must be treated as a relative quantity~\cite{Thiang}. 
Even though the winding number is a relative quantity, we introduce a $\Z_2$ invariant ${\rm WP}_d[H,\G]$ that is independent of the choice of eigenframes of $\G$ (Sec.~\ref{sec:summary-chiral-general-frame}).
The $\Z_2$ invariant ${\rm WP}_d[H,\G]$ is unchanged by continuous deformations of $(H,\G)$ that preserve the gap and the chiral symmetry algebra, and is constrained by topological invariants of the symmetry matrices.

We also realize models with odd topological invariants as gapped boundaries of finite-range bulks in one higher dimension, where cellwise bulk symmetries induce momentum-dependent boundary actions. These bulks admit symmetry-preserving deformations to an atomic limit; Sec.~\ref{sec:oddmodel-construction} develops the constructions and discusses their relation to anomalous symmetry-protected topological (ASPT) phases.

The paper is organized as follows. Section~\ref{sec:setup-summary} defines the symmetry actions, locality conditions, and invariants, and summarizes the parity constraints with a cellwise chiral operator when present. Section~\ref{sec:summary-integer-invariants} treats momentum-dependent chiral operators, defines the mod 2 invariant, and gives the corresponding parity constraints and locality results. Sections~\ref{sec:1d}--\ref{sec:c2f} prove the parity constraints in one through four dimensions and examine explicit models and the locality of the symmetry actions. These sections also introduce the Wess--Zumino terms, Polyakov--Wiegmann formulas, and a patchwise gluing formula for the Chern--Simons-type action needed in the proofs. 
Section~\ref{sec:oddmodel-construction} constructs finite-range higher-dimensional bulks whose gapped boundaries realize the odd-invariant models and discusses their relation to ASPT phases.
Section~\ref{sec:conclusion} presents our conclusions and open questions. 
Appendix~\ref{sec:summary-manybody} presents the fermionic many-body realizations, including the distinction between physical particle-hole symmetry and Nambu redundancy.
The spin-structure dependence of the Chern--Simons action is discussed in Appendix~\ref{sec:c2f-spin-note}.
The proof of the finite-range no-go theorem for four-dimensional class D is given in Appendix~\ref{app:d4-finite-range}.

\section{\texorpdfstring{\NoCaseChange{Setup and summary of results}}{Setup and summary of results}}
\label{sec:setup-summary}

\subsection{Hamiltonians and symmetry actions}
\label{sec:summary-H-symmetries}

We set the lattice constant to unity and denote the $d$-dimensional Brillouin torus by $T^d=\R^d/(2\pi\Z)^d$. 
Throughout, the unit cell and its local orbitals are fixed. 
For the matrix description, we associate the finitely many orbital degrees of freedom in each unit cell with its center, even when they occupy distinct Wyckoff positions. 
In the associated periodic single-particle basis, or Nambu basis for a BdG system, the Hamiltonian $H_\bk$ is a finite-dimensional Hermitian matrix. 
The Hamiltonian $H_\bk$ is continuous and periodic,
\begin{align}
    H_{\bk+\bG}=H_\bk,\qquad
    \bG=2\pi\bm n,\quad \bm n\in\Z^d,
\end{align}
where $\bG$ is a reciprocal lattice vector.
Symmetry operators are represented by finite-dimensional matrices on the same single-particle Hilbert space. 
We also impose the gap condition
\begin{align}
    0 \notin {\rm Spec} H_\bk, \quad\bk\in T^d. 
\end{align}

Let $\tau$ be an order-two action in real space that reverses $d_\tau$ coordinates:
\begin{align}
    \tau(x_1,\ldots,x_d)
    =(-x_1,\ldots,-x_{d_\tau},x_{d_\tau+1},\ldots,x_d).
\end{align}
The antiunitary operation also complex conjugates the Bloch states, so momentum transforms as
\begin{align}
    \bk\longmapsto -\tau\bk
    =(k_1,\ldots,k_{d_\tau},-k_{d_\tau+1},\ldots,-k_d).
    \label{eq:summary-momentum-action}
\end{align}
Thus, $d_\tau=0$ gives the usual momentum reversal for internal TRS or PHS, whereas $d_\tau=d$ leaves momentum fixed. 
Intermediate values describe combined operations: for example, $d_\tau=1$ combines an antiunitary operation with a reflection. 
In this paper, the AZ class refers to the algebra of the combined antiunitary operators.
Neither the spatial operation nor the internal TRS or PHS is required to be a symmetry separately.

We take the unitary matrices $T_\bk,C_\bk$ to be continuous and periodic on the entire Brillouin zone (BZ) torus, and write TRS and PHS as
\begin{align}
    {\rm TRS}:\quad
    &T_\bk H_\bk^*T_\bk^\dag=H_{-\tau\bk},
    &T_\bk^\top&=\epsilon_T T_{-\tau\bk},\label{eq:summary-TRS}\\
    {\rm PHS}:\quad
    &C_\bk H_\bk^*C_\bk^\dag=-H_{-\tau\bk},
    &C_\bk^\top&=\epsilon_C C_{-\tau\bk},\label{eq:summary-PHS}
\end{align}
where $\epsilon_T,\epsilon_C \in \{\pm 1\}$ specify the squares of the antiunitary actions. 
The matrices $T_\bk,C_\bk$ may depend on momentum. 
We call an action cellwise when its matrix $X_\bk$ is momentum independent in the fixed unit-cell orbital basis. 
For a spatial action $\tau$, a cellwise symmetry maps cell $\bm R$ to cell $\tau\bm R$ without any additional cell displacement. 
Section~\ref{sec:summary-locality} gives the precise definition.

When both TRS and PHS are present, $T$ and $C$ need not commute. Their commutation relation takes the form
\begin{align}
    T_{-\tau\bk}C_\bk^* = z_{\bk} C_{-\tau \bk}T_\bk^*,
\end{align}
where the $U(1)$ phase $z_{\bk}$ cannot in general be removed by redefining the phases of $T_\bk,C_\bk$.
We assume throughout that $z_\bk$ can be trivialized.
Equivalently, the chiral symmetry $\Gamma_\bk \sim T_{-\tau \bk} C_\bk^*$ obtained by combining $T$ and $C$ is required to be an internal $\Z_2$ antisymmetry.
Indeed, $\Gamma_\bk^2= \epsilon_T \epsilon_C z_\bk$. 
When $z_\bk$ can be trivialized, the square of $\Gamma$ is a $U(1)$ phase that can be removed by redefining the phase of $\Gamma_\bk$.
We therefore adopt the convention that $\Gamma_\bk$ squares to $1$.
The chiral symmetry condition is then
\begin{align}
    {\rm Chiral}:\quad
    &\G_\bk H_\bk \G_\bk^\dag=-H_{\bk},\quad 
    \Gamma_\bk^\dag = \Gamma_\bk.
    \label{eq:summary-chiral}
\end{align}
We refer to classes with both TRS and PHS as chiral classes.
With TRS and chiral symmetry as generators, our phase convention gives
\begin{align}
 T_\bk\G_\bk^*
 =\begin{cases}
 \G_{-\tau\bk}T_\bk & ({\rm BDI,CII}),\\
 -\G_{-\tau\bk}T_\bk & ({\rm DIII,CI})
 \end{cases}.
 \label{eq:TRS_G}
\end{align}

For a cellwise chiral operator, choose a momentum-independent basis that diagonalizes $\G$. The gap condition forces the two chiral sectors to have the same dimension $N$, so in this basis
\begin{align}
 \G&=\begin{pmatrix}I_N&0\\0&-I_N\end{pmatrix},\notag\\
 H_\bk&=\begin{pmatrix}0&D_\bk^\dag\\D_\bk&0\end{pmatrix},
 \qquad D_\bk\in{\rm GL}_N(\C).
 \label{eq:summary-cellwise-chiral-block}
\end{align}
We use TRS and chiral symmetry as the independent generators. Equation~\eqref{eq:TRS_G} makes $T_\bk$ block diagonal in BDI and CII and block off-diagonal in DIII and CI. We define the unitary matrices $u_\bk,v_\bk\in U(N)$ by
\begin{align}
 T_\bk&=\begin{pmatrix}v_\bk&0\\0&u_\bk\end{pmatrix}
 \qquad({\rm BDI,CII}),
 \label{eq:summary-cellwise-T-diagonal}
\end{align}
and
\begin{align}
 T_\bk&=\begin{pmatrix}0&\epsilon_T u_{-\tau\bk}\\u_\bk^\top&0\end{pmatrix}
 \qquad({\rm DIII,CI}).
 \label{eq:summary-cellwise-T-offdiagonal}
\end{align}
The off-diagonal form incorporates $T_\bk^\top=\epsilon_T T_{-\tau\bk}$. The TRS conditions therefore become
\begin{align}
 {\rm BDI}, {\rm CII}:\quad&u_\bk D_\bk^*v_\bk^\dag=D_{-\tau\bk},\notag\\
 &u_\bk^\top=\epsilon_T\, u_{-\tau\bk},\quad v_\bk^\top=\epsilon_T\, v_{-\tau\bk},\notag\\
 {\rm DIII}, {\rm CI}:\quad&(D_\bk u_\bk)^\top=\epsilon_T\, D_{-\tau\bk}u_{-\tau\bk}.
 \label{eq:summary-chiral-symmetries}
\end{align}
The same conditions hold after replacing $D_\bk$ by its unitary flattening $q_\bk$, defined in Sec.~\ref{sec:summary-Z-invariants}. These block conventions will be used throughout the cellwise chiral discussion. Section~\ref{sec:summary-chiral-general-frame} extends them to a momentum-dependent chiral operator by introducing a trivialization frame.

The many-body realizations of these single-particle symmetry conditions, for both particle-number-conserving and BdG systems, are given in Appendix~\ref{sec:summary-manybody}.

Table~\ref{tab:AZ} lists the real AZ classes and their symmetry signs. 
The underlying integer invariants belong to the complex classes A and AIII; the antiunitary-symmetry constraints associated with $2\Z$ classifications concern the eight real classes.

\begin{table*}[t]
    \centering
    \caption{Real AZ classes and their zero-dimensional stable classifications. The entries $\pm1$ for TRS and PHS are the signs in Eqs.~\eqref{eq:summary-TRS} and \eqref{eq:summary-PHS}, while 0 denotes the absence of a symmetry. The entry 1 for Chiral denotes its presence.
    The $R_n$ column gives representative descriptions of the classifying spaces. 
    The notation $2\Z$ means that the map to the classification without antiunitary symmetry (the complex class) is multiplication by two.}
    \label{tab:AZ}
    \small
    \setlength{\tabcolsep}{5pt}
    \begin{tabular}{ccccccc}
        \hline
        AZ & TRS & PHS & Chiral & $n\bmod8$ & $R_n$ & $\pi_0(R_n)$ \\
        \hline
        AI   & $+1$ & $0$ & $0$ & $0$ & $O(N+M)/(O(N)\times O(M))$ & $\Z$ \\
        BDI  & $+1$ & $+1$ & $1$ & $1$ & $O(N)$ & $\Z_2$ \\
        D    & $0$ & $+1$ & $0$ & $2$ & $O(2N)/U(N)$ & $\Z_2$ \\
        DIII & $-1$ & $+1$ & $1$ & $3$ & $U(2N)/Sp(N)$ & $0$ \\
        AII  & $-1$ & $0$ & $0$ & $4$ & $Sp(N+M)/(Sp(N)\times Sp(M))$ & $2\Z$ \\
        CII  & $-1$ & $-1$ & $1$ & $5$ & $Sp(N)$ & $0$ \\
        C    & $0$ & $-1$ & $0$ & $6$ & $Sp(N)/U(N)$ & $0$ \\
        CI   & $+1$ & $-1$ & $1$ & $7$ & $U(N)/O(N)$ & $0$ \\
        \hline
    \end{tabular}
\end{table*}

For cellwise symmetry actions, the strong stable classification depends on the numbers of reversed and fixed momentum coordinates and is given by
\begin{align}
    \pi_0\!\left(R_{n-d+2d_\tau}\right),
    \label{eq:summary-strong-group}
\end{align}
with the index taken modulo eight~\cite{Teo-Kane,Shiozaki-Sato}. 
The strong classification omits weak components associated with lower-dimensional subtori and is therefore not the full classification on the BZ torus. 
We focus on the cases
\begin{align}
    n-d+2d_\tau\equiv4\pmod8,
    \label{eq:summary-even-cases}
\end{align}
which have a $2\Z$ classification, and determine the parity of the integer invariant for a fixed momentum-dependent symmetry matrix.

\subsection{Definitions of locality}
\label{sec:summary-locality}

For a fixed choice of unit cell and local orbitals, let $\bm R$ and $a$ label the cells and orbitals, respectively, and denote the real-space basis by $\ket{\bm R,a}$. 
Our convention for the periodic Bloch basis is
\begin{align}
 \ket{\bk,a}=\sum_{\bm R\in\Z^d}e^{i\bk\cdot\bm R}\ket{\bm R,a},
 \label{eq:summary-locality-bloch}
\end{align}
even if there are multiple Wyckoff positions in each unit cell.
We factor out the geometric cell motion $\bm R\mapsto\tau\bm R$ and represent any additional motion and mixing by $X_\bk$. For example, writing the single-particle antiunitary action of TRS or PHS as $\mathcal X$, a cellwise action is represented by a constant matrix $X_0$ as
\begin{align}
 \mathcal X\ket{\bm R,a}
 &=\sum_b\ket{\tau\bm R,b}(X_0)_{ba},\notag\\
 \mathcal X\ket{\bk,a}
 &=\sum_b\ket{-\tau\bk,b}(X_0)_{ba}.
 \label{eq:summary-cellwise-action}
\end{align}
A cellwise action thus mixes orbitals only within the geometrically mapped cell, with no additional cell displacement. Equivalently, $X_\bk=X_0$ in the fixed Bloch basis. For an internal action, $\tau={\rm id}$, and cellwise corresponds to onsite if the unit cell is taken as the unit of local degrees of freedom. For a general action, we measure its spread by the distance $\|\bm R'-\tau\bm R\|$ between the image cell $\bm R'$ and the geometric image $\tau\bm R$.

Define the Fourier coefficients of a symmetry matrix by
\begin{align}
 \tilde X(\br)
 &=\int_{T^d}\frac{d^d k}{(2\pi)^d}\,
       e^{i\bk\cdot\br}X_\bk,\qquad \br\in\Z^d,\notag\\
 X_\bk&=\sum_{\br\in\Z^d}\tilde X(\br)e^{-i\bk\cdot\br}.
 \label{eq:summary-fourier}
\end{align}
The Fourier series converges absolutely for the finite-range and quasilocal actions defined below. A cellwise action has $\tilde X(\br)=X_0\delta_{\br,0}$. We call an action finite range if $\tilde X(\br)$ has finite support. Equivalently, $X_\bk$ is a finite Laurent polynomial in $e^{\pm ik_\mu}$. In real space, the image of an orbital in cell $\bm R$ is then confined to finitely many cells around $\tau\bm R$. We call the action quasilocal if there exist $C,a>0$ such that
\begin{align}
 \|\tilde X(\br)\|\le C e^{-a\sum_\mu|r_\mu|}.
 \label{eq:summary-quasilocal}
\end{align}
The three locality conditions obey
\begin{align}
 \text{cellwise}\ \subseteq\ \text{finite range}\
 \subseteq\ \text{quasilocal}.
 \label{eq:summary-locality-hierarchy}
\end{align}
Quasilocal therefore includes finite range. We use the label ``quasilocal only'' for cases that admit no finite-range realization.

An ordinary crystalline symmetry maps orbitals according to their physical positions. In the fixed assignment of orbitals to unit cells, this can require additional cell displacements relative to the geometric cell motion, producing Bloch $U(1)$ phases.
Thus, when orbitals at different Wyckoff positions are represented as internal degrees of freedom at the cell center, an ultralocal action on the physical sites can become a finite-range, non-cellwise matrix action. 
Whether an action is cellwise can depend on the unit cell and orbital basis. 
We do not assume that a change of unit cell can make every finite-range action cellwise.

Quasilocality is related to holomorphic extension to complex momentum. If $X_\bk$ admits a periodic holomorphic extension to the strip $|\operatorname{Im}k_\mu|<a$, then, for every $0<a'<a$, we have~\cite{Kohn1959,PanatiPisante2013}
\begin{align}
 \|\tilde X(\br)\|
 =O\!\left(e^{-a'\sum_\mu|r_\mu|}\right).
\end{align}
To obtain the exponential decay, shift the integration contour in Eq.~\eqref{eq:summary-fourier} by $+i a'\operatorname{sgn}r_\mu$ in each component; the decay factor comes from $e^{i\bk\cdot\br}$. 
Conversely, Eq.~\eqref{eq:summary-quasilocal} ensures that the Fourier series converges locally uniformly in a strip, defining a holomorphic extension. 
Smoothness alone does not imply exponential decay.

The hopping range of the Hamiltonian and the locality of the symmetry action are separate properties. 
A flattened Hamiltonian or a doubled Hermitian Hamiltonian constructed from a symmetry matrix is generally quasilocal but need not have finite range. 
Momentum-dependent basis changes can also alter locality, which we therefore evaluate in the original unit-cell and real-space orbital basis.

\subsection{\texorpdfstring{$\Z$}{Z} topological invariants}
\label{sec:summary-Z-invariants}
In even dimension $d=2m$, let $P_\bk$ be the projector onto the negative-energy bands. 
We define the Chern number by
\begin{align}
    {\rm ch}_m[H]
    =\frac{1}{m!}\left(\frac{i}{2\pi}\right)^m
       \int_{T^{2m}}\Tr\!\left[(P_\bk dP_\bk dP_\bk)^m\right]
       \in\Z.
    \label{eq:summary-chern}
\end{align}
The notation ${\rm ch}_m$ denotes the integral of the Chern character.
Equivalently, using the Berry connection $A$ for the occupied states of $H_\bk$ and its curvature $F = dA + A^2$, we can write
\begin{align}
    {\rm ch}_m[H] = \frac{1}{m!}\left(\frac{i}{2\pi}\right)^m
       \int_{T^{2m}}\tr(F^m).
\end{align}
In $d=0$, ${\rm ch}_0[H]$ counts the negative-energy states. In class AII, $TH^*T^\dag=H$ and $T^\top=-T$ imply Kramers pairing, so
\begin{align}
    {\rm ch}_0[H]:=\Tr P\in2\Z\qquad(\text{zero-dimensional AII}).
    \label{eq:summary-ch0}
\end{align}

The Chern number does not depend on the choice of connection within a given topological sector; the Berry connection is one possible choice.
When TRS is present, a time-reversal-symmetric connection is useful for calculations; we introduce it in Sec.~\ref{sec:c2f-tr-connection}.

In odd dimension $d=2m+1$, the $\Z$-valued invariant in the presence of chiral symmetry is the winding number. In the cellwise block form~\eqref{eq:summary-cellwise-chiral-block}, set $q_\bk=D_\bk(D_\bk^\dag D_\bk)^{-1/2}$. This is the unitary lower-left block of the flattened Hamiltonian. Momentum-dependent chiral operators and the choice of a trivialization frame are treated separately in Sec.~\ref{sec:summary-integer-invariants}. For a periodic map
\begin{align}
    q: T^{2m+1} \to U(N),
\end{align}
the winding number is defined by
\begin{align}
W_{2m+1}[q]
    ={}&\frac{(-1)^m m!}{(2m+1)!}
       \left(\frac{1}{2\pi i}\right)^{m+1}\notag\\
       &\times\int_{T^{2m+1}}\tr\!\left[(q^{-1}dq)^{2m+1}\right]
       \in\Z.
    \label{eq:summary-winding}
\end{align}
The cases needed here are
\begin{align}
\begin{aligned}
    W_1[q]&=\frac{1}{2\pi i}\oint\tr(q^{-1}dq),\\
    W_3[q]&=\frac{1}{24\pi^2}\int_{T^3}\tr[(q^{-1}dq)^3].
    \end{aligned}
\end{align}
The winding number also extends to maps $q_\bk$ with values in the invertible matrices $GL_N(\C)$. In the following, however, we use unitary representatives when evaluating winding numbers.

\subsection{Invariants of symmetry matrices}
\label{sec:summary-symmetry-invariants}

In this subsection, we summarize the topological invariants of symmetry matrices that enter the parity constraints.
Numerical functionals are written with square brackets, as in $\nu[X;\gamma]$. 
The first argument is the matrix, bundle, or other object being evaluated. A semicolon separates that argument from auxiliary data such as a path, surface, or spin structure.
The evaluation domain is omitted when it is clear from context.
Cohomology classes are written as $w_j(E),c_j(E)$. Their numerical evaluations are written with an explicit pairing with a fundamental class, such as $\langle w_j(E),[\Sigma]\rangle$.

Symmetry matrices have topological invariants of their own, distinct from those of the Hamiltonian. For the TRS block $u_\bk$ in classes DIII and CI, defined in Eq.~\eqref{eq:summary-cellwise-T-offdiagonal}, we use the winding number in Eq.~\eqref{eq:summary-winding}. A unitary matrix $X_\bk$ satisfying $X_\bk^\top=\pm X_{-\tau\bk}$ obeys the same condition as the off-diagonal block in class CI or DIII with $u_\bk=1$ in Eq.~\eqref{eq:summary-chiral-symmetries}. Its strong stable classification is
\begin{align}
    X_\bk^\top=X_{-\tau\bk}:&\quad
       \pi_0(R_{7-d+2d_\tau}),\notag\\
    X_\bk^\top=-X_{-\tau\bk}:&\quad
       \pi_0(R_{3-d+2d_\tau}).
    \label{eq:summary-symmetry-groups}
\end{align}
We retain the Brillouin torus and evaluate the invariants entering the parity constraints, rather than replacing momentum space by a $d$-dimensional sphere that retains only the strong component.

First, if $X_\bk^\top=-X_{-\tau\bk}$, then $X$ is antisymmetric at fixed points $P,Q$ and has a Pfaffian there. For an oriented path $\gamma_{PQ}$ from $P$ to $Q$, define the $\Z_2$ invariant~\cite{Qi-Hughes-Zhang-2010} by
\begin{align}
\begin{gathered}
    (-1)^{\nu[X;\gamma_{PQ}]}
    :=\frac{\pf[X_Q]}{\pf[X_P]}
      \exp\!\left[-\frac12\int_{\gamma_{PQ}}d\log\det X_\bk\right],\\
    \nu[X;\gamma_{PQ}]\in \{0,1\}.
    \end{gathered}
    \label{eq:summary-pfaffian}
\end{align}
The Pfaffian identity ensures that the right-hand side is $\pm1$, since the square of the Pfaffian equals the determinant. In one-dimensional CII, $\gamma_{0\pi}$ is the positively oriented path from $k=0$ to $\pi$. In two-dimensional C, we set
\begin{align}
\begin{gathered}
    \G=(0,0),\qquad X=(\pi,0),\\
    Y=(0,\pi),\qquad M=(\pi,\pi),
    \end{gathered}
\end{align}
and take $\gamma_{\G X},\gamma_{YM}$ along increasing $k_x$ on $k_y=0,\pi$, respectively.
The invariant $\nu$ is additive under direct sums:
\begin{align}
    \nu[X \oplus Y;\gamma_{PQ}] = \nu[X;\gamma_{PQ}] + \nu[Y;\gamma_{PQ}] \pmod 2.
\end{align}
Under a basis transformation $X_\bk \mapsto V^\dag_{-\tau \bk} X_\bk V_\bk^*$, the invariant shifts by a winding number:
\begin{align}
    \nu[V^\dag_{-\tau} X V^*;\gamma_{PQ}] 
    &= \nu[X;\gamma_{PQ}] + W_1[V] \pmod 2.
\end{align}
Here, $W_1[V]$ is the one-dimensional winding number along $\gamma_{QP}$ and the image $-\tau(\gamma_{PQ})$ under $-\tau$.

For a momentum-fixing symmetry $X_\bk^\top=X_\bk$, a local Takagi factorization of the symmetric unitary matrix and its transition functions,
\begin{align}
    X=U_iU_i^\top,\qquad U_j=U_iS_{ij},\qquad S_{ij}\in O(N),
    \label{eq:summary-real-bundle}
\end{align}
define a real bundle. We denote the associated principal $O(N)$ bundle by $\mathcal P_X$ and use $w_j(\mathcal P_X)$ for the Stiefel--Whitney classes of the associated real vector bundle.
The second Stiefel--Whitney (SW) class obeys the Whitney sum formula
\begin{align}
 w_2(E\oplus F)
 =w_2(E)+w_2(F)+w_1(E)\cup w_1(F).
 \label{eq:summary-SW-whitney}
\end{align}
Direct sums therefore require the cross term involving the first SW classes.

In three dimensions, we define the $\Z_2$ invariant $\mu[X]$ for $X_\bk^\top=-X_{(k_x,-k_y,-k_z)}$ in Eq.~\eqref{eq:alt62-nu} by introducing an $\R/2\Z$-valued refinement of the Wess--Zumino (WZ) term.
For a symmetric unitary matrix $X_\bk^\top=X_\bk$, we choose an auxiliary spin structure $\mathfrak s$ and define the CI invariant $\zeta[X;\mathfrak s]$ using a spin Chern--Simons (CS) term in Eqs.~\eqref{eq:alt64-nu-definition_1} and \eqref{eq:alt64-nu-definition_2}~\cite{ShiozakiCI2026}.
For three-dimensional BDI and four-dimensional D, we take the difference of two $\zeta$ invariants evaluated using the same $\mathfrak s$. Whenever a gapped Hamiltonian exists, the difference is independent of $\mathfrak s$.
Both invariants are additive,
\begin{align}
    \mu[X\oplus Y] &= \mu[X]+\mu[Y]\pmod 2,\notag\\
    \zeta[X\oplus Y;\mathfrak s] &= \zeta[X;\mathfrak s]+\zeta[Y;\mathfrak s]\pmod 2,
\end{align}
and a basis transformation $X_\bk \mapsto V^\dag_{-\tau \bk} X_\bk V_\bk^*$ shifts them by a winding number:
\begin{align}
    \mu[V^\dag_{-\tau} X V^*] &= \mu[X]+W_3[V]\pmod 2,\notag\\
    \zeta[V^\dag_{-\tau} X V^*;\mathfrak s] &= \zeta[X;\mathfrak s]+W_3[V]\pmod 2.
\end{align}
Sections~\ref{sec:3d-CII} and \ref{sec:3d-BDI} construct the $\Z_2$ invariants $\mu[X], \zeta[X;\mathfrak s]$, respectively.

\subsection{Overview of the results}

\begin{table*}[tp]
\centering
\caption{Parity constraints and realization of odd invariants for finite-dimensional Bloch Hamiltonians and symmetry matrices.
$d$ is the spatial dimension, AZ denotes the Altland--Zirnbauer class, $d_\tau$ counts the real-space coordinates reversed by the antiunitary symmetry, and $-\tau$ is its momentum-space action.
``No (finite dim.)'' excludes odd values in this finite-dimensional setting, regardless of locality. 
The other entries specify the locality of symmetry operators realizing odd values: finite range means that a finite-range realization exists, whereas quasilocal only means that an exponentially decaying realization exists but no finite-range realization does. 
For chiral classes, $\G$ is cellwise and is written as $\operatorname{diag}(I,-I)$. Section numbers link to the corresponding discussions.}
\label{tab:result}
\small
\setlength{\tabcolsep}{4pt}
\renewcommand{\arraystretch}{1.35}

\begin{tabular*}{\textwidth}{@{}ccc@{\hspace{12pt}\extracolsep{\fill}}llcc@{}}
\hline
$d$ & AZ & $d_\tau$ & Constraint ($\bmod2$) & Symmetry & Odd values & Sec. \\
\hline
0 & AII & 0 & ${\rm ch}_0[H]\equiv0$
& $\begin{aligned}[t]TH^*T^\dag&=H,\\T^\top&=-T\end{aligned}$
& No (finite dim.) & \ref{sec:summary-Z-invariants} \\[2pt]
\hline
1 & CII & 0 & $W_1[q]\equiv\nu[u;\gamma_{0\pi}]-\nu[v;\gamma_{0\pi}]$
& $\begin{aligned}[t]uq^*v^\dag&=q_{-\tau},\\u^\top&=-u_{-\tau},\\v^\top&=-v_{-\tau}\end{aligned}$
& finite range & \ref{sec:1d-CII} \\[2pt]
1 & DIII & 1 & $W_1[q]\equiv W_1[u]$
& $(qu)^\top=-qu$
& finite range & \ref{sec:1d-DIII} \\[2pt]
\hline
2 & C & 0 & ${\rm ch}_1[H]\equiv\nu[C;\gamma_{\G X}]-\nu[C;\gamma_{YM}]$
& $\begin{aligned}[t]CH^*C^\dag&=-H_{-\tau},\\C^\top&=-C_{-\tau}\end{aligned}$
& quasilocal only & \ref{sec:2d-C} \\[2pt]
2 & AII & 1 & ${\rm ch}_1[H]\equiv0$
& $\begin{aligned}[t]TH^*T^\dag&=H_{-\tau},\\T^\top&=-T_{-\tau}\end{aligned}$
& No (finite dim.) & \ref{sec:2d-AII} \\[2pt]
2 & D & 2 & ${\rm ch}_1[H]\equiv\langle w_2(\mathcal P_C),[T^2]\rangle$
& $\begin{aligned}[t]CH^*C^\dag&=-H,\\C^\top&=C\end{aligned}$
& finite range & \ref{sec:2d_classD} \\[2pt]
\hline
3 & CI & 0 & $W_3[q]\equiv W_3[u]$
& $(qu)^\top=(qu)_{-\tau}$
& quasilocal only & \ref{sec:3d-CI} \\[2pt]
3 & CII & 1 & $W_3[q]\equiv\mu[u]-\mu[v]$
& $\begin{aligned}[t]uq^*v^\dag&=q_{-\tau},\\u^\top&=-u_{-\tau},\\v^\top&=-v_{-\tau}\end{aligned}$
& quasilocal only & \ref{sec:3d-CII} \\[2pt]
3 & DIII & 2 & $W_3[q]\equiv W_3[u]$
& $(qu)^\top=-(qu)_{-\tau}$
& quasilocal only & \ref{sec:3d-DIII} \\[2pt]
3 & BDI & 3 & $W_3[q]\equiv\zeta[u;\mathfrak s]-\zeta[v;\mathfrak s]$
& $\begin{aligned}[t]uq^*v^\dag&=q,\\u^\top&=u,\\v^\top&=v\end{aligned}$
& finite range & \ref{sec:3d-BDI} \\[2pt]
\hline
4 & AI & 0 & ${\rm ch}_2[H]\equiv0$
& $\begin{aligned}[t]TH^*T^\dag&=H_{-\tau},\\T^\top&=T_{-\tau}\end{aligned}$
& No (finite dim.) & \ref{sec:c2f-ai40} \\[2pt]
4 & C & 1 & ${\rm ch}_2[H]\equiv\mu[C_0]-\mu[C_\pi]$
& $\begin{aligned}[t]CH^*C^\dag&=-H_{-\tau},\\C^\top&=-C_{-\tau}\end{aligned}$
& quasilocal only & \ref{sec:c2f-c41} \\[2pt]
4 & AII & 2 & ${\rm ch}_2[H]\equiv0$
& $\begin{aligned}[t]TH^*T^\dag&=H_{-\tau},\\T^\top&=-T_{-\tau}\end{aligned}$
& No (finite dim.) & \ref{sec:c2f-aii42} \\[2pt]
4 & D & 3 & ${\rm ch}_2[H]\equiv\zeta[C_0;\mathfrak s]-\zeta[C_\pi;\mathfrak s]$
& $\begin{aligned}[t]CH^*C^\dag&=-H_{-\tau},\\C^\top&=C_{-\tau}\end{aligned}$
& quasilocal only & \ref{sec:c2f-d43} \\[2pt]
4 & AI & 4 & ${\rm ch}_2[H]\equiv0$
& $\begin{aligned}[t]TH^*T^\dag&=H,\\T^\top&=T\end{aligned}$
& No (finite dim.) & \ref{sec:c2f-ai44} \\[2pt]
\hline
\end{tabular*}
\end{table*}

Table~\ref{tab:result} summarizes the parity constraints proved in this paper for all 15 cases with $d\le4$ satisfying Eq.~\eqref{eq:summary-even-cases}. 
In chiral classes, we take $\G$ to be cellwise and use the off-diagonal block $q$ defined in Sec.~\ref{sec:summary-Z-invariants}. The matrices $u,v$ are defined by the TRS block forms in Sec.~\ref{sec:summary-H-symmetries} and satisfy Eq.~\eqref{eq:summary-chiral-symmetries}. Section~\ref{sec:summary-integer-invariants} treats momentum-dependent chiral operators and presents the corresponding results in a separate table.
Table~\ref{tab:result} assumes that a finite-dimensional gapped $H$ compatible with the specified symmetries exists. All congruences are taken $\bmod2$. 
The invariants are defined in Secs.~\ref{sec:summary-Z-invariants} and \ref{sec:summary-symmetry-invariants}. In the symmetry column, we suppress the momentum subscript $\bk$ and write $X_{-\tau}(\bk):=X_{-\tau\bk}$. For four-dimensional C and D, the abbreviation $C_s$ is defined by
\begin{align}
 C_s(k_1,k_2,k_3):=C_{(k_1,k_2,k_3,s)},\qquad s=0,\pi.
 \label{eq:summary-boundary-C}
\end{align}

In the ``Odd values'' column, ``No (finite dim.)'' means that odd values cannot be realized by finite-dimensional Bloch Hamiltonians and symmetry matrices, regardless of locality. The entry ``finite range'' means that odd values can be realized by a finite-range symmetry action. 
The entry ``quasilocal only'' means that a finite-range realization is impossible but an exponentially decaying action can realize odd values. 
This column concerns the realization of symmetry actions, not the hopping range of the Hamiltonian. 

The final column refers to the corresponding section.
Our finite-dimensional matrix description excludes boundaries with infinitely many degrees of freedom in the normal direction at each fixed boundary momentum, such as the surface of a higher-dimensional half-space. 
The table therefore does not determine whether such boundaries can realize odd values. 

\section{\texorpdfstring{\NoCaseChange{Chiral classes and a mod 2 invariant}}{Chiral classes and a mod 2 invariant}}
\label{sec:summary-integer-invariants}

When the chiral operator depends on momentum, interpreting the winding number requires some care. We begin with a simple model that illustrates how the choice of a trivialization frame is related to the boundary condition, and then give the general formulation. Finally, we express the parity constraints in the original orbital basis and summarize them in Table~\ref{tab:result-wp-full-T}.

\subsection{An example with a momentum-dependent chiral operator}
\label{sec:summary-nononsite-chiral-example}

Consider a one-dimensional system with two spin states $\ua,\da$ per unit cell. We denote the real-space basis by $\ket{j,\s}$, with $j\in\Z$ and $\s\in\{\ua,\da\}$. Define the momentum basis by
\begin{align}
 \ket{k,\s}=\sum_j\ket{j,\s}e^{ikj},
 \label{eq:summary-nononsite-chiral-bloch}
\end{align}
and, for $m>0$, consider the Hamiltonian and chiral operator
\begin{align}
 H_k=m\sigma_z,\qquad
 \G_k=\begin{pmatrix}0&e^{-ik}\\e^{ik}&0\end{pmatrix}.
 \label{eq:summary-nononsite-chiral-model}
\end{align}
The operators satisfy $\G_k^2=I$ and $\{\G_k,H_k\}=0$, and the Hamiltonian has energies $\pm m$. The Hamiltonian $H$ contains only a mass term within each unit cell, whereas $\G$ exchanges spins in neighboring cells and is therefore finite range but non-cellwise:
\begin{align}
 \hat\G\ket{j,\ua}=\ket{j-1,\da},\qquad
 \hat\G\ket{j,\da}=\ket{j+1,\ua}.
 \label{eq:summary-nononsite-chiral-action}
\end{align}
Restricting the original degrees of freedom to $j=1,\ldots,L$ leaves $H$ without boundary zero modes. However, $\G$ maps $\ket{1,\ua}$ and $\ket{L,\da}$ outside the retained region, so simply truncating $\G$ destroys its unitarity.

To calculate the winding number, choose eigenvectors of $\G_k$ with eigenvalues $+1,-1$ as
\begin{align}
\begin{gathered}
 u_k=\frac1{\sqrt2}\begin{pmatrix}1\\e^{ik}\end{pmatrix},\qquad
 v_k=\frac1{\sqrt2}\begin{pmatrix}e^{-ik}\\-1\end{pmatrix},\\
 V_k=(u_k,v_k),
 \end{gathered}
 \label{eq:summary-nononsite-chiral-frame}
\end{align}
respectively. The matrix $V_k$ is unitary, and
\begin{align}
 \widetilde H_k:=V_k^\dag H_kV_k
 &=m\begin{pmatrix}0&e^{-ik}\\e^{ik}&0\end{pmatrix},\notag\\
 \widetilde\G_k:=V_k^\dag\G_kV_k
 &=\begin{pmatrix}1&0\\0&-1\end{pmatrix}.
 \label{eq:summary-nononsite-chiral-transformed}
\end{align}
The lower-left block of $\widetilde H_k$ is $D_k=me^{ik}$. After flattening, $q_k=e^{ik}$ gives $W_1[q]=1$.

The boundary condition corresponding to this winding number truncates the Wannier basis defined by $u_k,v_k$, rather than the original spin basis. To make the distinction explicit, define
\begin{align}
\ket{j,A}
 &:=\int_{-\pi}^{\pi}\frac{dk}{2\pi}
 \bigl(\ket{k,\ua},\ket{k,\da}\bigr)u_ke^{-ikj}\notag\\
 &=\frac{\ket{j,\ua}+\ket{j-1,\da}}{\sqrt2},\notag\\
 \ket{j,B}
 &:=\int_{-\pi}^{\pi}\frac{dk}{2\pi}
 \bigl(\ket{k,\ua},\ket{k,\da}\bigr)v_ke^{-ikj}\notag\\
 &=\frac{\ket{j+1,\ua}-\ket{j,\da}}{\sqrt2}.
 \label{eq:summary-nononsite-chiral-wannier}
\end{align}
The Hamiltonian acts as $\hat H\ket{j,A}=m\ket{j-1,B}$ and $\hat H\ket{j,B}=m\ket{j+1,A}$. Retaining the Wannier orbitals with $j=1,\ldots,L$ leaves the zero modes $\ket{1,A}$ and $\ket{L,B}$. Truncation in the Wannier basis preserves the unitarity of $\G$ and its anticommutation with $H$. The two zero modes have chiral eigenvalues $+1,-1$, respectively. The chiral index---the number of $+1$ zero modes minus the number of $-1$ zero modes---is therefore $W_1[q]=1$ at the left boundary and $-1$ at the right boundary.

The basis that trivializes $\G_k$ is not unique. For example, we may instead choose
\begin{align}
\begin{gathered}
 u'_k=u_ke^{-ik}
 =\frac1{\sqrt2}\begin{pmatrix}e^{-ik}\\1\end{pmatrix},\\
 v'_k=v_k,\qquad V'_k=(u'_k,v'_k).
 \end{gathered}
 \label{eq:summary-nononsite-chiral-shifted-frame}
\end{align}
In this basis,
\begin{align}
 \widetilde H'_k:=V_k'^\dag H_kV'_k
 &=m\begin{pmatrix}0&1\\1&0\end{pmatrix},\notag\\
 \widetilde\G'_k:=V_k'^\dag\G_kV'_k
 &=\begin{pmatrix}1&0\\0&-1\end{pmatrix},
 \label{eq:summary-nononsite-chiral-shifted-model}
\end{align}
so flattening gives $q'_k=1$ and hence $W_1[q']=0$. The relation $u'_k=u_ke^{-ik}$ between the two bases corresponds to a lattice translation of the Wannier orbitals,
\begin{align}
 \ket{j,A}'=\ket{j+1,A},\qquad
 \ket{j,B}'=\ket{j,B}.
 \label{eq:summary-nononsite-chiral-wannier-shift}
\end{align}
Since $H$ couples $\ket{j,A}'$ to $\ket{j,B}'$ for each $j$, restricting this Wannier basis to $j=1,\ldots,L$ produces no zero modes.

The two descriptions differ only in the choice of basis; the infinite-system Hamiltonian and chiral operator are unchanged. Truncating the two Wannier bases at the same unit-cell labels nevertheless retains different subspaces and can therefore produce different boundary spectra. Describing the same physical boundary in the two bases requires transforming the truncation projector as well.

An integer winding number therefore requires a choice of basis that trivializes $\G$, and its relation to boundary states refers to the open boundary condition defined in that basis. Any discussion of a Hamiltonian together with a momentum-dependent chiral operator must account for the freedom to choose the trivialization frame.

\subsection{General chiral operators and trivialization frames}
\label{sec:summary-chiral-general-frame}

We now formulate the freedom to choose eigenframes of $\G$ while keeping the Bloch basis associated with the original real-space orbitals fixed.
Let $H_\bk,\G_\bk$ be smooth and periodic, with 
\begin{align}
    \G_\bk^\dag=\G_\bk, \quad \G_\bk^2=I, \quad \{\G_\bk,H_\bk\}=0. 
    \label{eq:chiral_sym}
\end{align}
We fix $\G$ and denote its eigenbundles with eigenvalues $\pm1$ by $E_\pm$.
A gapped $H$ defines a map $H|_{E_+}:E_+\to E_-$. The map and its homotopy class under gap-preserving deformations are independent of local basis choices.
To assign an integer to each homotopy class, however, one must choose a reference class to represent zero~\cite{Thiang}.
We consider winding numbers in dimensions $d=1,3$ and first establish the existence of a trivialization frame.

\begin{lem}
\label{lem:summary-chiral-global-frame}
On a torus $T^d$ of arbitrary dimension, the existence of a gapped $H$ satisfying the conditions in \eqref{eq:chiral_sym} implies that $\tr\G_\bk=0$, and $E_\pm$ are isomorphic complex vector bundles of the same rank $N$.
All their positive-degree Chern classes vanish. In particular, for $d\le4$, both bundles are trivial and admit smooth, periodic global orthonormal frames.
\end{lem}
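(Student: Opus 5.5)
The argument first shows that $H$ itself gives an isomorphism $E_+\cong E_-$. It then uses this to show the Chern character of the virtual bundle $E_+\oplus E_-$ is trivial in positive degrees, and finally invokes low-dimensional classification of bundles on $T^d$.

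\emph{Step 1: an isomorphism $E_+\cong E_-$ and $\tr\G_\bk=0$.} Since $\{\G_\bk,H_\bk\}=0$, $H_\bk$ maps the $+1$ eigenspace of $\G_\bk$ into the $-1$ eigenspace, and vice versa. Indeed, if $\G v=v$ then $\G Hv=-H\G v=-Hv$. Because $H_\bk$ is invertible by the gap condition, the restriction $H_\bk|_{E_+}:E_+\to E_-$ is injective, and so is $H_\bk|_{E_-}:E_-\to E_+$. Hence the two fibers have equal dimension $N$, and $\tr\G_\bk=\dim E_+-\dim E_-=0$. Continuity of $H_\bk$ and $\G_\bk$ makes $H|_{E_+}$ a continuous bundle map that is a fiberwise isomorphism, so $E_+\cong E_-$ as complex vector bundles over $T^d$. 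If a unitary isomorphism is preferred, I will use the polar part $H(H^2)^{-1/2}$; it also anticommutes with $\G$, because $H^2$ commutes with $\G$.

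\emph{Step 2: vanishing of positive-degree Chern classes.} The bundles $E_+$ and $E_-$ are complementary subbundles of the trivial bundle $T^d\times\C^{2N}$, since $\G$ is Hermitian and squares to $I$. Thus $E_+\oplus E_-\cong\underline{\C}^{2N}$, and the total Chern class satisfies $c(E_+)c(E_-)=1$. Combined with $E_+\cong E_-$ this gives $c(E_+)^2=1$ in $H^{\rm even}(T^d;\Z)$. Write $c(E_+)=1+x$ with $x$ of positive degree; then $2x+x^2=0$. Looking at the lowest-degree nonzero component $x_j$ of $x$, the relation forces $2x_j=0$. Since $H^*(T^d;\Z)$ is torsion free, $x_j=0$, and by induction on degree $x=0$. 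So all positive-degree Chern classes of $E_\pm$ vanish, as do their rational Chern characters.

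\emph{Step 3: triviality for $d\le4$.} On a CW complex of dimension at most $4$, a complex vector bundle is determined up to isomorphism by its rank together with $c_1$ and $c_2$. For the stable part this is the statement that $K(T^d)$ is torsion free for $d\le4$ and is detected by the Chern character. For the unstable statement I will use obstruction theory: the classifying map to $BU(N)$ lifts through the $5$-skeleton, and $c_1,c_2$ are the primary obstructions. The case $N=1$ is handled by $c_1$ alone. So $c_1=c_2=0$ implies that $E_\pm$ is trivial, and a global continuous frame exists. Smoothing the frame and applying Gram--Schmidt, using the smoothness of the projector $(1\pm\G_\bk)/2$, yields a smooth, periodic orthonormal frame.

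The main obstacle is making Step 3 precise in the unstable range. For low rank one must check that no further obstruction arises in dimension $4$ beyond $c_2$. The plan is to cite the standard fact that $BU(N)\to K(\Z,2)\times K(\Z,4)$ is $5$-connected for $N\ge2$, which settles this case.
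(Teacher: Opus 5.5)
Your proof is correct and follows the paper's route. The isomorphism $E_+\cong E_-$ is induced by the (flattened) Hamiltonian, $c(E_+)^2=1$ then gives vanishing Chern classes because $H^*(T^d;\Z)$ is torsion free, and triviality follows for $d\le4$. The only difference is the last step: you justify it yourself through the $5$-connectivity of $(c_1,c_2):BU(N)\to K(\Z,2)\times K(\Z,4)$ (the nonzero $\pi_5(BU(2))$ is irrelevant in dimension $\le4$), whereas the paper cites Monaco--Roussign\'e.
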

\begin{proof}
Define the flattened Hamiltonian by
\begin{align}
 Q_\bk:=H_\bk(H_\bk^2)^{-1/2},\qquad Q_\bk^2=I.
 \label{eq:summary-chiral-flattening}
\end{align}
The flattened Hamiltonian anticommutes with $\G$ and therefore exchanges the $+1$ and $-1$ eigenspaces of $\G$. The eigenspaces consequently have the same dimension, so $\tr\G_\bk=0$.
If ${\cal U}_{\alpha,\bk}$ is a local orthonormal frame of $E_+$, then $Q_\bk{\cal U}_{\alpha,\bk}$ is a local orthonormal frame of $E_-$.
On overlaps,
\begin{align*}
 {\cal U}_{\beta,\bk}={\cal U}_{\alpha,\bk}g_{\alpha\beta,\bk}
 \quad\Longrightarrow\quad
 Q_\bk{\cal U}_{\beta,\bk}
 =(Q_\bk{\cal U}_{\alpha,\bk})g_{\alpha\beta,\bk},
\end{align*}
so the transition functions $g_{\alpha\beta,\bk}$ are identical and $E_+\simeq E_-$.
Since the full single-particle bundle $E_+\oplus E_-$ is trivial, the total Chern classes satisfy $c(E_+)^2=c(E_+)c(E_-)=1$.
The groups $H^{2j}(T^d;\Z)$ are torsion free. Comparing the positive-degree components in increasing degree therefore gives $c_j(E_\pm)=0$ for $j\ge1$.
Finally, for $d\le4$, the vanishing Chern classes imply that both bundles admit smooth, periodic global orthonormal frames~\cite{MonacoRoussigne2023}\footnote{For $d \geq 5$, a complex vector bundle with vanishing Chern classes need not be trivial.
In five dimensions, $\pi_4[U(2)]=\Z_2$ allows a nontrivial rank-two complex vector bundle whose Chern classes vanish.}
.
\end{proof}

The contrapositive of the lemma gives a criterion for gaplessness.
\begin{cor}
If either eigenbundle $E_+$ or $E_-$ of $\G$ has a nonzero Chern number, every Hamiltonian $H$ that anticommutes with $\G$ must be gapless.
\end{cor}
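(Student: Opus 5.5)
The corollary is the contrapositive of Lemma~\ref{lem:summary-chiral-global-frame}, so the plan is to assume a gapped $H$ exists and derive that every Chern number of $E_\pm$ vanishes. Suppose $H_\bk$ anticommutes with $\G_\bk$ and satisfies $0\notin{\rm Spec}\,H_\bk$ on all of $T^d$. Then all hypotheses of the lemma hold: $\G_\bk^\dag=\G_\bk$, $\G_\bk^2=I$, and $\{\G_\bk,H_\bk\}=0$.

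The lemma then gives two facts. First, $E_+\simeq E_-$. Second, and more strongly, $c_j(E_\pm)=0$ for all $j\ge1$. Recall why the second fact holds. The flattened Hamiltonian $Q_\bk$ intertwines $E_+$ and $E_-$, so $c(E_+)=c(E_-)$. Triviality of $E_+\oplus E_-$ then gives $c(E_+)^2=1$. Because $H^{2j}(T^d;\Z)$ is torsion free, induction on degree kills each $c_j$.

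It remains to show that every Chern number is a polynomial in the Chern classes with no constant term, integrated against a class of positive degree. This covers the top-degree ${\rm ch}_m$ on $T^{2m}$ and any Chern number on a $2j$-dimensional subtorus. Concretely, ${\rm ch}_m(E_\pm)$ is a universal polynomial in $c_1,\dots,c_m$ with every monomial of positive degree. Pairing it with any fundamental class $[T^{2m}]$, or with a cycle $[\Sigma]$, therefore gives zero once all $c_j$ vanish.

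Hence, if either $E_+$ or $E_-$ has a nonzero Chern number, no such gapped $H$ can exist. Every $H$ anticommuting with $\G$ must then close the gap, that is, $0\in{\rm Spec}\,H_\bk$ for some $\bk$. The torsion-freeness step is the only nontrivial ingredient, and the lemma already supplies it, so no real obstacle remains.
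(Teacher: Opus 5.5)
Your proposal is correct and matches the paper, which obtains the corollary directly as the contrapositive of Lemma~\ref{lem:summary-chiral-global-frame}: a gapped anticommuting $H$ forces all positive-degree Chern classes of $E_\pm$ to vanish. Your remark that every Chern number is a positive-degree polynomial in the $c_j$, evaluated on a cycle, only spells out a step the paper leaves implicit.
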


Choose global frames ${\cal U}_\pm$ for $E_\pm$ and assemble them into the unitary matrix $V_\bk = ({\cal U}_+,{\cal U}_-)$. In this basis,
\begin{align}
 V_\bk^\dag\G_\bk V_\bk
 &=\begin{pmatrix}I_N&0\\0&-I_N\end{pmatrix},\notag\\
 V_\bk^\dag H_\bk V_\bk
 &=\begin{pmatrix}0&D_\bk^\dag\\D_\bk&0\end{pmatrix},
 \qquad D_\bk\in{\rm GL}_N(\C),
 \label{eq:summary-chiral-block}
\end{align}
where
$D_\bk = ({\cal U}_-)^\dag H_\bk {\cal U}_+$.
Following our winding-number convention, we take the flattened lower-left block to be
\begin{align}
 q_\bk:=D_\bk(D_\bk^\dag D_\bk)^{-1/2}\in U(N).
\end{align}
The matrix $q$ represents the flattened Hamiltonian $Q:E_+\to E_-$ in the chosen frames and defines the winding number $W_d[q]$.

The winding number depends on the frames ${\cal U}_\pm$.
Indeed, a different frame
\begin{align}
V'=({\cal U}_+ a, {\cal U}_- b)
\label{eq:chiral_basis_change}
\end{align}
gives $q'=b^\dag q a$ and hence
\begin{align}
 W_d[q']=W_d[q]+W_d[a]-W_d[b].
 \label{eq:summary-chiral-winding-frame}
\end{align}
The map $Q: E_+ \to E_-$ remains basis independent; the frame change shifts the reference used to assign an integer to its homotopy class.
The $\Z$ classification is therefore relative to a reference Hamiltonian $H_0$ and is characterized by the winding-number difference $W_d[q] - W_d[q_0]$~\cite{Thiang}.

Nevertheless, for a fixed Bloch basis associated with the original real-space orbitals, the following winding-number parity is independent of the eigenframes of $\G$:
\begin{align}
 {\rm WP}_d[H,\G]:=\bigl(W_d[q]+W_d[V]\bigr)\bmod2
 \in\Z_2.
 \label{eq:summary-chiral-winding-parity}
\end{align}
The transformation~\eqref{eq:chiral_basis_change} gives $W_d[V']=W_d[V]+W_d[a]+W_d[b]$. Together with Eq.~\eqref{eq:summary-chiral-winding-frame}, this identity shows that $W_d[q]+W_d[V]$ changes by $2W_d[a]$.
Thus, ${\rm WP}_d[H,\Gamma]$ is a bulk $\Z_2$ invariant independent of the choice of boundary.

The parity invariant obstructs a continuous deformation of both $H$ and $\G$ to momentum-independent operators that preserves the gap and their anticommutation relation. The allowed deformations include changes in $\G$. To establish the obstruction, choose ${\cal U}_-=Q{\cal U}_+$ and set
\begin{align}
 V=({\cal U}_+,Q{\cal U}_+).
 \label{eq:summary-chiral-joint-frame}
\end{align}
In this basis,
\begin{align}
 V^\dag\G V&=\begin{pmatrix}I_N&0\\0&-I_N\end{pmatrix},\notag\\
 V^\dag QV&=\begin{pmatrix}0&I_N\\I_N&0\end{pmatrix},
 \label{eq:summary-chiral-joint-trivialization}
\end{align}
so both operators become constant simultaneously. We then have $q=I_N$ and
\begin{align}
 {\rm WP}_d[H,\G]\equiv W_d[V]\pmod2.
 \label{eq:summary-chiral-joint-parity}
\end{align}
Any other basis yielding the same constant pair has the form $V'=V\operatorname{diag}(a,a)$, so $W_d[V']-W_d[V]=2W_d[a]$. Thus, ${\rm WP}_d$ is the parity of the winding number of a basis transformation that makes both operators constant.
If ${\rm WP}_d=1$, the two operators cannot be continuously deformed to become momentum independent simultaneously.

The parity is invariant under continuous deformations and changes of eigenframes {\it within a fixed original orbital basis}. It need not be invariant under a momentum-dependent change of the full basis. In fact, a basis transformation ${\cal W}: T^d \to U(2N)$ gives
\begin{align}
 \begin{split}
    &{\rm WP}_d[{\cal W}^\dag H{\cal W},{\cal W}^\dag \G{\cal W}]\\
    &\quad\equiv {\rm WP}_d[H,\G]+W_d[{\cal W}]\pmod2,
 \end{split}
\end{align}
where $W_d[{\cal W}]$ can in general be any integer.

Finally, express TRS in the basis of Eq.~\eqref{eq:summary-chiral-block}:
\begin{align}
 T^{(V)}_\bk=V_{-\tau\bk}^\dag T_\bk V_\bk^*.
\end{align}
The transformed matrices obey the same algebraic relation~\eqref{eq:TRS_G}. Thus, $T^{(V)}_\bk$ has the block forms in Eqs.~\eqref{eq:summary-cellwise-T-diagonal} and \eqref{eq:summary-cellwise-T-offdiagonal}, with $u_\bk,v_\bk$ now defined in the chosen frame $V_\bk$. The matrices $D_\bk$ and $q_\bk$ obey the same conditions~\eqref{eq:summary-chiral-symmetries} as in the cellwise case.

In the class-by-class calculations below, we use the transformed representation~\eqref{eq:summary-chiral-symmetries} to relate the winding number of $q_\bk$ to the topological invariants of the symmetry matrices $u_\bk,v_\bk$.
We then obtain a relation between ${\rm WP}_d[H,\Gamma]$ in the original basis and the topological invariant of the time-reversal operator $T_\bk$.
Neither flattening nor a momentum-dependent basis transformation $V$ generally preserves cellwise or finite-range locality. We therefore evaluate locality in the original unit-cell and real-space orbital basis.

\subsection{Parity constraints in the original orbital basis}
\label{sec:chiral-parity-results}

For the constraints in this subsection, we assume that a finite-dimensional gapped Hamiltonian compatible with the specified symmetries exists. Table~\ref{tab:result-wp-full-T} gives the resulting necessary conditions.

\begin{table*}[tp]
\centering
\caption{Parity constraints for chiral classes with momentum-dependent $\G$, expressed in terms of ${\rm WP}_d[H,\G]$ and the full time-reversal matrix $T$ in the original orbital basis.
Momentum subscripts are suppressed as in Table~\ref{tab:result}.
The symmetry column specifies the remaining algebraic relations.
The locality column concerns realizations of $\mathrm{WP}_d=1$ by the original symmetry operators $T$ and $\G$, using the fixed cell and orbital basis. 
Section numbers link to the underlying formulas and examples corresponding to each cellwise case.}   
\label{tab:result-wp-full-T}
\small
\setlength{\tabcolsep}{4pt}
\renewcommand{\arraystretch}{1.35}
\begin{tabular*}{\textwidth}{@{}ccc@{\hspace{12pt}\extracolsep{\fill}}llcc@{}}
\hline
$d$ & AZ & $d_\tau$ & Constraint ($\bmod2$) & Symmetry & Locality for ${\rm WP}=1$ & Sec. \\
\hline
1 & CII & 0 & ${\rm WP}_1[H,\G]\equiv\nu[T;\gamma_{0\pi}]$
& $\begin{aligned}[t]T^\top&=-T_{-\tau},\\T\G^*&=\G_{-\tau}T\end{aligned}$
& finite range & \ref{sec:1d-CII} \\[2pt]
1 & DIII & 1 & ${\rm WP}_1[H,\G]\equiv\tfrac12 W_1[T]$
& $\begin{aligned}[t]T^\top&=-T,\\T\G^*&=-\G T\end{aligned}$
& finite range & \ref{sec:1d-DIII} \\[2pt]
\hline
3 & CI & 0 & ${\rm WP}_3[H,\G]\equiv-\tfrac12 W_3[T]$
& $\begin{aligned}[t]T^\top&=T_{-\tau},\\T\G^*&=-\G_{-\tau}T\end{aligned}$
& quasilocal only & \ref{sec:3d-CI} \\[2pt]
3 & CII & 1 & ${\rm WP}_3[H,\G]\equiv\mu[T]$
& $\begin{aligned}[t]T^\top&=-T_{-\tau},\\T\G^*&=\G_{-\tau}T\end{aligned}$
& quasilocal only & \ref{sec:3d-CII} \\[2pt]
3 & DIII & 2 & ${\rm WP}_3[H,\G]\equiv-\tfrac12 W_3[T]$
& $\begin{aligned}[t]T^\top&=-T_{-\tau},\\T\G^*&=-\G_{-\tau}T\end{aligned}$
& quasilocal only & \ref{sec:3d-DIII} \\[2pt]
3 & BDI & 3 & ${\rm WP}_3[H,\G]\equiv\zeta[T;\mathfrak s]$
& $\begin{aligned}[t]T^\top&=T,\\T\G^*&=\G T\end{aligned}$
& finite range & \ref{sec:3d-BDI} \\[2pt]
\hline
\end{tabular*}
\end{table*}

Table~\ref{tab:result-wp-full-T} expresses the six chiral cases in the Bloch basis associated with the original real-space orbitals. Diagonalizing $\G_\bk$ with the frame $V_\bk$ of Sec.~\ref{sec:summary-chiral-general-frame} and setting
\begin{align}
 T^{(V)}_\bk:=V_{-\tau\bk}^\dag T_\bk V_\bk^*
 \label{eq:summary-variable-chiral-T}
\end{align}
allows us to use the expressions involving $u,v$ in Table~\ref{tab:result}. The direct-sum and basis-transformation laws in Sec.~\ref{sec:summary-symmetry-invariants} then express the results in terms of the original $T$.

In CII and BDI, $T^{(V)}=\operatorname{diag}(v,u)$. For example, in one-dimensional CII, the direct-sum and basis-transformation laws of $\nu$, together with the constraint in Table~\ref{tab:result}, give
\begin{align}
 \nu[T;\gamma_{0\pi}]+W_1[V]
 &\equiv\nu[T^{(V)};\gamma_{0\pi}]\notag\\
 &\equiv\nu[v;\gamma_{0\pi}]+\nu[u;\gamma_{0\pi}]\notag\\
 &\equiv W_1[q]\pmod2.
\end{align}
The definition in Eq.~\eqref{eq:summary-chiral-winding-parity} therefore yields ${\rm WP}_1[H,\G]\equiv\nu[T;\gamma_{0\pi}]$. The direct-sum and basis-transformation laws of $\mu$ and $\zeta$ give the three-dimensional CII and BDI results in the same way. Since the BDI equality holds for every $\mathfrak s$, the existence of a gapped $H$ compatible with both $T,\G$ implies that $\zeta[T;\mathfrak s]$, like ${\rm WP}_3[H,\G]$, is independent of the spin structure.

In DIII and CI, $T^{(V)}$ has an off-diagonal block form. In one-dimensional DIII and three-dimensional CI and DIII, respectively,
\begin{align}
 W_1[T^{(V)}]&=W_1[T]-2W_1[V]=2W_1[u],\notag\\
 W_3[T^{(V)}]&=W_3[T]+2W_3[V]=-2W_3[u].
\end{align}
Combining these identities with $W_d[q]\equiv W_d[u]\pmod2$ in Table~\ref{tab:result} gives the constraints in Table~\ref{tab:result-wp-full-T}. The same identities ensure that the half-winding numbers $\tfrac12W_d[T]$ appearing in the table are integers.

The locality entries in Table~\ref{tab:result-wp-full-T} refer to the original matrices $T_\bk,\G_\bk$. No finite-range assumption is made for the diagonalizing frame $V_\bk$. In three-dimensional CI and DIII, applying $W_3[T]=0$ for finite-range unitary matrices~\cite{Read-2017} directly excludes ${\rm WP}_3=1$; in three-dimensional CII, the same conclusion follows from $\mu[T]=0$ in Theorem~\ref{thm:CII-finite-range}. The finite-range odd-invariant models in one-dimensional CII and DIII and three-dimensional BDI remain available as a subfamily with cellwise $\G$. Thus, allowing momentum-dependent $\G$ does not change the locality distinctions in the table.

\section{\texorpdfstring{\NoCaseChange{One dimension}}{One dimension}}
\label{sec:1d}
In one dimension, the conventional $2\Z$ classification occurs for class CII with internal symmetry ($d_\tau=0$) and class DIII with an antiunitary reflection ($d_\tau=1$).
Following Sec.~\ref{sec:summary-chiral-general-frame}, choose a trivialization of the chiral operator and let $q_\bk$ denote the unitary matrix obtained by flattening the off-diagonal block of the Hamiltonian.

\subsection{\texorpdfstring{$(d,d_\tau)=(1,0)$}{(d,d\_tau)=(1,0)}, class CII}
\label{sec:1d-CII}
The symmetry conditions are
\begin{align}
    u_k q_k^* v_k^\dag = q_{-k},\qquad
    u_k^\top = -u_{-k},\qquad
    v_k^\top = -v_{-k}.
\end{align}
The $\Z_2$ invariants $\nu[u;\gamma_{0\pi}],\nu[v;\gamma_{0\pi}]$ of the symmetry matrices $u_k,v_k$, defined in Eq.~\eqref{eq:summary-pfaffian}, determine the winding-number parity through
\begin{align}
    W_1[q] \equiv \nu[u;\gamma_{0\pi}]-\nu[v;\gamma_{0\pi}] \pmod 2.
    \label{eq:CII-winding-parity}
\end{align}

\begin{proof}
Applying the symmetry condition gives
\begin{align}
    \int_{-\pi}^0 d\log\det q_k
    &=-\int_0^\pi d\log\det q_{-k} \nonumber\\
    &=\int_0^\pi d\log\det q_k
      -\int_0^\pi d\log\det(u_kv_k^\dag).
\end{align}
Hence,
\begin{align}
    (-1)^{W_1[q]}
    =\frac{\det q_\pi}{\det q_0}
      \exp\left[-\frac{1}{2}\int_0^\pi
      d\log\det(u_kv_k^\dag)\right].
\end{align}
At the fixed points $k=0,\pi$, we have $u_k=q_kv_kq_k^\top$, so
\begin{align}
    \pf[u_k]=\det q_k\,\pf[v_k].
\end{align}
Substituting the Pfaffian identity and the definition of $\nu[\,\cdot\,;\gamma_{0\pi}]$ yields
\begin{align}
    (-1)^{W_1[q]-\nu[u;\gamma_{0\pi}]+\nu[v;\gamma_{0\pi}]}
    =\frac{\det q_\pi}{\det q_0}
      \frac{\pf[u_0]}{\pf[u_\pi]}
      \frac{\pf[v_\pi]}{\pf[v_0]}
    =1.
\end{align}
This proves Eq.~\eqref{eq:CII-winding-parity}.
\end{proof}

\subsubsection{Example and SPT--LSM theorem}
\label{sec:1d-CII-SPT-LSM}
An example with an odd winding number is
\begin{align}
    u_k=i \sigma_y e^{ik\sigma_z},\qquad
    v_k=i \sigma_y,\qquad
    q_k=e^{-ik(1-\sigma_z)/2}.
    \label{eq:1DCII_model}
\end{align}
For this choice, $\nu[u;\gamma_{0\pi}]=1$, $\nu[v;\gamma_{0\pi}]=0$, and $W_1[q]=-1$. Every entry of $u_k,v_k,q_k$ is a finite Laurent polynomial in $e^{\pm ik}$, so both the Hamiltonian and the symmetry action admit finite-range realizations.

Equation~\eqref{eq:CII-winding-parity} can be interpreted as an SPT--LSM-type constraint: the difference between the symmetry actions in the two chiral subspaces requires an odd winding number.
In Fig.~\ref{fig:1d-SPT-LSM}[a], $A$ and $B$ are complex-fermion orbitals with $\G=+1,-1$, respectively. The action $v_k=i\sigma_y$ pairs the orbitals $A_{1,j},A_{2,j}$ within the same unit cell into a Kramers pair. In contrast,
\begin{align}
 u_k=\begin{pmatrix}0&e^{-ik}\\-e^{ik}&0\end{pmatrix}
 \label{eq:CII-realspace-pairing}
\end{align}
pairs the orbitals $B_{1,j},B_{2,j+1}$ in adjacent cells into a Kramers pair. With the unit cells fixed as in the figure, time reversal in the $B$ subspace is therefore not cellwise.
Time reversal exchanges the intracell bond $A_{1,j}$--$B_{1,j}$ with the intercell bond $A_{2,j}$--$B_{2,j+1}$ shown in the figure, so intracell couplings alone cannot open a gap. More explicitly, if $q$ is momentum independent, the symmetry conditions at $k=0,\pi$ force $q=0$ because $u_\pi=-u_0$.

The model $q_k=\operatorname{diag}(1,e^{-ik})$ combines an intracell dimer with the dimer limit of a nontrivial Su--Schrieffer--Heeger (SSH) model~\cite{SuSchriefferHeeger1979} in a direct sum. Keeping only the cells $j=1,\ldots,L$ shown in the figure leaves zero modes $B_{2,1}$ at the left end and $A_{2,L}$ at the right end. The cut preserves the cellwise chiral condition but breaks time reversal in the open system because the symmetry action crosses the cut. These SSH end states refer to the specified termination; the bulk parity constraint is a separate statement.

\begin{figure*}[tb]
\centering
\includegraphics[width=0.9\textwidth]{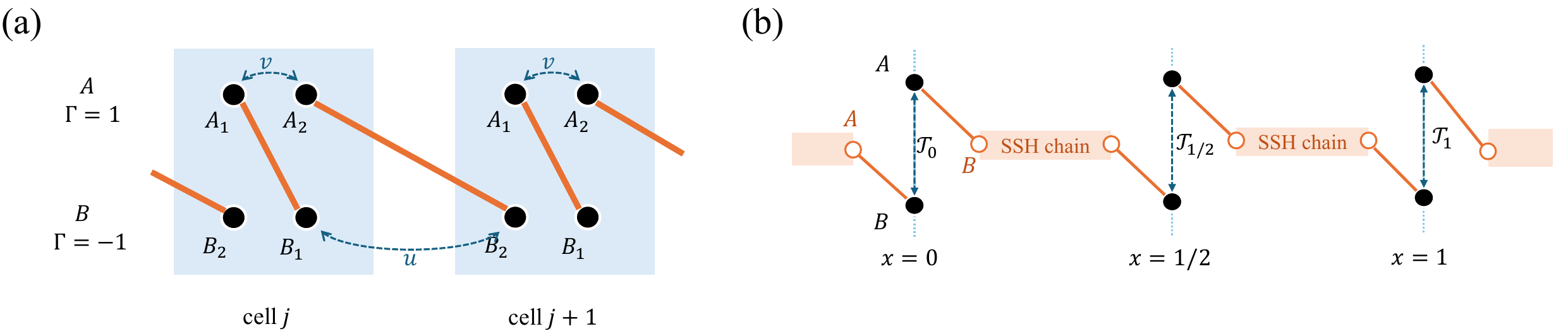}
\caption{Real-space interpretation of the one-dimensional SPT--LSM constraints in terms of complex fermions.
$A$ and $B$ denote the $\Gamma=+1$ and $\Gamma=-1$ subspaces, respectively.
Dashed arrows indicate Kramers pairing, and orange bonds indicate hopping.
[a] Class CII. Blue shading marks unit cells. Time reversal pairs $A_{1,j}$ with $A_{2,j}$ through $v$, and $B_{1,j}$ with $B_{2,j+1}$ through $u$.
The intracell and intercell dimers realize $q_k=\operatorname{diag}(1,e^{-ik})$, the direct sum of a trivial dimerized chain and a topological SSH chain, with $W_1[q]=-1$.
[b] Class DIII. At each mirror center $x=s$, the $A,B$ orbitals shown as filled circles form a Kramers pair under the antiunitary mirror operation $\mathcal T_s$.
The antiunitary mirror symmetry and the chiral condition pin an isolated pair to zero energy.
SSH chains on the intervening intervals supply end states (open circles) that form a second Kramers pair at each center.
The orange bonds couple states in opposite chiral subspaces, gapping all four states while preserving the symmetries, as in Eq.~\eqref{eq:DIII-mirror-SSH-mass}.
Orange shading marks the gapped interiors of the chains. }

\label{fig:1d-SPT-LSM}
\end{figure*}

\subsection{\texorpdfstring{$(d,d_\tau)=(1,1)$}{(d,d\_tau)=(1,1)}, class DIII}
\label{sec:1d-DIII}
The symmetry condition is
\begin{align}
    (q_ku_k)^\top=-q_ku_k.
\end{align}
The Pfaffian $\pf[q_ku_k]$ is therefore a nonzero complex function on the circle, with integer winding number. Using $\pf[q_ku_k]^2=\det q_k\det u_k$, we obtain
\begin{align}
    W_1[q]+W_1[u]
    =2\frac{1}{2\pi i}\oint d\log\pf[q_ku_k]
    \in 2\Z.
\end{align}
Hence,
\begin{align}
    W_1[q]\equiv W_1[u]\pmod 2.
    \label{eq:DIII-winding-parity}
\end{align}

An example with an odd winding number is
\begin{align}
    q_k=\begin{pmatrix}e^{ik}&0\\0&1\end{pmatrix},\qquad
    u_k=q_k^\dag i\sigma_y = \begin{pmatrix}0&e^{-ik}\\-1&0\end{pmatrix}.
    \label{eq:1DDIII_model}
\end{align}
The matrices satisfy $q_ku_k=i\sigma_y$, with $W_1[q]=1$ and $W_1[u]=-1$.
The symmetry action in this example also admits a finite-range realization.

\subsubsection{Real-space AHSS and an SPT--LSM-type constraint}
\label{sec:1d-DIII-SPT-LSM}
The odd-parity constraint in Eq.~\eqref{eq:DIII-winding-parity} can be understood through the real-space AHSS as the cancellation of zero modes at reflection centers by SSH end states~\cite{Shiozaki-Xiong-Gomi}. We use the number-conserving realization of Appendix~\ref{sec:summary-manybody}, taking $A,B$ in Fig.~\ref{fig:1d-SPT-LSM}[b] to be complex-fermion orbitals with $\G=+1,-1$.
With lattice translations included, the reflection centers in a unit cell lie at $x=0,1/2$. Placing the $A_2,B_1$ orbitals of Eq.~\eqref{eq:1DDIII_model} at $x=j$ and the $A_1,B_2$ orbitals at $x=j-1/2$ gives a Kramers pair of $A,B$ orbitals at each center. Reflection then maps each orbital to its geometric image. The momentum dependence of the symmetry matrix records the change in unit cell under reflection; it does not make the geometric action nonlocal.
Write the annihilation operators for the two orbitals at a center $s$ as $\hat a_{s,A},\hat a_{s,B}$. With an appropriate choice of phases, the antiunitary reflection that fixes the center acts as
\begin{align}
 \widehat{\mathcal T}_s\hat a_{s,A}\widehat{\mathcal T}_s^{-1}
 &=\hat a_{s,B},\qquad
 \widehat{\mathcal T}_s\hat a_{s,B}\widehat{\mathcal T}_s^{-1}
 =-\hat a_{s,A}.
 \label{eq:DIII-mirror-complex-pair}
\end{align}
For one Kramers pair, TRS restricts the single-particle Hamiltonian to $h_s=\epsilon_s I_2$, and the chiral condition then forces $\epsilon_s=0$. A single pair therefore cannot be gapped at zero energy while preserving both conditions.

Place a nontrivial SSH chain in $0<x<1/2$ and its reflection and translation images in the remaining intervals. Since reflection does not fix a generic point, a class-AIII chain with chiral symmetry suffices within each interval. At each center, the end states of the chains on the left and right form a second Kramers pair with opposite $\G$ eigenvalues. Denote their annihilation operators by $\hat b_{s,A},\hat b_{s,B}$ and let them transform as in Eq.~\eqref{eq:DIII-mirror-complex-pair}. The coupling
\begin{align}
 \widehat H_s
 &=m\bigl(\hat a_{s,A}^\dag\hat b_{s,B}
          -\hat a_{s,B}^\dag\hat b_{s,A}+{\rm h.c.}\bigr),
 \qquad m\in\R\setminus\{0\}
 \label{eq:DIII-mirror-SSH-mass}
\end{align}
preserves both the antiunitary reflection and the chiral condition. Its single-particle eigenvalues are $\pm|m|$, each doubly degenerate, so all four zero modes are gapped. The orange bonds in the figure represent these couplings.
For $N$ Kramers pairs, the off-diagonal block in the chiral basis is an $N\times N$ antisymmetric matrix. An odd number of pairs must therefore leave zero modes, while an even number can be gapped by taking direct sums of the coupling above. The obstruction at each center is thus the parity of the number of Kramers pairs.

Counting the end states yields the first differential of the AHSS. Let $n$ be the class-AIII winding number on a 1-cell; the SSH chain generates this classification. The chains and their reflection images together leave $n\bmod2$ Kramers pairs at each center, giving
\begin{align}
 d^1_{1,-1}:E^1_{1,-1}\cong\Z
 &\longrightarrow E^1_{0,-1}\cong\Z_2\oplus\Z_2,\notag\\
 n&\longmapsto(n\bmod2,n\bmod2).
 \label{eq:DIII-realspace-AHSS}
\end{align}
The two components on the right describe the obstructions at $x=0,1/2$. With no independent zero modes at either center, the gluing condition is $d^1_{1,-1}(n)=0$, so $n$ must be even. If one pair is supplied at each center, as in the figure, cancellation by the chain end states instead requires $d^1_{1,-1}(n)=(1,1)$, forcing $n$ to be odd.
Requiring nontrivial SSH chains to open a gap for the specified local degrees of freedom and symmetry is an SPT--LSM-type constraint.

\section{\texorpdfstring{\NoCaseChange{Two dimensions}}{Two dimensions}}
\label{sec:2d}
The relevant symmetry classes are C ($d_\tau=0$), AII ($d_\tau=1$), and D ($d_\tau=2$).
We write the momentum as $\bk=(k_x,k_y)\in T^2$.
Denote the occupied projector by $P$ and the occupied bundle by $E=\operatorname{im}P$. In a local orthonormal frame $\Phi$, we have $P=\Phi\Phi^\dag$, the Berry connection is $A=\Phi^\dag d\Phi$, and the curvature is $F(A)=dA+A^2$. In the calculations below, $\Tr$ denotes the trace over all bands, whereas $\tr$ denotes the trace of a matrix expressed in an occupied or unoccupied frame.
The projector and frame expressions for the curvature trace are related by $\tr F(A)=\Tr[P\,dP\,dP]$.

\subsection{\texorpdfstring{$(d,d_\tau)=(2,0)$}{(d,d\_tau)=(2,0)}, class C}
\label{sec:2d-C}
Consider the symmetry
\begin{align}
    C_\bk H_\bk^* C_\bk^\dag = -H_{-\bk},\qquad
    C_\bk^\top = -C_{-\bk}.
    \label{eq:classC-symmetry}
\end{align}
Here, we allow $C_\bk$ to depend on momentum.
For a Hamiltonian gapped at zero energy, denote the occupied and unoccupied projectors by $P_\bk^{(-)},P_\bk^{(+)}=1-P_\bk^{(-)}$, respectively. The symmetry implies
\begin{align}
    P_{-\bk}^{(-)}=C_\bk P_\bk^{(+)*}C_\bk^\dag.
\end{align}
Throughout this subsection, the superscripts $(-),(+)$ distinguish occupied and unoccupied projectors, frames, and connections.

Take the half Brillouin zone $D=[-\pi,\pi]\times[0,\pi]$ and identify $k_x=\pm\pi$.
Both $\oint_{k_y=0}$ and $\oint_{k_y=\pi}$ are taken in the positive $k_x$ direction, so $\int_{\partial D}=\oint_{k_y=0}-\oint_{k_y=\pi}$.
Denote the pullback of differential forms under $-\tau\bk=-\bk$ by $(-\tau)^*$, and define the $\C$-valued 1-form
\begin{align}
    B_\bk^{(\pm)}:=\Tr[P_\bk^{(\pm)*}C_\bk^\dag dC_\bk].
\end{align}
Using PHS together with $P^{(+)}=1-P^{(-)}$ gives
\begin{align}
 \begin{aligned}
  &(-\tau)^*\Tr[P^{(-)}dP^{(-)}dP^{(-)}]\\
  &\qquad=\Tr[P^{(-)}dP^{(-)}dP^{(-)}]+dB^{(+)}.
 \end{aligned}
\end{align}
The Chern number is consequently
\begin{align}
    {\rm ch}_1[H]
    &=2\frac{i}{2\pi}\int_D\Tr[P^{(-)}dP^{(-)}dP^{(-)}]\notag\\
    &\quad+\frac{i}{2\pi}\int_{\partial D}B^{(+)}\notag\\
    &\equiv\frac{i}{2\pi}
      \left(\oint_{k_y=0}-\oint_{k_y=\pi}\right)\notag\\
    &\qquad\bigl(2\tr A^{(-)}+B^{(+)}\bigr)\pmod 2.
    \label{eq:classC-halfBZ}
\end{align}
Here, $A^{(\pm)}=\Phi^{(\pm)\dag}d\Phi^{(\pm)}$ are the Berry connections of orthonormal Bloch frames.
The occupied bundle on the cylinder $D$ is trivial as a complex bundle, so a frame defined throughout $D$ makes the last equality exact. Independent choices of frame on the two boundary circles change the right-hand side by an even integer. We therefore evaluate the boundary expression $\bmod 2$ below.

Denote the four fixed points by $\G=(0,0)$, $X=(\pi,0)$, $Y=(0,\pi)$, and $M=(\pi,\pi)$. On the circle $k_y=0$, write $k=(k_x,0)$. Choose a smooth periodic occupied frame $\Phi_k^{(-)}$ and define the unoccupied frame by $\Phi_k^{(+)}=-C_{-k}\Phi_{-k}^{(-)*}$. The two frames obey
\begin{align}
 \begin{aligned}
  (-\tau)^*\tr A^{(-)}&=-\tr A^{(+)}+B^{(+)},\\
  (-\tau)^*B^{(+)}&=B^{(-)}.
 \end{aligned}
\end{align}
The pullback identities reduce the circle integral to the independent half path $\G\to X$:
\begin{align}
    &\oint_{k_y=0}\left(\tr A^{(-)}+\frac12B^{(+)}\right)\notag\\
    &\quad=\int_{\G\to X}\left(\tr A^{(-)}+\tr A^{(+)}
      -\frac12d\log\det C\right).
\end{align}
Define the frame for all bands by ${\cal U}_k=(\Phi_k^{(-)},\Phi_k^{(+)})$. Then
\begin{align}
    \exp\int_{\G\to X}(\tr A^{(-)}+\tr A^{(+)})
    =\frac{\det{\cal U}_X}{\det{\cal U}_{\G}}.
\end{align}
At the fixed points $k=0,\pi$, let $N$ denote the number of occupied bands. The symmetry matrix and its Pfaffian can be expressed in terms of the full frame as
\begin{align}
    &C_k={\cal U}_kJ_N{\cal U}_k^\top,\qquad
    J_N=\begin{pmatrix}0&1_N\\-1_N&0\end{pmatrix},\notag\\
    &\pf[C_k]=(\det{\cal U}_k)\pf[J_N].
\end{align}
The constant factor $\pf[J_N]$ cancels in the ratio of endpoint Pfaffians, yielding
\begin{align}
    &\exp\oint_{k_y=0}\left(\tr A^{(-)}+\frac12B^{(+)}\right)\notag\\
    &\quad=\frac{\pf[C_X]}{\pf[C_{\G}]}
      \exp\left[-\frac12\int_{\G\to X}d\log\det C\right]\notag\\
    &\quad=(-1)^{\nu[C;\gamma_{\G X}]}.
\end{align}
Repeating the calculation at $k_y=\pi$ gives $(-1)^{\nu[C;\gamma_{YM}]}$. Substituting the two boundary contributions into Eq.~\eqref{eq:classC-halfBZ} yields
\begin{align}
    {\rm ch}_1[H]\equiv\nu[C;\gamma_{\G X}]-\nu[C;\gamma_{YM}]\pmod 2.
    \label{eq:classC-parity}
\end{align}

\subsubsection{Interpretation as time reversal exchanging the sectors}
\label{sec:classC-sector-example}

To interpret Eq.~\eqref{eq:classC-parity}, consider the flattened Hamiltonian
\begin{align}
    Q_\bk:=H_\bk(H_\bk^2)^{-1/2}.
\end{align}
Since $Q_\bk^2=I$ and $[Q_\bk,H_\bk]=0$, $Q_\bk$ defines a momentum-dependent $\Z_2$ symmetry.
The antiunitary action $C_\bk K$ squares to $-1$ and exchanges the occupied sector $Q=-1$ and the unoccupied sector $Q=+1$.
The occupied and unoccupied sectors have Chern numbers ${\rm ch}_1[H]$ and $-{\rm ch}_1[H]$, respectively.
In the fixed orbital basis, $C_\bk$ is the sewing matrix for the antiunitary action.
The right-hand side of Eq.~\eqref{eq:classC-parity} is the sewing-matrix formula obtained from the difference in time-reversal polarization between $k_y=0,\pi$. Since the Berry curvature of the full system vanishes, this formula gives the Kane--Mele $\Z_2$ invariant of the full bundle with the antiunitary action specified above~\cite{FuKane2006}.
Equation~\eqref{eq:classC-parity} thus relates the Kane--Mele invariant to the parity of the sector Chern number when time reversal exchanges the $\Z_2$ symmetry sectors.

\subsubsection{Locality}
The right-hand side of Eq.~\eqref{eq:classC-parity} is also the two-dimensional class-DIII strong $\Z_2$ invariant of the auxiliary Hamiltonian~\cite{Qi-Hughes-Zhang-2010}
\begin{align}
    \widetilde H_\bk:=\begin{pmatrix}0&C_\bk\\C_\bk^\dag&0\end{pmatrix}.
\end{align}
The auxiliary Hamiltonian satisfies $\widetilde H_\bk^2=1$. Let $\rho_\mu$ denote the Pauli matrices acting on the auxiliary blocks. The operators $i\rho_yK$ and $\rho_xK$ represent TRS squaring to $-1$ and PHS squaring to $+1$, respectively.
If $C_\bk$ is a finite Laurent polynomial, $\widetilde H_\bk$ is an exactly flat finite-range Hamiltonian.
A no-go theorem states that the two-dimensional class-DIII strong invariant is trivial for any translation-symmetric finite-range flat-band model with finitely many orbitals~\cite{Read-2017}.
A nonzero right-hand side of Eq.~\eqref{eq:classC-parity} therefore rules out a finite-range realization of $C_\bk$.

\subsubsection{Example}
\label{sec:2dclassC-model}
A simultaneous construction of $H_\bk$ and $C_\bk$ uses standard spectral flattening~\cite{KobayashiInamuraShiozaki2026}.
Start with a gapped class-D Hamiltonian satisfying
\begin{align}
    \tau_xH_\bk^*\tau_x=-H_{-\bk}.
\end{align}
The matrix $\tau_x$ exchanges the particle and hole blocks.
The flattened Hamiltonian $Q_\bk=H_\bk(H_\bk^2)^{-1/2}$ obeys $[H_\bk,Q_\bk]=0$ and $\tau_xQ_\bk^*\tau_x=-Q_{-\bk}$.
Defining
\begin{align}
    C_\bk:=i\tau_xQ_\bk^*=-iQ_{-\bk}\tau_x
    \label{eq:classC-from-D}
\end{align}
gives a unitary $C_\bk$ satisfying
\begin{align}
    &C_\bk^\top=iQ_\bk\tau_x=-C_{-\bk},\notag\\
    &C_\bk H_\bk^*C_\bk^\dag
    =\tau_xQ_\bk^*H_\bk^*Q_\bk^*\tau_x=-H_{-\bk}.
\end{align}
The original Hamiltonian $H_\bk$ therefore satisfies the class-C conditions in Eq.~\eqref{eq:classC-symmetry}.
With this phase convention,
\begin{align*}
 \tau_xC_\bk^*\tau_x&=C_{-\bk},\\
 g_\bk=C_{-\bk}\tau_x&=-iQ_\bk,\qquad g_\bk^2=-I,\\
 \tau_xg_\bk^*\tau_x&=g_{-\bk},
\end{align*}
so the Nambu compatibility condition in Eq.~\eqref{eq:summary-bdg-trs-compatibility} of Appendix~\ref{sec:summary-bdg-symmetries} holds as well. The phase of $C_\bk$ follows the convention in Eq.~\eqref{eq:oddmodel-CD}; the resulting $g_\bk$ is the non-onsite $\Z_4^F$ action of Ref.~\cite{KobayashiInamuraShiozaki2026}.

If the starting Hamiltonian $H_\bk$ is finite range and gapped, its flattening $Q_\bk$ is real analytic. The resulting $C_\bk$ therefore has exponentially decaying real-space matrix elements. When ${\rm ch}_1[H]$ is odd, $C_\bk$ gives an exponentially decaying symmetry action for which finite range is impossible.

\subsection{\texorpdfstring{$(d,d_\tau)=(2,1)$}{(d,d\_tau)=(2,1)}, class AII}
\label{sec:2d-AII}
Consider an antiunitary symmetry accompanied by the reflection $x \mapsto -x$,
\begin{align}
    T_\bk H_\bk^*T_\bk^\dag=H_{k_x,-k_y},\qquad
    T_\bk^\top=-T_{k_x,-k_y}.
    \label{eq:AII-mirror-symmetry}
\end{align}
This symmetry forces the Chern number to be even, irrespective of the momentum dependence of $T_\bk$.

To prove the even-parity constraint, set $-\tau(k_x,k_y)=(k_x,-k_y)$ and denote the occupied projector by $P_\bk$. The symmetry implies
\begin{align}
    P_{-\tau\bk}=T_\bk P_\bk^*T_\bk^\dag,
\end{align}
and the trace of the curvature transforms as
\begin{align}
    (-\tau)^*\Tr[P_\bk dP_\bk dP_\bk]
    =-\Tr[P_\bk dP_\bk dP_\bk]
      +d\Tr[P_\bk^*T_\bk^\dag dT_\bk].
\end{align}
Here, $(-\tau)^*$ denotes the pullback of differential forms. Choose the half Brillouin zone $D=[-\pi,\pi]\times[0,\pi]$ with $k_x=\pm\pi$ identified. The occupied complex bundle on $D$ is trivial and therefore admits a global orthonormal frame $\Phi_\bk$. Set $A_\bk=\Phi_\bk^\dag d\Phi_\bk$. Combining the orientation reversal under $-\tau$ with Stokes' theorem gives
\begin{align}
    {\rm ch}_1[H]
    &=2\frac{i}{2\pi}\int_D\Tr[P_\bk dP_\bk dP_\bk]\notag\\
    &\quad-\frac{i}{2\pi}\int_D d\Tr[P_\bk^*T_\bk^\dag dT_\bk]
      \nonumber\\
    &=\frac{i}{2\pi}
      \left(\oint_{k_y=0}-\oint_{k_y=\pi}\right)\notag\\
    &\qquad\left(2\tr A_\bk-\Tr[P_\bk^*T_\bk^\dag dT_\bk]\right).
    \label{eq:AII-mirror-boundary}
\end{align}
Both circle integrals are taken in the direction of increasing $k_x$.

On each fixed line $k_y=0,\pi$, the frames $\Phi_\bk$ and $T_\bk\Phi_\bk^*$ span the same occupied space. They are therefore related by a periodic unitary matrix $w_\bk$:
\begin{align}
    T_\bk\Phi_\bk^*=\Phi_\bk w_\bk,\qquad
    w_\bk=\Phi_\bk^\dag T_\bk\Phi_\bk^*,\qquad
    w_\bk^\top=-w_\bk.
\end{align}
Along the fixed lines, the transformation law for the Berry connection gives
\begin{align}
    -\tr A_\bk+\Tr[P_\bk^*T_\bk^\dag dT_\bk]
    &=\tr[(T_\bk\Phi_\bk^*)^\dag d(T_\bk\Phi_\bk^*)]
      \nonumber\\
    &=\tr A_\bk+d\log\det w_\bk.
\end{align}
Hence, each boundary term in Eq.~\eqref{eq:AII-mirror-boundary} is
\begin{align}
    &\frac{i}{2\pi}\oint
    \left(2\tr A_\bk-\Tr[P_\bk^*T_\bk^\dag dT_\bk]\right)\notag\\
    &\quad=\frac{1}{2\pi i}\oint d\log\det w_\bk \nonumber\\
    &\quad=2\frac{1}{2\pi i}\oint d\log\pf[w_\bk]
    \in 2\Z.
\end{align}
Here, we used $\det w_\bk=\pf[w_\bk]^2$. We therefore obtain
\begin{align}
    {\rm ch}_1[H]\equiv0\pmod 2.
    \label{eq:AII-mirror-even}
\end{align}
A gauge transformation between the two frames does not force the boundary integral to vanish. The even-parity constraint instead follows from the even winding number of the determinant of an antisymmetric sewing matrix.

\subsubsection{Relation to the SPT--LSM theorem}
\label{sec:AII-mirror-SPT-LSM}
The even-parity constraint in Eq.~\eqref{eq:AII-mirror-even} is consistent with the real-space AHSS description in terms of topological phases on cells and anomalies on their boundaries~\cite{Shiozaki-Xiong-Gomi}.
For the real-space argument, consider number-conserving free fermions without imposing lattice translations.
The antiunitary reflection $T:(x,y)\mapsto(-x,y)$ acts on the reflection line $x=0$ as time reversal squaring to $-1$.
A single helical pair on the reflection line carries the same anomaly as the edge of a two-dimensional class-AII $\Z_2$ topological insulator~\cite{KaneMele2005Z2}.

Place a Chern insulator with Chern number $n$ in the half-plane $x>0$ and its symmetry image in $x<0$.
Reflection and time reversal each reverse the sign of the Chern number, so both half-planes have the same Chern number $n$.
The chiral edge modes from opposite sides propagate in opposite directions along the reflection line, forming $|n|$ helical pairs.
For this cell decomposition, the first differential of the AHSS is therefore
\begin{align}
 d^1_{2,-2}:E^1_{2,-2}\cong\Z
 &\longrightarrow E^1_{1,-2}\cong\Z_2,\notag\\
 n&\longmapsto n\bmod2.
 \label{eq:AII-mirror-AHSS}
\end{align}
The right-hand side represents the parity of the number of helical pairs induced on the reflection line.
Without an independent anomaly on the reflection line, the condition for symmetry-preserving gluing of the two half-planes is $d^1_{2,-2}(n)=0$, so $n$ is even.

If a single helical pair is supplied on the reflection line, odd $n$ makes the total number of pairs even after the contributions from both half-planes are included.
Coupling counterpropagating modes from different pairs gaps the reflection line while preserving the symmetry (Fig.~\ref{fig:AII-mirror-anomaly-cancellation}[a]).
If these degrees of freedom admit a fully gapped symmetric system, its Chern number must therefore be odd. The requirement of a nontrivial bulk phase to cancel the boundary anomaly is an SPT--LSM-type constraint.

However, a single helical pair cannot be realized as an independent one-dimensional local lattice model with cellwise time-reversal symmetry.
Such a pair can be realized at the boundary of a higher-dimensional system. 
For example, place a two-dimensional class-AII topological insulator on the reflection plane $x=0$ within the three-dimensional half-space $z\leq 0$. 
Its boundary supplies a single helical pair on the reflection line of the surface $z=0$.
Supplying anomalous boundary degrees of freedom in this way takes the construction outside the setting of finite-dimensional Bloch Hamiltonians and symmetry matrices used in this paper.
The conditional odd-parity constraint from anomaly cancellation therefore does not contradict the even-parity constraint in Eq.~\eqref{eq:AII-mirror-even}.

The single-particle obstruction does not by itself exclude many-body realizations with non-onsite $U(1)$ symmetry.
It has been shown that the quantum spin-Hall edge admits a one-dimensional lattice realization with finite-dimensional local Hilbert spaces of fermions and Ising spins. 
In this model, both $U(1)$ and time reversal act non-onsitely~\cite{Metlitski2019QSHBoundary}.
A many-body edge of this type may supply the required reflection-line anomaly and allow a many-body analogue of the odd-Chern phase described above, beyond the single-particle framework.
Whether coupling this edge to the two Chern-insulator regions permits symmetric gapping remains to be established.

\begin{figure*}[t]
\centering
\includegraphics[width=0.85\textwidth]{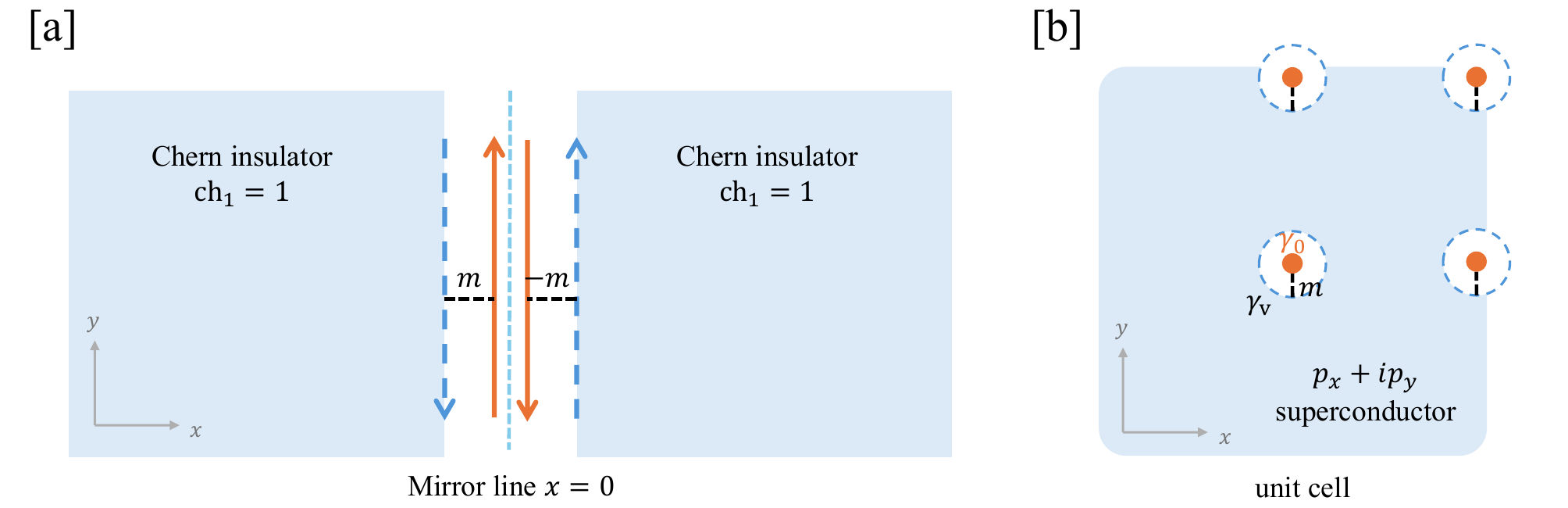}
\caption{Real-space SPT--LSM mechanisms in two dimensions.
[a] Class AII. Orange solid arrows denote the original helical pair on the mirror line $x=0$.
The mirror-related Chern insulators on the two sides each have ${\rm ch}_1=1$ and supply the counterpropagating edge modes shown by blue dashed arrows.
Black dashed segments mark couplings $m$ and $-m$ between counterpropagating modes. Taken together, the couplings preserve the symmetry and gap all four modes.
The transverse separation of the modes is schematic.
[b] Class D in the superconducting description.
A $p_x+ip_y$ superconductor with $\widehat C_2^2=1$ has vortices at the rotation centers.
Blue dashed circles mark these vortices, each binding a Majorana zero mode $\gamma_{\rm v}$.
Orange dots denote additional microscopic Majorana modes $\gamma_0$ at the same centers.
Black dashed bonds represent the symmetry-allowed couplings $im\gamma_0\gamma_{\rm v}$ that gap each pair of zero modes.
The shaded region indicates a unit cell.}
\label{fig:AII-mirror-anomaly-cancellation}
\end{figure*}

\subsection{\texorpdfstring{$(d,d_\tau)=(2,2)$}{(d,d\_tau)=(2,2)}, class D}
\label{sec:2d_classD}
Consider PHS that fixes momentum,
\begin{align}
C_\bk H_\bk^* C_\bk^\dag = -H_\bk,\quad C_\bk^\top=C_\bk.
\label{eq:D_pointwise_symmetry}
\end{align}
The symmetry combines a twofold rotation $\tau(x,y) = (-x,-y)$ with a particle-hole transformation. We allow its matrix $C_\bk$ to depend on momentum.
A gap at zero energy implies equal numbers of positive- and negative-energy states. We therefore write the matrix size of the Hamiltonian as $2n$.

Let $\{O_i\}_i$ be a good cover of $T^2$ and write $O_{ij}=O_i\cap O_j$, and so on.
On each patch, choose a local Takagi factorization
\begin{align}
C_\bk=U_{i,\bk}U_{i,\bk}^\top,\quad U_{i,\bk}\in U(2n),\quad \bk\in O_i.
\end{align}
The transition functions $S_{ij}$ on overlaps are defined by
\begin{align}
U_{j,\bk}=U_{i,\bk}S_{ij,\bk},\quad S_{ij,\bk}\in O(2n),\quad \bk\in O_{ij}
\end{align}
and obey the cocycle condition $S_{ij}S_{jl}=S_{il}$. The Takagi factors $U_i$ are defined patchwise; no global factor is assumed. Denote the resulting principal $O(2n)$ bundle by $\mathcal P_C$. We use the SW classes of $\mathcal P_C$ to mean the classes of the real vector bundle associated with its standard representation.
In the local Takagi basis, define $\widetilde H_{i,\bk}:=U_{i,\bk}^\dag H_\bk U_{i,\bk}$. The PHS condition in Eq.~\eqref{eq:D_pointwise_symmetry} gives
\begin{align}
\widetilde H_{i,\bk}^*=-\widetilde H_{i,\bk}.
\end{align}

Let $\Phi_{i,\bk}$ be a local orthonormal Bloch frame for the negative-energy states, and define transition functions by
\begin{align}
\Phi_{j,\bk}=\Phi_{i,\bk}V_{ij,\bk},\quad V_{ij,\bk}\in U(n),\quad \bk\in O_{ij}.
\end{align}
The transition functions define the occupied bundle $E$. Denote its principal $U(n)$ bundle of orthonormal frames by $\mathcal P_H$, so $c_1(\mathcal P_H)=c_1(E)$. We reserve $\mathcal P_C,\mathcal P_H$ for principal bundles and $P$ for the projector.
Expressing the occupied frame in the Takagi basis as $\widetilde\Phi_{i,\bk}:=U_{i,\bk}^\dag\Phi_{i,\bk}$ gives
\begin{align}
S_{ij,\bk}\widetilde\Phi_{j,\bk}=\widetilde\Phi_{i,\bk}V_{ij,\bk},\quad \bk\in O_{ij}.
\label{eq:D_tr_func}
\end{align}
Because the Hamiltonian in the Takagi basis is purely imaginary, $\widetilde\Phi_i^*$ is an orthonormal frame for the positive-energy states and is orthogonal to $\widetilde\Phi_i$. Consequently,
\begin{align}
Q(\widetilde\Phi_i):=\sqrt{2}\,(\re\widetilde\Phi_i,\im\widetilde\Phi_i)\in O(2n)
\end{align}
is a real orthogonal matrix. Taking the real and imaginary parts of Eq.~\eqref{eq:D_tr_func} gives
\begin{align}
&S_{ij}Q(\widetilde\Phi_j)=Q(\widetilde\Phi_i)R(V_{ij}),\notag\\
&R(V):=\begin{pmatrix}
\re V&\im V\\
-\im V&\re V
\end{pmatrix}\in SO(2n).
\end{align}
Here, $Q(\widetilde\Phi_i)$ denotes the real orthogonal matrix built from the local frame, rather than a flattened Hamiltonian. The realification map $R:U(n)\to SO(2n)$ is an injective group homomorphism.
The transition-function relation identifies the real vector bundle associated with $\mathcal P_C$ with the realification $E_\R$ of the occupied bundle $E$.
Let the principal $SO(2n)$ bundle with transition functions $R(V_{ij})$ be denoted by $(\mathcal P_H)_\R$. This is the oriented orthonormal frame bundle of $E_\R$, and hence
\begin{align}
w_1(\mathcal P_C)=0,\quad w_2(\mathcal P_C)=w_2((\mathcal P_H)_\R).
\end{align}
Using the standard relation $c_1(\mathcal P_H)\equiv w_2((\mathcal P_H)_\R)\pmod 2$ for the realification of a complex vector bundle~\cite{Milnor-Stasheff}, we obtain
\begin{align}
c_1(\mathcal P_H)\equiv w_2(\mathcal P_C)\pmod 2.
\end{align}
Evaluating the congruence on the fundamental class of $T^2$ gives
\begin{align}
{\rm ch}_1[H]\equiv\langle w_2(\mathcal P_C),[T^2]\rangle\pmod 2.
\label{eq:D_chern_SW}
\end{align}

\subsubsection{Interpretation as time reversal exchanging the sectors}
\label{sec:classD-sector-example}

Equation~\eqref{eq:D_chern_SW} can also be interpreted by regarding the flattened Hamiltonian $Q_\bk:=H_\bk(H_\bk^2)^{-1/2}$ as a momentum-dependent $\Z_2$ symmetry.
The antiunitary action $C_\bk K$ squares to $+1$ and exchanges the occupied sector $Q=-1$ and the unoccupied sector $Q=+1$.
The antiunitary action fixes momentum, as does the product $C_2T$ of a twofold rotation and time reversal.
The occupied and unoccupied sectors have Chern numbers ${\rm ch}_1[H]$ and $-{\rm ch}_1[H]$, respectively.
Equation~\eqref{eq:D_chern_SW} thus relates the Chern numbers of the $\Z_2$ symmetry sectors to the second SW class defined by $C_2T$ symmetry~\cite{AhnParkYang2019}.

\subsubsection{Example}
When analyzing the invariant of the symmetry matrix, we must also impose the necessary condition $w_1(\mathcal P_C)=0$ for a gapped model to exist. For example, for
\begin{align}
C_\bk=\operatorname{diag}(e^{ik_x},e^{ik_y}),
\end{align}
the classes are $w_1(\mathcal P_C)=x+y$ and $\langle w_2(\mathcal P_C),[T^2]\rangle=1$, where $x,y\in H^1(T^2;\Z_2)$ generate cohomology in the $k_x,k_y$ directions, respectively. Since $w_1(\mathcal P_C)\ne0$, no fully gapped model can exist.
The obstruction to a gap can also be checked directly. Write a Hermitian Hamiltonian as
\begin{align}
H_\bk=\begin{pmatrix}a_\bk&b_\bk\\b_\bk^*&-a_\bk\end{pmatrix},\quad a_\bk\in\R,\quad b_\bk\in\C.
\end{align}
The symmetry condition gives
\begin{align}
a_\bk=0,\quad b_\bk^2=-|b_\bk|^2e^{i(k_x-k_y)}.
\end{align}
If $b_\bk$ were nonzero everywhere, the phase winding along $k_x$ would be even on the left-hand side and odd on the right-hand side. The two sides could not agree.

Now place one complex-fermion degree of freedom at each of the four points $(0,0)$, $(1/2,0)$, $(0,1/2)$, and $(1/2,1/2)$ in a unit cell. The corresponding symmetry matrix is
\begin{align}
C_\bk=\operatorname{diag}(1,e^{ik_x},e^{ik_y},e^{i(k_x+k_y)}).
\label{eq:D_four_band_C}
\end{align}
The phase factors record the cell displacements of the orbitals under the twofold rotation. Although the action is not cellwise in the fixed Bloch basis, it can be realized geometrically by mapping each orbital to its rotation image. The four real line bundles have first SW classes $0,x,y,x+y$, respectively.
The Whitney sum formula in Eq.~\eqref{eq:summary-SW-whitney} gives
\begin{align*}
 w_1(\mathcal P_C)&=x+y+(x+y)=0,\\
 w_2(\mathcal P_C)&=x\cup y,
\end{align*}
so $\langle w_2(\mathcal P_C),[T^2]\rangle=1$.
We discuss the real-space SPT--LSM interpretation of this four-orbital configuration in Sec.~\ref{sec:2d-D-SPT-LSM}.

For $t>0$ and $t_2\ne0$, the symmetry in Eq.~\eqref{eq:D_four_band_C} admits the fully gapped model~\cite{AffleckRahmaniPikulin2017}\footnote{Given a Takagi factorization $C_\bk=U_\bk U_\bk^\top$, it suffices to choose $\widetilde H_\bk=U_\bk^\dag H_\bk U_\bk$ to be purely imaginary. With $S_x=1_2\otimes\sigma_z$ and $S_y=\sigma_z\otimes1_2$, the transformed Hamiltonian must also obey $\widetilde H_{\bk+2\pi\hat x}=S_x\widetilde H_\bk S_x$ and $\widetilde H_{\bk+2\pi\hat y}=S_y\widetilde H_\bk S_y$. In Eq.~\eqref{eq:D_four_band_H}, half-angle functions satisfying these boundary conditions multiply three mutually anticommuting purely imaginary Dirac matrices.}
\begin{align}
H_\bk&=U_\bk\Bigl[
t\sin\frac{k_x}{2}\,1_2\otimes\sigma_y
+t\sin\frac{k_y}{2}\,\sigma_y\otimes\sigma_z
\notag\\
&\hspace{5em}
+2t_2\cos\frac{k_x}{2}\cos\frac{k_y}{2}\,\sigma_y\otimes\sigma_x
\Bigr]U_\bk^\dag,
\label{eq:D_four_band_H}\\
U_\bk&=\operatorname{diag}(1,e^{ik_x/2},e^{ik_y/2},e^{i(k_x+k_y)/2}).
\end{align}
The Hamiltonian is $2\pi$ periodic in both directions and has finite-range hopping. Its minimum absolute energy is $\min(t,2|t_2|)>0$, and its Chern number is ${\rm ch}_1[H]=\operatorname{sgn}t_2$. At $t_2=0$, the gap closes and the Chern number changes by $2$, consistent with Eq.~\eqref{eq:D_chern_SW}.

\subsubsection{Relation to the SPT--LSM theorem}
\label{sec:2d-D-SPT-LSM}
The odd Chern number required by the complex-fermion configuration in Eq.~\eqref{eq:D_four_band_C} can be understood as an SPT--LSM constraint: zero modes bound to $\pi$ flux in a Chern insulator cancel the degrees of freedom at the rotation centers.
To explain the cancellation, we use the equivalent $p_x+ip_y$-wave superconducting phase~\cite{ChengWang2022Rotation,Shiozaki-Xiong-Gomi,ElseThorngren2020}.

If the twofold-rotation PHS is realized as a symmetry of a BdG Hamiltonian, the prescription in Appendix~\ref{sec:summary-bdg-symmetries} yields a unitary $C_2$ rotation symmetry $C_{2\bk} = C_{-\bk} \Sigma_x$. The relation $C_{2-\bk}C_{2\bk} = 1$ corresponds to $\widehat C_2^2=1$ for the physical rotation $\widehat C_2$ acting on fermion operators.
The superconductor is even under $C_2$ rotation and is incompatible with $p_x+ip_y$-wave pairing.

The same rotation symmetry requires vortices in the $p_x+ip_y$ phase. To see this, consider spinless fermions with the transformation law $\widehat C_2\hat c(\bx)\widehat C_2^{-1}=\hat c(-\bx)$ and the gap term $\Delta(\bx)\hat c^\dag(\bx)(\partial_x+i\partial_y)\hat c^\dag(\bx)+{\rm h.c.}$.
Rotation reverses the sign of $\partial_x+i\partial_y$, so symmetry imposes $\Delta(-\bx)=-\Delta(\bx)$ at every $C_2$ rotation center and requires a $\pi$ flux there.
A vortex Majorana mode $\gamma_{\rm v}$ therefore appears at each rotation center as a bound state formed from the bulk $p_x+ip_y$-wave superconducting degrees of freedom. The vortices form a Majorana vortex lattice.

Rotation symmetry pins each zero mode and prevents an invertible phase. Adding a Majorana degree of freedom $\gamma_0$ at a rotation center, however, allows the coupling $im\gamma_0\gamma_{\rm v}$, which preserves $\hat C_2$ and gaps the two zero modes (Fig.~\ref{fig:AII-mirror-anomaly-cancellation}[b]).

The real-space AHSS describes this zero-mode cancellation through the second differential~\cite{Shiozaki-Xiong-Gomi}
\begin{align}
 d^2_{2,-2}:E^2_{2,-2}\cong\Z
 &\longrightarrow E^2_{0,-1}\cong\Z_2^{\times 4},\notag\\
 n&\longmapsto (n,n,n,n).
 \label{eq:2d-D-AHSS-vortex}
\end{align}
The input is the Chern number of the two-dimensional phase; the output records the parities of the Majorana degrees of freedom at the rotation centers.
The SPT--LSM theorem takes the form of the preimage $(d^2_{2,-2})^{-1}(1,1,1,1) = 2\Z+1$, which forces the Chern number to be odd.

\section{\texorpdfstring{\NoCaseChange{Three dimensions}}{Three dimensions}}
\label{sec:3d}
The relevant symmetry classes are CI ($d_\tau=0$), CII ($d_\tau=1$), DIII ($d_\tau=2$), and BDI ($d_\tau=3$).
Following Sec.~\ref{sec:summary-chiral-general-frame}, we choose a trivialization of the chiral operator and write the flattened unitary off-diagonal block of the Hamiltonian as $q_\bk$.

\subsection{Wess--Zumino term and Polyakov--Wiegmann formula}
\label{sec:wz-pw}
With the winding-number normalization in Eq.~\eqref{eq:summary-winding}, the three-form to be integrated is
\begin{align}
    \omega_3(q):=\frac1{24\pi^2}\tr[(q^{-1}dq)^3],\qquad
    W_3[q]=\int_{T^3}\omega_3(q).
    \label{eq:coh-w3}
\end{align}
For a unitary matrix, $q^{-1}=q^\dag$. Under transposition, complex conjugation, and the momentum action $-\tau$, we have
\begin{align}
    \omega_3(q^\top)&=-\omega_3(q),\qquad
    \omega_3(q^*)=\omega_3(q),\notag\\
    \omega_3(q\circ(-\tau))&=(-\tau)^*\omega_3(q).
    \label{eq:coh-w3-transform}
\end{align}
Here $(-\tau)^*$ denotes the pullback of differential forms. Subscripts will indicate composition with a momentum action, as in $q_{-\tau}=q\circ(-\tau)$. The momentum action $-\tau$ in Eq.~\eqref{eq:summary-momentum-action} reverses $3-d_\tau$ coordinates, so integration gives
\begin{align}
    W_3[q\circ(-\tau)]=(-1)^{3-d_\tau}W_3[q].
\end{align}

For a unitary matrix $q$ on a closed oriented two-dimensional manifold $\Sigma$, we define the WZ term~\cite{WittenBosonization1984} using an extension to three dimensions:
\begin{align}
    {\rm WZ}[q;\Sigma]:=\int_M\omega_3(\widetilde q)\in\R/\Z,
    \qquad \partial M=\Sigma,\quad \widetilde q|_\Sigma=q.
    \label{eq:c2f-WZ}
\end{align}
Here $M$ is a compact oriented three-dimensional manifold, and $\widetilde q$ is an extension of $q$.\footnote{The extension exists because $\Omega^{SO}_2(U(N))=0$. Even when $\Sigma=T^2$, the one-dimensional winding numbers in Eq.~\eqref{eq:coh-m1} need not vanish.} Gluing two extensions gives a closed three-dimensional manifold with integer $W_3$. The WZ term is therefore independent of the extension $\bmod1$.
We evaluate ${\rm WZ}[q;\Sigma]$ on the restriction of $q$ to $\Sigma$ and omit $\Sigma$ when the integration surface is clear from context.

For a boundary $\partial M=\bigsqcup_a\Sigma_a$, orient each component $\Sigma_a$ by the induced orientation. Then
\begin{align}
    \int_M\omega_3(q)\equiv\sum_a{\rm WZ}[q;\Sigma_a]\pmod1.
    \label{eq:coh-wz-boundary}
\end{align}
Let $q_t=q|_{T^2\times\{t\}}$, with $t\in S^1$, and orient $T^3$ by $dk_x\wedge dk_y\wedge dt$. Then
\begin{align}
    W_3[q]=\oint_{S^1}d\,{\rm WZ}[q_t].
    \label{eq:coh-wz-winding}
\end{align}

For the Polyakov--Wiegmann (PW) term~\cite{PolyakovWiegmann1984}, we define the two-form and its integral separately:
\begin{align}
    {\rm pw}(q_1,q_2)
       &:=-\frac1{8\pi^2}\tr[(q_1^{-1}dq_1)(dq_2q_2^{-1})],\notag\\
    {\rm PW}[q_1,q_2;\Sigma]&:=\int_\Sigma{\rm pw}(q_1,q_2)\pmod1.
    \label{eq:coh-pw-definition}
\end{align}
When the integer part matters, we use the real-valued integral $\int_\Sigma{\rm pw}$ rather than ${\rm PW}$. The corresponding local identity is
\begin{align}
    \omega_3(q_1q_2)
       =\omega_3(q_1)+\omega_3(q_2)+d\,{\rm pw}(q_1,q_2).
    \label{eq:alt62-local-PW}
\end{align}
The PW two-form satisfies
\begin{align}
    &{\rm pw}(q_1,q_2)+{\rm pw}(q_1q_2,q_3)\notag\\
    &\quad={\rm pw}(q_2,q_3)+{\rm pw}(q_1,q_2q_3),\notag\\
    &{\rm pw}(q_1^*,q_2^*)={\rm pw}(q_1,q_2)
       =-{\rm pw}(q_2^\top,q_1^\top).
    \label{eq:coh-pw-identities}
\end{align}

We orient $\Sigma=T^2$ by $dk_x\wedge dk_y$ and define the determinant winding numbers by
\begin{align}
    m_\mu[q]:=\frac1{2\pi i}\oint_{S^1_\mu}d\log\det q\in\Z,
    \qquad \mu=x,y.
    \label{eq:coh-m1}
\end{align}
The circle $S^1_\mu$ is oriented in the direction of increasing $k_\mu$; the other coordinates are held fixed in the integral.
The PW formula for $U(N)$~\cite{GawedzkiWaldorf2009,MonacoTauber2017} includes a correction involving the winding numbers:
\begin{align}
    &{\rm WZ}[q_1q_2]\notag\\
    &\quad\equiv{\rm WZ}[q_1]+{\rm WZ}[q_2]+{\rm PW}[q_1,q_2]\notag\\
    &\qquad-\frac12\{m_x[q_1]m_y[q_2]-m_y[q_1]m_x[q_2]\}\pmod1.
    \label{eq:alt62-global-PW}
\end{align}

\subsection{\texorpdfstring{$(d,d_\tau)=(3,0)$}{(d,d\_tau)=(3,0)}, class CI}
\label{sec:3d-CI}
The symmetry condition is
\begin{align}
    (q_\bk u_\bk)^\top=q_{-\bk}u_{-\bk}.
    \label{eq:CI-symmetry}
\end{align}
Set $X_\bk:=q_\bk u_\bk$. We will prove $W_3[X]\in2\Z$, which implies
\begin{align}
    W_3[q]\equiv W_3[u]\pmod2.
    \label{eq:CI-winding-parity}
\end{align}

Orient $T^3$ by $dk_x\wedge dk_y\wedge dk_z$ and take the half Brillouin zone $V=T^2\times[0,\pi]$. Transposition and momentum inversion both reverse the sign of the three-dimensional winding number. By Eq.~\eqref{eq:CI-symmetry}, the two halves therefore contribute equally:
\begin{align}
    W_3[X]=2\int_V\omega_3(X).
\end{align}
With both boundary tori $T^2$ oriented by $dk_x\wedge dk_y$, the definition of the WZ term gives
\begin{align}
 \begin{split}
  \frac12 W_3[X]
  &\equiv {\rm WZ}[X;\{k_z=\pi\}]\\
  &\quad-{\rm WZ}[X;\{k_z=0\}]\pmod1.
 \end{split}
 \label{eq:CI-WZ-boundary}
\end{align}
On both boundaries, the matrix satisfies $X_{k_x,k_y}^\top=X_{-k_x,-k_y}$. We now show that the WZ term vanishes for any two-dimensional matrix family with this property.

\begin{lem}
\label{lem:CI-factorization}
Suppose that a periodic unitary matrix $X:T^2\to U(N)$ satisfies $X_\bk^\top=X_{-\bk}$. Then there is a periodic unitary matrix $g:T^2\to U(N)$ such that
\begin{align}
    X_\bk=g_{-\bk}g_\bk^\top,
    \qquad\text{equivalently}\qquad
    X_\bk g_\bk^*=g_{-\bk}.
    \label{eq:CI-factorization}
\end{align}
Moreover, $g$ can be chosen to be homotopic to the constant map $1_N$.
\end{lem}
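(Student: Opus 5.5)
The plan is to build $g$ by an equivariant cell-by-cell construction with respect to the involution $\iota:\bk\mapsto-\bk$ on $T^2$. The construction rests on two facts. First, the relation $g_{-\bk}=X_\bk g_\bk^*$ is self-consistent. If $g$ is defined on a set $S$ and we put $g_{-\bk}:=X_\bk g_\bk^*$ on $\iota(S)$, then applying the rule again returns
\begin{align*}
X_{-\bk}g_{-\bk}^*=X_\bk^\top X_\bk^*g_\bk=(X_\bk^\dag X_\bk)^*g_\bk=g_\bk .
\end{align*}
Hence it suffices to choose $g$ on a fundamental domain and extend by this rule. Second, at a fixed point $P\in\{\G,X,Y,M\}$ the rule reduces to $X_P=g_Pg_P^\top$. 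Since $X_P$ is a symmetric unitary matrix, such a $g_P$ exists by Takagi factorization, and it is unique up to $g_P\mapsto g_PO$ with $O\in O(N)$.

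The steps are as follows.

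\begin{enumerate}
\item \emph{Fixed points.} Choose Takagi factors $g_P$ at the four fixed points.

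\item \emph{Half-circles.} Join $\G\to X$ along $k_y=0$, $0\le k_x\le\pi$, by an arbitrary path in the connected group $U(N)$. Do the same for $Y\to M$ along $k_y=\pi$, for $\G\to Y$, and for $X\to M$. Extend each path to its $\iota$-image by the rule above. This defines $g$ continuously on the one-skeleton made of the four circles through the fixed points.

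\item \emph{Two-cells.} Take the half-BZ cylinder $D=[-\pi,\pi]\times[0,\pi]$ with the seams on $k_x=0,\pm\pi$. Extending $g$ into $D$ is possible exactly when the boundary loop is null-homotopic in $U(N)$. Since $\pi_1(U(N))\cong\Z$ through $\det$, this amounts to equal determinant windings $m_x[g]$ on $k_y=0$ and on $k_y=\pi$. After extending over $D$, define $g$ on $\iota(D)$ by the rule; consistency on the shared boundary is automatic by the first fact.
\end{enumerate}

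The main obstacle is matching the windings in step 3. I will compute how the available freedoms shift the winding of $\det g$ around the circle $k_y=0$. Changing the half-path $\G\to X$ by a loop of determinant winding $n$ should change the circle winding by $2n$. The reason is that $\det g_{-\bk}=\det X_\bk\,\overline{\det g_\bk}$, and conjugation combined with the reversed orientation of the image half doubles the contribution rather than cancelling it. So even discrepancies are removable. For the parity I use the disconnectedness of $O(N)$. Replacing $g_X\mapsto g_X\,\mathrm{diag}(-1,1,\dots,1)$ flips the sign of $\det g_X$, forcing the half-path's determinant phase to end on the other square root of $\det X_X$. This shifts the circle winding by an odd amount, which fixes the parity. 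I must check that these adjustments on $k_y=0$ can be made independently of the circle $k_y=\pi$ without disturbing the seam circles. Doing this carefully is where I expect most of the bookkeeping.

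Finally, for the homotopy claim: $SU(N)$ is $2$-connected, so the homotopy class of a map $T^2\to U(N)$ is determined by its determinant windings $m_x[g],m_y[g]$. The factorization allows the modification
\begin{align*}
g_\bk\mapsto g_\bk c_\bk,\qquad c_\bk=\mathrm{diag}\bigl(e^{i(ak_x+bk_y)},1,\dots,1\bigr),
\end{align*}
which preserves Eq.~\eqref{eq:CI-factorization} because $c_{-\bk}c_\bk^\top=1$. This shifts $(m_x,m_y)$ by an arbitrary $(a,b)\in\Z^2$. Choosing $(a,b)=-(m_x[g],m_y[g])$ makes both windings vanish, so $g$ is homotopic to $1_N$.
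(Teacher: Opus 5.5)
Your construction is essentially the paper's proof. It chooses Takagi factors at the fixed points and extends half-paths on the invariant circles $k_y=0,\pi$ by $g_{-\bk}=X_\bk g_\bk^*$. It then matches the two determinant windings, interpolates over the cylinder $T^1\times[0,\pi]$, extends to the other half by the same rule, and finally multiplies by $\mathrm{diag}(e^{i(ak_x+bk_y)},1,\dots,1)$ to remove $m_x,m_y$.

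\textbf{Winding matching.} The paper avoids the even/odd bookkeeping. It right-multiplies one horizontal loop by $\mathrm{diag}(e^{imk_x},1,\dots,1)$. This preserves the factorization on that circle, is real orthogonal at the fixed points, and shifts $m_x$ by an arbitrary integer $m$. It is the same trick you already use in your last step, and your reflection at $X$ is just its value at $k_x=\pi$ for odd $m$. Your even/odd argument is nevertheless correct. The adjustments on $k_y=0$ are also independent of the circle $k_y=\pi$, since $Y,M$ do not lie on $k_y=0$.

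\textbf{Error in step 3.} You also fix $g$ on the segment $k_x=0$, $0\le k_y\le\pi$ (the path $\Gamma\to Y$), and that segment cuts $D$ into two squares. Extending into $D$ then requires each square's boundary loop to have zero determinant winding. Equal windings on $k_y=0$ and $k_y=\pi$ only make the \emph{sum} of the two square windings vanish, so your stated criterion is not sufficient. There are two easy repairs:
\begin{enumerate}
\item Drop the vertical seams. The paper never prescribes $g$ there, and a single cut at $k_x=\pm\pi$ imposes no extra condition.
\item Keep the seams, and compose the $\Gamma\to Y$ path with a loop of suitable determinant winding. This shifts the two square windings by opposite amounts, so you can make both vanish.
\end{enumerate}
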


\begin{proof}
At each inversion-fixed point $P=-P$, choose a Takagi factorization $X_P=g_Pg_P^\top$. On $k_y=0$, choose a path $g_{k_x,0}$, $k_x\in[0,\pi]$, connecting $g_\Gamma$ and $g_X$. Define the other half of the circle by
\begin{align}
    g_{-k_x,0}=X_{k_x,0}g_{k_x,0}^*,
    \qquad k_x\in[0,\pi].
\end{align}
The Takagi factorizations at the fixed points ensure that the two halves agree at their endpoints, giving a periodic $g$ on $k_y=0$. Repeat the construction on $k_y=\pi$.

For any integer $m$, we may modify either boundary loop by
\begin{align}
    g_{k_x,k_y}\longmapsto
    g_{k_x,k_y}\operatorname{diag}(e^{imk_x},1,\ldots,1),
    \qquad k_y=0,\pi.
\end{align}
The modification preserves Eq.~\eqref{eq:CI-factorization} and shifts the winding number $m_x[g]$ in the $k_x$ direction by $m$. At the fixed points, the added right factor is real orthogonal, so the Takagi factorizations are preserved as well. We can thus match the winding numbers of the two boundary loops. Since $\pi_1(U(N))\cong\Z$ is classified by the determinant winding number, the loops can then be continuously interpolated on the cylinder $T^1\times[0,\pi]$. Defining the other half of the torus by Eq.~\eqref{eq:CI-factorization} gives a continuous factorization over all of $T^2$.

Finally, let $m_x,m_y$ denote the determinant winding numbers of the resulting $g$ in the $k_x,k_y$ directions, and make the replacement
\begin{align}
    g_\bk\longmapsto
    g_\bk\operatorname{diag}
    \bigl(e^{-i(m_xk_x+m_yk_y)},1,\ldots,1\bigr).
\end{align}
The replacement sets both winding numbers to zero while preserving the factorization. These two integers determine the homotopy class of a map from $T^2$ to $U(N)$, so the resulting $g$ is homotopic to $1_N$.
\end{proof}

Let $g$ be the factor from the lemma and write $\iota(k_x,k_y)=(-k_x,-k_y)$. A contraction of $g$ to a constant also provides compatible extensions for the WZ terms of $g\circ\iota$, $g^\top$, and $X=(g\circ\iota)g^\top$. The PW formula then gives
\begin{align}
 \begin{split}
  {\rm WZ}[X]
  &\equiv {\rm WZ}[g\circ\iota]+{\rm WZ}[g^\top]\\
  &\quad+{\rm PW}[g\circ\iota,g^\top]\pmod1.
 \end{split}
 \label{eq:CI-WZ-PW}
\end{align}
Since $\iota$ preserves the orientation of $T^2$ and transposition reverses the sign of the WZ term,
\begin{align}
    {\rm WZ}[g\circ\iota]
    \equiv {\rm WZ}[g],\qquad
    {\rm WZ}[g^\top]
    \equiv-{\rm WZ}[g]\pmod1.
\end{align}
Thus the first two terms in Eq.~\eqref{eq:CI-WZ-PW} cancel.

For the cross term, setting $A=g^\dag dg$ gives
\begin{align}
    {\rm PW}[g\circ\iota,g^\top]
    \equiv-\frac{1}{8\pi^2}\int_{T^2}\tr[(\iota^*A)A^\top]\pmod1.
\end{align}
Here $\iota^*$ denotes the pullback of differential forms.
Cyclicity of the trace and anticommutativity of the wedge product give
\begin{align}
    \iota^*\tr[(\iota^*A)A^\top]
    =\tr[A(\iota^*A)^\top]
    =-\tr[(\iota^*A)A^\top].
\end{align}
The two-form changes sign under pullback by $\iota$, whereas the orientation is preserved. Its integral must therefore vanish, giving
\begin{align}
    {\rm WZ}[X]\equiv0\pmod1.
\end{align}
The vanishing of the boundary WZ terms in Eq.~\eqref{eq:CI-WZ-boundary} implies $W_3[X]\in2\Z$. Additivity of the winding number on the closed torus $T^3$ then gives
\begin{align}
    W_3[q]=W_3[X]-W_3[u]
    \equiv W_3[u]\pmod2.
\end{align}
This is the parity relation in Eq.~\eqref{eq:CI-winding-parity}.

Setting $q=u^\dag$ gives a model satisfying the symmetry. A unitary matrix $u$ with $W_3[u]\ne0$, however, cannot have finite range. Such a matrix would define
\begin{align}
    \widetilde H_\bk=\begin{pmatrix}0&u_\bk\\u_\bk^\dag&0\end{pmatrix},
\end{align}
which would be an exactly flat, finite-range class AIII model with a nonzero three-dimensional winding number, contradicting the no-go theorem for finite-range unitary matrices~\cite{Read-2017}.

\subsection{\texorpdfstring{$(d,d_\tau)=(3,1)$}{(d,d\_tau)=(3,1)}, class CII}
\label{sec:3d-CII}
Let $-\tau\bk=(k_x,-k_y,-k_z)$, and suppose that smooth periodic unitary matrices $q,u,v$ satisfy
\begin{align}
    u_\bk q_\bk^*v_\bk^\dag=q_{-\tau\bk},\qquad
    u_\bk^\top=-u_{-\tau\bk},\qquad
    v_\bk^\top=-v_{-\tau\bk}.
    \label{eq:alt62-symmetry}
\end{align}
On the fixed lines, $u,v$ are skew-symmetric, so their size must be even, $2n$. Since $-\tau$ preserves orientation in three dimensions and transposition reverses the sign of $W_3$,
\begin{align}
    W_3[u]=W_3[v]=0.
\end{align}
These vanishing winding numbers do not determine the parity of $W_3[q]$. We will instead lift the boundary WZ term to $\R/2\Z$ to derive a constraint that depends only on $u,v$.

\subsubsection{Mod 2 WZ term}
On the boundary planes $k_z=0,\pi$, write $\rho(k_x,k_y)=(k_x,-k_y)$.
\begin{lem}
\label{lem:alt62-factorization}
If $w:T^2\to U(2n)$ satisfies $w_\bk^\top=-w_{\rho\bk}$, there is a periodic unitary matrix $Q$ such that
\begin{align}
    w_\bk=Q_{\rho\bk}JQ_\bk^\top,
    \qquad J=\bigoplus_{j=1}^n i\sigma_y.
    \label{eq:alt62-factorization}
\end{align}
\end{lem}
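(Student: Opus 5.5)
I will mimic the proof of Lemma~\ref{lem:CI-factorization}: construct $Q$ on a fundamental domain of the reflection $\rho$, extend it to all of $T^2$ through the factorization identity, and use the residual gauge freedom to make the pieces glue. The relation $w_\bk=Q_{\rho\bk}JQ_\bk^\top$ is equivalent to $Q_{\rho\bk}=w_\bk Q_\bk^*J^{-1}$ (using $J^{-1}=-J=J^\top$). This formula is consistent under applying $\rho$ twice: substituting gives $Q_\bk=w_{\rho\bk}(w_\bk Q_\bk^* J^{-1})^*J^{-1}=w_{\rho\bk}w_\bk^*Q_\bk$. The last expression equals $Q_\bk$ because $w_{\rho\bk}=-w_\bk^\top$ and $J^{-1*}J^{-1}=J^2\cdot$(sign) $=-1$, so the two minus signs cancel. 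Hence it suffices to define $Q$ on the half torus $D=\{k_y\in[0,\pi]\}$ so that the relation holds on the $\rho$-fixed circles $k_y=0$ and $k_y=\pi$. The formula then defines $Q$ on the other half, and it agrees on the two fixed circles.

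On each fixed circle $k_y=0,\pi$, $w$ is an antisymmetric unitary matrix. The required condition there is $w_{k_x,0}=Q_{k_x,0}JQ_{k_x,0}^\top$, i.e., a pointwise normal form of a skew-symmetric unitary. The set of such $Q$ at a given point is a torsor for $Sp(n)$, the stabilizer of $J$ under $Q\mapsto QSQ^\top$ with $SJS^\top=J$. Antisymmetric unitaries form $U(2n)/Sp(n)$, and $U(2n)\to U(2n)/Sp(n)$ is a principal $Sp(n)$-bundle. Since $Sp(n)$ is connected and simply connected, any map from a circle to the base lifts. So I choose a periodic lift $Q$ on each fixed circle.

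Next I extend $Q$ from the two boundary circles across the cylinder $S^1\times[0,\pi]$ as an arbitrary unitary-valued map, with no constraint in the interior. The only obstruction is that the two boundary loops must be homotopic in $U(2n)$, i.e., have the same $\pi_1$ class, the determinant winding $m_x$. The loops obey $\det w=(\det Q)^2\det J$ on each circle, so their windings are $\tfrac12 m_x[w]$ at $k_y=0$ and at $k_y=\pi$. These agree because $m_x[w]$ is independent of $k_y$. This removes any need for the diagonal-phase trick of Lemma~\ref{lem:CI-factorization}, though it is also available through $Sp(n)$-valued gauge loops if one prefers. Alternatively, the gauge $Q\mapsto QS$ with $S_{k_x}\in Sp(n)$ cannot change $\det$, but homotopy in $U(2n)$ is already guaranteed. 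I then interpolate, define $Q$ on $k_y\in[-\pi,0]$ by $Q_{\rho\bk}=w_\bk Q_\bk^*J^{-1}$, and check continuity and periodicity at $k_y=0,\pi$ using the lift condition.

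The main obstacle is the consistency check that the reflected definition matches on the fixed circles and is periodic in $k_y$. This reduces to the Takagi-like condition holding there together with the sign computation above, which I will verify carefully, including the transpose conventions.
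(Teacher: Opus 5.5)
Your proposal is correct and follows essentially the same route as the paper's proof: you choose a periodic factor $Q$ on each fixed circle using connectedness of $Sp(n)$, match $m_x[Q]=\tfrac12 m_x[w]$ on the two circles via $\det w=(\det Q)^2$, interpolate across the cylinder, and extend by $Q_{\rho\bk}=w_\bk Q_\bk^*J^{-1}$. The circle lifts need only connectedness of $Sp(n)$, not its simple connectivity.
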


\begin{proof}
On the fixed circles $k_y=0,\pi$, the skew-symmetric unitary matrix can be factored as $w=QJQ^\top$. The group $Sp(n)$ acting from the right is connected, so $Q$ can be chosen continuously and periodically on each fixed circle. The identity $\det w=(\det Q)^2$ gives $m_x[w]=2m_x[Q]$ on these circles. Because $w$ is defined on the cylinder $S^1\times[0,\pi]$, the winding number $m_x[Q]$ agrees on its two boundaries. We can therefore interpolate the boundary values of $Q$ inside the cylinder and extend the factor to the other half by
\begin{align}
    Q_{\rho\bk}=w_\bk Q_\bk^*J^{-1},
    \qquad 0\leq k_y\leq\pi.
\end{align}
The resulting continuous factorization satisfies Eq.~\eqref{eq:alt62-factorization}.
\end{proof}

To work mod 2, we retain the real-valued integral $\int_{T^2}{\rm pw}$ of the defining PW two-form, rather than reducing it to a value in $\R/\Z$. Substituting the boundary factorization gives
\begin{align}
    &{\rm WZ}[w]\notag\\
    &\quad\equiv {\rm WZ}[Q_\rho]+{\rm WZ}[Q^\top]
       +\int_{T^2}{\rm pw}(Q_\rho,JQ^\top)\notag\\
    &\qquad-\frac12\{m_x[Q_\rho]m_y[Q^\top]
       -m_y[Q_\rho]m_x[Q^\top]\}\notag\\
    &\quad\equiv-2{\rm WZ}[Q]+\int_{T^2}{\rm pw}(Q_\rho,JQ^\top)\notag\\
    &\qquad-m_x[Q]m_y[Q]\pmod1.
\end{align}
Here $Q_\rho=Q\circ\rho$.
Changing the real representative of ${\rm WZ}[Q]$ by an integer changes $-2{\rm WZ}[Q]$ by an even integer. The expression above therefore defines a value in $\R/2\Z$, lifting the WZ term to period $2$. We define
\begin{align}
    &{\rm WZ}_\rho[w]\notag\\
    &\quad:=-2{\rm WZ}[Q]+\int_{T^2}{\rm pw}(Q_\rho J,Q^\top)\notag\\
    &\qquad-m_x[Q]m_y[Q]
        \quad\in\R/2\Z.
    \label{eq:alt62-refined-WZ}
\end{align}
\footnote{A similar mod 2 refinement of the WZ term under the different symmetry $J q_\bk^* J^\dag = q_{-\bk},  J^\top = -J$ was introduced in Ref.~\cite{GawedzkiSquareRoot2017}.}
By construction,
\begin{align}
    {\rm WZ}_\rho[w]\equiv{\rm WZ}[w]\pmod1.
\end{align}

\begin{lem}
\label{lem:alt62-frame-independence}
The quantity defined in Eq.~\eqref{eq:alt62-refined-WZ} is independent of the choice of factor $Q$.
\end{lem}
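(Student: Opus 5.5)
The plan is to compare two factorizations directly. Suppose $w_\bk=Q_{\rho\bk}JQ_\bk^\top=Q'_{\rho\bk}JQ'^\top_\bk$ are two factorizations as in Lemma~\ref{lem:alt62-factorization}, and set $S_\bk:=Q_\bk^{-1}Q'_\bk$, a periodic unitary map $T^2\to U(2n)$. Substituting $Q'=QS$ gives $Q_\rho S_\rho J S^\top Q^\top=Q_\rho JQ^\top$, so $S$ satisfies the twisted symplectic condition
\begin{align}
 S_{\rho\bk}JS_\bk^\top=J,\qquad\text{equivalently}\qquad S_{\rho\bk}=JS_\bk^*J^{-1}.
\end{align}
In particular, $S$ lies in $Sp(n)$ on the fixed circles $k_y=0,\pi$. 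The goal is to show that ${\rm WZ}_\rho[w]$ computed with $Q'$ differs from the value computed with $Q$ by an even integer.

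\emph{Step 1 (reduction to $S$).} Expand each term of Eq.~\eqref{eq:alt62-refined-WZ} for $Q'=QS$:
\begin{itemize}
\item use the global PW formula~\eqref{eq:alt62-global-PW}, keeping real-valued integrals, for $-2{\rm WZ}[QS]$;
\item rewrite $\int{\rm pw}(Q_\rho S_\rho J,S^\top Q^\top)$ with the cocycle identity~\eqref{eq:coh-pw-identities};
\item expand $m_\mu[QS]=m_\mu[Q]+m_\mu[S]$.
\end{itemize}
The cross terms between $Q$ and $S$ take forms such as $-2\int{\rm pw}(Q,S)$ and $\int{\rm pw}(Q_\rho,S_\rho J\cdots)$. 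These should cancel pairwise, or up to even integers, once two facts are used: $S_\rho=JS^*J^{-1}$, and the transformation rules ${\rm pw}(q_1^*,q_2^*)={\rm pw}(q_1,q_2)=-{\rm pw}(q_2^\top,q_1^\top)$. Pullback by $\rho$ reverses orientation and supplies the extra sign. The mixed winding products $m_x[Q]m_y[S]+m_y[Q]m_x[S]$ should be handled using $m_x[S]=0$. This vanishing follows because $\det S_\rho=\overline{\det S}$ and $\det S=1$ on the fixed circles. What remains should be the same functional evaluated for the constant matrix $w=J$ with factor $S$:
\begin{align}
 F[S]:=-2{\rm WZ}[S]+\int_{T^2}{\rm pw}(S_\rho J,S^\top)-m_x[S]m_y[S].
\end{align}
The lemma thus reduces to showing $F[S]\in2\Z$.

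\emph{Step 2 (rigidity).} The value $F[S]\bmod 2$ lies in $\Z/2\Z$, since it reduces mod $1$ to ${\rm WZ}[J]=0$. Because $F$ varies continuously with $S$ in $\R/2\Z$, it is locally constant on the space $\mathcal S$ of twisted symplectic maps. It therefore suffices to evaluate $F$ on one representative of each path component of $\mathcal S$.

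\emph{Step 3 (components).} To find the components, restrict to the fixed circles. Since $Sp(n)$ is connected and simply connected, $S$ can be deformed to $1$ there. On the half-cylinder $S^1\times[0,\pi]$ with both boundary circles fixed, the remaining freedom is the relative class in $\pi_1(U(2n))$, i.e.\ the winding $m_y[S]$. The twisted condition then fixes the other half of the torus. This winding is even, for example $m_y=2$ for the representative $S=e^{ik_y}1_2$, which satisfies the twisted condition. The representatives are therefore $S=e^{i\ell k_y}1_2\oplus 1$ for integer $\ell$. For these, ${\rm WZ}[S]=0$ because $S$ depends on a single variable, and $m_x[S]=0$. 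The PW term is the integral of a two-form built from $dk_y\wedge dk_y$, which vanishes. Hence $F=0$.

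\emph{Main obstacle.} I expect the main difficulty to be the bookkeeping in Step 1. One must check that every cross term really combines into an even integer, not merely an integer; this is where the orientation reversal by $\rho$ and the transposition signs must be tracked carefully. The component analysis in Step 3 also needs care to confirm that no $\Z_2$-valued invariant beyond $m_y$ arises. If that analysis becomes delicate, I would fall back on directly deforming $S$ to the representatives while preserving the twisted condition.
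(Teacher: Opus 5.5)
Your proposal is correct and is essentially the paper's argument: with $Q'=QS$, the twisted condition $S_\rho=JS^*J^{-1}$ gives $m_x[S]=0$, and the global PW formula together with the cocycle and transposition identities reduces the change to $F[S]\equiv-2{\rm WZ}[S]\pmod 2$, since $\int_{T^2}{\rm pw}(S_\rho J,S^\top)=0$. The remaining value ${\rm WZ}[S]\in\{0,\tfrac12\}$ is then shown to vanish by deforming $S$, using $\pi_1(Sp(n))=0$ on the fixed circles, to a matrix that depends only on $k_y$; your concern in Step 3 about an extra $\Z_2$ invariant is settled by $\pi_2(U(2n))=0$, which is what lets the paper remove the $k_x$ dependence once the boundary loops are contracted.
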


\begin{proof}
Write an alternative factor as $Q'=QV$. The change of factor obeys
\begin{align}
    V_\rho JV^\top=J,
    \qquad V_\rho=JV^*J^{-1}.
    \label{eq:alt62-frame-change}
\end{align}
On the fixed circles, $\det V=1$, which implies $m_x[V]=0$ and $m_y[V]\in2\Z$.
We can further show that
\begin{align}
    {\rm WZ}[V]\equiv0\pmod1.
    \label{eq:alt62-gauge-WZ-zero}
\end{align}
Equation~\eqref{eq:alt62-frame-change} restricts ${\rm WZ}[V]$ to $0$ or $1/2$. We may therefore evaluate the WZ term after a continuous deformation.
The boundary loops of the cylinder $[-\pi,\pi] \times [0,\pi]$ lie in $Sp(n)$ and can be contracted to constants because $\pi_1(Sp(n))=0$.
A continuous deformation can then remove the $k_x$ dependence of $V$. For the resulting matrix, which depends only on $k_y$, we have ${\rm WZ}[V]=0$, proving Eq.~\eqref{eq:alt62-gauge-WZ-zero}.

Under this change of factor, the first term in Eq.~\eqref{eq:alt62-refined-WZ} becomes
\begin{align}
    &-2{\rm WZ}[QV]\notag\\
    &\quad\equiv-2{\rm WZ}[Q]-2{\rm WZ}[V]
       -2\int_{T^2}{\rm pw}(Q,V)\notag\\
    &\qquad+m_x[Q]m_y[V]-m_y[Q]m_x[V]\pmod2.
\end{align}
The PW identities~\eqref{eq:coh-pw-identities}, together with the orientation reversal under $\rho$, give the following equality of real integrals:
\begin{align}
    &\int_{T^2}{\rm pw}(Q_\rho V_\rho J,V^\top Q^\top)\notag\\
    &\quad=\int_{T^2}{\rm pw}(Q_\rho J,Q^\top)
       +2\int_{T^2}{\rm pw}(Q,V).
\end{align}
\footnote{Since $J$ is constant, Eq.~\eqref{eq:coh-pw-identities} gives
\begin{align*}
    &\int_{T^2}{\rm pw}(Q_\rho V_\rho J,V^\top Q^\top)\\
    &\quad=\int_{T^2}\bigl\{{\rm pw}(V_\rho J,V^\top)
       -{\rm pw}(V^\top,Q^\top)\\
    &\qquad\qquad+{\rm pw}(Q_\rho J,Q^\top)
       -{\rm pw}(Q_\rho,V_\rho J)\bigr\}\\
    &\quad=\int_{T^2}{\rm pw}(Q_\rho J,Q^\top)
         +2\int_{T^2}{\rm pw}(Q,V).
\end{align*}
In the last step, the first term vanishes because $V_\rho J=JV^*$. The remaining terms combine by the transposition identity and the reversal of the integration orientation under $\rho$.}
The winding-number correction is
\begin{align}
    &-m_x[QV]m_y[QV]\notag\\
    &\quad=-(m_x[Q]+m_x[V])(m_y[Q]+m_y[V]).
\end{align}
Because $m_x[V]=0$, $m_y[V]\in2\Z$, and ${\rm WZ}[V]\equiv0\pmod1$, the quantity ${\rm WZ}_\rho[w]$ is independent of the factor $Q$.
\end{proof}

\subsubsection{Parity of the three-dimensional winding number}
Orient the fundamental region $V=T^2\times[0,\pi]$ of momentum space by $dk_x\wedge dk_y\wedge dk_z$, and orient $\Sigma_z=\{k_z=z\}$ by $dk_x\wedge dk_y$. With these conventions, $\partial V=\Sigma_\pi-\Sigma_0$.
For $w$ satisfying $w^\top_{\bk} = - w_{-\tau\bk}$, define the $\Z_2$ invariant $\mu[w] \in \{0,1\}$ by
\begin{align}
 \begin{split}
  \mu[w]
  &:=\int_V\omega_3(w)-{\rm WZ}_\rho[w;\Sigma_\pi]\\
  &\quad+{\rm WZ}_\rho[w;\Sigma_0]\pmod2.
 \end{split}
 \label{eq:alt62-nu}
\end{align}
Equation~\eqref{eq:coh-wz-boundary} and ${\rm WZ}_\rho\equiv{\rm WZ}\pmod1$ ensure that the defining expression is an integer $\bmod2$.

For $w_1,w_2$ satisfying the same symmetry, $\mu$ is additive under direct sums:
\begin{align}
    \mu[w_1\oplus w_2]\equiv\mu[w_1]+\mu[w_2]\pmod2.
    \label{eq:CII-mu-direct-sum}
\end{align}
To prove additivity, factor $w_i=(Q_i)_\rho J_iQ_i^\top$ on each boundary and set $a_i=m_x[Q_i]$ and $b_i=m_y[Q_i]$. In Eq.~\eqref{eq:alt62-global-PW} applied to $(Q_1\oplus I)(I\oplus Q_2)$, the PW term vanishes. The difference between $-2{\rm WZ}$ evaluated on the direct sum and the sum of its values on the two factors is therefore $a_1b_2-b_1a_2$ $\bmod2$. The PW integral in Eq.~\eqref{eq:alt62-refined-WZ} is additive under direct sums. Including the final winding-number term yields
\begin{align}
    &{\rm WZ}_\rho[w_1\oplus w_2]
       -\sum_{i=1}^2{\rm WZ}_\rho[w_i]\notag\\
    &\quad\equiv(a_1b_2-b_1a_2)-(a_1b_2+a_2b_1)\notag\\
    &\quad=-2a_2b_1\equiv0\pmod2.
\end{align}
The volume integral is also additive because $\omega_3(w_1\oplus w_2)=\omega_3(w_1)+\omega_3(w_2)$. Hence Eq.~\eqref{eq:CII-mu-direct-sum} follows.

\begin{thm}
If $q,u,v:T^3\to U(2n)$ satisfy Eq.~\eqref{eq:alt62-symmetry}, then
\begin{align}
    W_3[q]\equiv\mu[u]-\mu[v]\pmod2.
    \label{eq:CII-3d-winding-parity}
\end{align}    
\end{thm}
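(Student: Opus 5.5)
The plan is to express $u$ as a ``twisted conjugate'' of $v$ by $q$, and to compare $\mu[u]$ and $\mu[v]$ term by term in the definition~\eqref{eq:alt62-nu}. From Eq.~\eqref{eq:alt62-symmetry} we have
\begin{align}
 u_\bk=q_{-\tau\bk}\,v_\bk\,q_\bk^\top .
\end{align}
On the boundary planes $\Sigma_0,\Sigma_\pi$ the action $-\tau$ restricts to $\rho$. Lemma~\ref{lem:alt62-factorization} gives a factorization $v=Q_\rho JQ^\top$ there, and hence
\begin{align}
 u=(qQ)_\rho\,J\,(qQ)^\top
\end{align}
on the same planes. So $Q_u:=qQ$ is an admissible factor for $u$. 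By Lemma~\ref{lem:alt62-frame-independence}, ${\rm WZ}_\rho[u]$ may be computed with this particular factor. This removes all choices and reduces the theorem to an identity among explicit integrals built from $q$ and $Q$.

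\textbf{Bulk term.} Apply the local PW identity~\eqref{eq:alt62-local-PW} twice:
\begin{align}
 \omega_3(u)=\omega_3(q_{-\tau})+\omega_3(v)+\omega_3(q^\top)+d\,{\rm pw}(q_{-\tau},vq^\top)+d\,{\rm pw}(v,q^\top).
\end{align}
The map $-\tau$ reverses two coordinates, so it preserves orientation and sends $V$ onto $V'=T^2\times[-\pi,0]$. Together with $\omega_3(q^\top)=-\omega_3(q)$, this gives
\begin{align}
 \int_V\bigl[\omega_3(q_{-\tau})+\omega_3(q^\top)\bigr]=W_3[q]-2\int_V\omega_3(q).
\end{align}
The exact terms integrate by Stokes to real PW integrals on $\Sigma_\pi$ and $\Sigma_0$. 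Hence
\begin{align}
 \mu[u]-\mu[v]\equiv W_3[q]-2\int_V\omega_3(q)+(\text{boundary PW integrals})-\bigl({\rm WZ}_\rho[u]-{\rm WZ}_\rho[v]\bigr)\Big|_{\Sigma_0}^{\Sigma_\pi}\pmod 2 .
\end{align}

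\textbf{Boundary term.} Expand $-2{\rm WZ}[qQ]$ with the global PW formula~\eqref{eq:alt62-global-PW}, multiplied by $2$ so that it holds mod $2$ on real representatives. This gives
\begin{align}
 -2{\rm WZ}[qQ]\equiv-2{\rm WZ}[q]-2{\rm WZ}[Q]-2\int{\rm pw}(q,Q)+\bigl(m_x[q]m_y[Q]-m_y[q]m_x[Q]\bigr)\pmod 2 .
\end{align}
The contributions $-2{\rm WZ}[q;\Sigma_\pi]+2{\rm WZ}[q;\Sigma_0]$ combine with $-2\int_V\omega_3(q)$. By Eq.~\eqref{eq:coh-wz-boundary}, $\int_V\omega_3(q)-{\rm WZ}[q;\Sigma_\pi]+{\rm WZ}[q;\Sigma_0]$ is an integer, so twice it vanishes mod $2$. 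The $-2{\rm WZ}[Q]$ pieces cancel against ${\rm WZ}_\rho[v]$. What remains are real PW integrals and winding-number products on each boundary torus.

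\textbf{Main obstacle.} The hardest step is to show that these leftover terms vanish mod $2$ on each of $\Sigma_0$ and $\Sigma_\pi$. The PW integrals to cancel are
\begin{align}
 \int{\rm pw}\bigl((qQ)_\rho J,(qQ)^\top\bigr)-\int{\rm pw}(Q_\rho J,Q^\top)-2\int{\rm pw}(q,Q),
\end{align}
together with the Stokes terms $\int{\rm pw}(q_\rho,vq^\top)+\int{\rm pw}(v,q^\top)$ with $v=Q_\rho JQ^\top$. I would handle them as in the footnote to Lemma~\ref{lem:alt62-frame-independence}. First, expand with the cocycle identity~\eqref{eq:coh-pw-identities}. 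Second, use ${\rm pw}(X^*,Y^*)={\rm pw}(X,Y)=-{\rm pw}(Y^\top,X^\top)$, with $J$ constant. Third, use that $\rho$ reverses the orientation of $T^2$, so $\int\rho^*\alpha=-\int\alpha$. The aim is to pair each term with its transpose/reflection partner, leaving only even multiples of real integrals. Those even multiples must then be checked to be integers.

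The winding-number leftovers are
\begin{align}
 m_x[q]m_y[Q]-m_y[q]m_x[Q]-m_x[qQ]m_y[qQ]+m_x[Q]m_y[Q]=-m_x[q]m_y[q]-2m_y[q]m_x[Q].
\end{align}
I must show they are even, or are cancelled by the PW leftovers. The key input here is that $u$ and $v$ are both skew on the fixed circles, so $\det u=\det v\,(\det q)^2$ there. This constrains $m_x[q]$ and $m_y[q]$ through $\det q_\rho\det q$. If the leftovers do not cancel directly, I would first deform $q$ on each boundary torus, keeping the symmetry, to a representative with $m_x[q]=m_y[q]=0$. This reduces the bookkeeping to pure PW integrals.
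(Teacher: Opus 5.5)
Your route is the paper's own. The factor $qQ$ for $u$ is the paper's $A=qB$, and your bulk identity, the doubled global PW formula, and the absorption of $2{\rm WZ}[q]$ into $-2\int_V\omega_3(q)$ are all correct. There is one sign slip: the piece coming from $-[{\rm WZ}_\rho[u]-{\rm WZ}_\rho[v]]_{\Sigma_0}^{\Sigma_\pi}$ is $+2{\rm WZ}[q;\Sigma_\pi]-2{\rm WZ}[q;\Sigma_0]$, and that sign is what makes your bracket an integer. The PW leftovers also close, and more cleanly than you expect. Two applications of the cocycle identity give $\int{\rm pw}((qQ)_\rho J,(qQ)^\top)=\int{\rm pw}(q_\rho,vq^\top)+\int{\rm pw}(v,q^\top)+\int{\rm pw}(Q_\rho J,Q^\top)-\int{\rm pw}(Q^\top,q^\top)-\int{\rm pw}(q_\rho,Q_\rho J)$. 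Each of the last two terms equals $+\int{\rm pw}(q,Q)$: the first by the transposition identity, the second because $J$ is constant and $\rho$ reverses orientation. Your PW combination therefore cancels the Stokes terms exactly as real integrals, and no ``even multiples'' remain to be checked for integrality.

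The genuine gap is the last step. You aim to show that the leftover vanishes mod 2 on each of $\Sigma_0$ and $\Sigma_\pi$ separately, but it does not. After the PW cancellation, what remains on each boundary is your $-m_x[q]m_y[q]-2m_y[q]m_x[Q]\equiv m_x[q]m_y[q]\pmod 2$. This is odd, for instance, for $n=1$, $q=\operatorname{diag}(e^{ik_x},e^{ik_y})$, $v=J$, $u=q_{-\tau}Jq^\top$, which satisfies all hypotheses. None of your proposed remedies removes it:
\begin{itemize}
\item skew-symmetry of $u,v$ on the fixed circles imposes no parity on $m_x[q]$ or $m_y[q]$;
\item the PW terms cannot absorb it, since they cancel among themselves;
\item deforming $q$ on each boundary torus to $m_x[q]=m_y[q]=0$ is impossible, because these determinant windings are homotopy invariants of $q|_{\Sigma_z}$.
\end{itemize}
The missing observation is that the boundary contributions enter only through the difference between $k_z=\pi$ and $k_z=0$. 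The windings $m_x[q],m_y[q]$ evaluated on $\Sigma_z$ do not depend on $z$, because $q$ is continuous on the slab $V=T^2\times[0,\pi]$. Hence $[m_x[q]m_y[q]]_{k_z=0}^{k_z=\pi}=0$ and the parity relation follows; this is exactly how the paper closes the proof.
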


\begin{proof}
Writing the symmetry as $u q^*=q_{-\tau} v$ and using Eq.~\eqref{eq:alt62-local-PW}, we obtain
\begin{align}
    \omega_3(q_{-\tau})
    &=\omega_3(q)+\omega_3(u)-\omega_3(v)\notag\\
    &\quad+d\,{\rm pw}(u,q^*)-d\,{\rm pw}(q_{-\tau},v).
\end{align}
The symmetry under the momentum action $-\tau$ gives
\begin{align}
    W_3[q]
    &=\int_V\{\omega_3(q)+\omega_3(q_{-\tau})\}
    \notag\\
    &\equiv \mu[u]-\mu[v]\notag\\
    &\quad+\Bigl[2{\rm WZ}[q]+{\rm WZ}_\rho[u]-{\rm WZ}_\rho[v]
    \notag\\[-2pt]
    &\qquad
       +\int_{T^2}\{{\rm pw}(u,q^*)-{\rm pw}(q_\rho,v)\}\Bigr]_{k_z=0}^{k_z=\pi}
       \pmod2.
    \label{eq:alt62-half-BZ}
\end{align}

It remains to evaluate the boundary term. On each boundary, factor $v=B_\rho JB^\top$ and set $A=qB$. The symmetry in Eq.~\eqref{eq:alt62-symmetry} then gives $u=A_\rho JA^\top$.
The PW identities
\begin{align}
    &\int_{T^2}\{{\rm pw}(a,b)+{\rm pw}(ab,c)\}\notag\\
    &\quad=\int_{T^2}\{{\rm pw}(b,c)+{\rm pw}(a,bc)\},\notag\\
    &\int_{T^2}{\rm pw}(a^*,b^*)=\int_{T^2}{\rm pw}(a,b)
       =-\int_{T^2}{\rm pw}(b^\top,a^\top)
\end{align}
and $\int_{T^2}{\rm pw}(a_\rho,b_\rho)=-\int_{T^2}{\rm pw}(a,b)$ give
\begin{align}
    &\int_{T^2}\{{\rm pw}(u,q^*)-{\rm pw}(q_\rho,v)\}\notag\\
    &\quad=-2\int_{T^2}{\rm pw}(A,B^\dag)
       -\int_{T^2}{\rm pw}(A_\rho J,A^\top)\notag\\
    &\qquad+\int_{T^2}{\rm pw}(B_\rho J,B^\top).
    \label{eq:alt62-boundary-PW}
\end{align}
Applying Eq.~\eqref{eq:alt62-global-PW} to $q=AB^\dag$ and substituting Eqs.~\eqref{eq:alt62-refined-WZ} and \eqref{eq:alt62-boundary-PW}, we obtain
\begin{align}
    &2{\rm WZ}[q]+{\rm WZ}_\rho[u]-{\rm WZ}_\rho[v]\notag\\
    &\quad+\int_{T^2}\{{\rm pw}(u,q^*)-{\rm pw}(q_\rho,v)\}\notag\\
    &\quad\equiv
       -(m_x[A]m_y[B]-m_y[A]m_x[B])\notag\\
    &\qquad-m_x[A]m_y[A]+m_x[B]m_y[B]\notag\\
    &\quad\equiv m_x[q]m_y[q]\pmod2.
\end{align}
The one-dimensional winding numbers $m_x[q],m_y[q]$ agree at $k_z=0,\pi$. The boundary difference in Eq.~\eqref{eq:alt62-half-BZ} therefore vanishes, proving the stated parity relation.
\end{proof}

The invariant is related to previous work on gapless modes at defects~\cite{Teo-Kane}. 
If $(k_y,k_z)$ are interpreted as momenta and $k_x$ as the parameter of a circle surrounding a defect, then $w$ corresponds to the off-diagonal block of a point-defect Hamiltonian in two-dimensional class DIII. 
When a global factorization $w_\bk=Q_{-\tau\bk}JQ_\bk^\top$ exists, substituting $(q,u,v)=(Q,w,J)$ into Eq.~\eqref{eq:CII-3d-winding-parity} and using $\mu[J]=0$ gives
\begin{align*}
    \mu[w]\equiv W_3[Q]\pmod2.
\end{align*}
The resulting winding-number formula agrees with Ref.~\cite{Teo-Kane}.
The definition in Eq.~\eqref{eq:alt62-nu} requires only the boundary factorizations established in Lemma~\ref{lem:alt62-factorization}; a global factorization over all three dimensions is not assumed.

\subsubsection{Locality}
We next ask whether the symmetry matrices $w=u,v$ can have finite range.
The antiunitary symmetry with momentum action $-\tau\bk=(k_x,-k_y,-k_z)$ acts in real space by the reflection $\tau\bx=(-x,y,z)$.
Below, we extract two-dimensional data associated with the reflection plane and apply the finite-range no-go theorem~\cite{Read-2017} to the bundles defined by their images, showing that finite-range symmetry matrices cannot realize a nonzero value of the invariant.

\begin{thm}
\label{thm:CII-finite-range}
Let $-\tau\bk=(k_x,-k_y,-k_z)$. Suppose that a unitary matrix $w:T^3\to U(2n)$ satisfies $w_\bk^\top=-w_{-\tau\bk}$ and has finite range in all directions, meaning that it can be written using a finite set $F\subset\Z^3$ as
\begin{align}
    w_\bk=\sum_{\bR\in F}w_\bR e^{i\bk\cdot\bR}.
    \label{eq:CII-finite-range}
\end{align}
Then
\begin{align}
    \mu[w]=0.
    \label{eq:CII-finite-mu-zero}
\end{align}
Consequently, if both $u,v$ satisfying Eq.~\eqref{eq:alt62-symmetry} have finite range, then
\begin{align}
    W_3[q]\in2\Z.
    \label{eq:CII-finite-nogo}
\end{align}
\end{thm}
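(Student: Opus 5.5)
The plan is to reduce $\mu[w]$ to a two-dimensional class-DIII-type strong invariant of a finite-range flat Hamiltonian, to which the no-go theorem of Ref.~\cite{Read-2017} applies. The invariant $\mu[w]$ is defined from the bulk integral over $V=T^2\times[0,\pi]$ together with the refined boundary terms ${\rm WZ}_\rho$ on $\Sigma_0,\Sigma_\pi$. The reflection $\tau\bx=(-x,y,z)$ fixes the $k_x$ direction, so the natural reduced data are the restrictions to the reflection-invariant planes in the $(k_y,k_z)$ directions. Following the remark after the theorem on $\mu$, I regard $k_x$ as a defect-circle parameter and $(k_y,k_z)$ as momenta. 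Then $w$ is the off-diagonal block of a class-DIII point-defect family, and $\mu[w]$ is its $\Z_2$ defect invariant. Equivalently, $\mu[w]$ can be read as the difference of two-dimensional class-DIII invariants on the planes $k_x=0$ and $k_x=\pi$, with the $k_x$ direction acting as a pump.

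The first step is a dimensional-reduction identity. For $k_x$ fixed at the inversion-fixed values $0,\pi$, the matrix $w_{0,k_y,k_z}$ satisfies $w^\top_{k}=-w_{-k}$ on $T^2$. The auxiliary Hamiltonian $\widetilde H=\begin{pmatrix}0&w\\ w^\dag&0\end{pmatrix}$ is then a two-dimensional class-DIII flat Hamiltonian, with $i\rho_yK$ and $\rho_xK$ as in Sec.~\ref{sec:2d-C}. I would prove
\begin{align}
 \mu[w]\equiv \nu_{\rm DIII}[w|_{k_x=0}]+\nu_{\rm DIII}[w|_{k_x=\pi}]\pmod2,
\end{align}
possibly up to weak-index terms built from the $m_\mu$ winding numbers. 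The route is to choose the boundary factorizations of Lemma~\ref{lem:alt62-factorization}, rewrite $\int_V\omega_3(w)$ by slicing in $k_x$ instead of $k_z$, and use Eq.~\eqref{eq:coh-wz-winding} together with the PW formula Eq.~\eqref{eq:alt62-global-PW}. The two-dimensional DIII invariant has the Pfaffian/time-reversal-polarization form Eq.~\eqref{eq:summary-pfaffian}, evaluated on the fixed lines $k_z=0,\pi$. The refined ${\rm WZ}_\rho$ on $\Sigma_0,\Sigma_\pi$ should localize onto exactly these Pfaffian ratios at the fixed circles $k_y\in\{0,\pi\}$, since $\rho$ fixes those circles and the factor $Q$ there is a Pfaffian square root.

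The second step handles locality. If $w$ has finite range in all directions, then each slice $w|_{k_x=0,\pi}$ is a finite Laurent polynomial in $e^{\pm ik_y},e^{\pm ik_z}$. The auxiliary Hamiltonian $\widetilde H$ is therefore an exactly flat finite-range two-dimensional class-DIII model, and Ref.~\cite{Read-2017} forces its strong $\Z_2$ invariant to vanish. This is the same argument used for class C in Sec.~\ref{sec:2d-C}. Any weak-type correction would be a product of determinant windings, and I would show that it cancels between $k_x=0$ and $k_x=\pi$. The reason is that $m_y,m_z$ of $\det w$ are independent of $k_x$ by continuity. This gives $\mu[w]=0$.

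The final step is the consequence. By the theorem just proved, $W_3[q]\equiv\mu[u]-\mu[v]=0$, so $W_3[q]\in2\Z$.

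The main obstacle is the reduction identity in the first step, that is, showing that the ${\rm WZ}_\rho$-defined $\mu$ equals the sum of the two slice DIII invariants. If the direct slicing computation becomes unwieldy, I would use additivity and the basis-change law ($\mu$ shifts by $W_3[V]$) to compare $w$ with a product form $w_\bk=Q_{-\tau\bk}JQ_\bk^\top$ on a neighborhood of each slice. Both sides would then reduce to $W_3[Q]$-type expressions computable by the Teo--Kane formula.
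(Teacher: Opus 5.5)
Your reduction step is where the argument breaks, and it cannot be repaired. The momentum action $-\tau\bk=(k_x,-k_y,-k_z)$ fixes $k_x$ pointwise; in particular, $k_x=0,\pi$ are not special values. Hence $w(k_x,p)^\top=-w(k_x,-p)$, with $p=(k_y,k_z)$, holds for \emph{every} $k_x$. The flat auxiliary Hamiltonian $\begin{pmatrix}0&w\\w^\dagger&0\end{pmatrix}$ is therefore a gapped two-dimensional class-DIII Hamiltonian along the whole $k_x$ circle, and its $\Z_2$ invariant does not depend on $k_x$. It follows that $\nu_{\rm DIII}[w|_{k_x=0}]+\nu_{\rm DIII}[w|_{k_x=\pi}]\equiv0$ for every continuous $w$. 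Since you also argue that the weak corrections cancel, your identity would give $\mu[w]=0$ with no locality hypothesis at all.

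That contradicts the paper's quasilocal example $u_\bk=q_{-\tau\bk}Jq_\bk^\top$ with odd $W_3[q]$. There the parity theorem with $v=J$ gives $\mu[u]=1$. Yet every slice $u(k_x,p)=Q_{-p}JQ_p^\top$, with $Q=q(k_x,\cdot)$, is globally factorized. The Pfaffian formula then gives $\nu[u;\gamma]\equiv m[Q]$, the determinant winding of $Q$ along the circle containing $\gamma$. This winding is the same on both invariant circles, so the slice DIII invariant vanishes. As the point-defect reading already indicates, $\mu$ is a pumping invariant of the loop in $k_x$ and cannot be localized onto slices. Your fallback of comparing with product forms near each slice has the same defect. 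Your argument also never uses finite range in the $x$ direction, and that is the hypothesis that actually matters.

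The missing idea is to use finite range along $k_x$ to put the pump into a trivial form. Multiplying by $e^{2i\ell k_x}$ leaves $\mu$ unchanged. After that, enlarge the unit cell by an odd factor in $x$, so that $W=A(p)+e^{iK}B(p)$. Unitarity gives $AB^\dagger=A^\dagger B=0$ and $AA^\dagger+BB^\dagger=I$. So $\operatorname{Ran}A$ and $\operatorname{Ran}B$ are complementary bundles over $T^2_p$, each carrying a time-reversal action squaring to $-1$ induced by $A$ or $B$. Read's theorem, applied via finite range in $p$, forces the Kane--Mele invariants of these \emph{image bundles}, not of the slices, to vanish. This gives time-reversal-compatible frames and a factorization $W=Q(p)D(K)Q(-p)^\top$ with $D=\operatorname{diag}(-J_A,-e^{iK}J_B)$. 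Then $W_3[Q]=0$ and $\mu[D]=0$, and the basis-change law $\mu[QDQ_{-\tau}^\top]\equiv\mu[D]+W_3[Q]$ gives $\mu[w]=\mu[W]=0$. Your final step, deducing $W_3[q]\in2\Z$ from $\mu[u]=\mu[v]=0$, is fine.
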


\begin{proof}
Applying Eq.~\eqref{eq:CII-3d-winding-parity} to $q=g_{-\tau}$ gives
\begin{align}
    \mu[g w g_{-\tau}^\top]-\mu[w]\equiv W_3[g]\pmod2.
    \label{eq:CII-read-congruence}
\end{align}
Replacing the matrix by $e^{2i\ell k_x}w$ leaves $\mu$ unchanged. We can therefore make all Fourier exponents in $k_x$ nonnegative.
Next, enlarge the unit cell in the $x$ direction by a sufficiently large odd factor, keeping the reflection center fixed.
The enlarged unit cell describes the same real-space operator and leaves the strong invariant $\mu$ unchanged. With $\mu[W]=\mu[w]$, the matrix can be written as
\begin{align}
    W(X,p)=A(p)+XB(p),\qquad X=e^{iK},\quad p=(k_y,k_z).
    \label{eq:CII-read-linear}
\end{align}
The intracell change of basis depends only on $k_x$, so the winding-number correction in Eq.~\eqref{eq:CII-read-congruence} also vanishes.

Unitarity and the transposition symmetry give
\begin{align}
    AA^\dag+BB^\dag&=A^\dag A+B^\dag B=I,\notag\\
    AB^\dag&=A^\dag B=0,\notag\\
    A_p^\top&=-A_{-p},\qquad B_p^\top=-B_{-p}.
\end{align}
Thus $P_A=AA^\dag$ and $P_B=BB^\dag$ are complementary orthogonal projectors,
and their images $E_A=\operatorname{Ran}A$ and $E_B=\operatorname{Ran}B$ define bundles of constant rank.
The bundle $E_A$ carries a time-reversal action squaring to $-1$, which sends $v\in(E_A)_p$ to $-A_{-p}\bar v\in(E_A)_{-p}$.
The bundle $E_B$ carries an analogous time-reversal action.
The no-go theorem~\cite{Read-2017} then forces the Kane--Mele invariant of each bundle to vanish.\footnote{One can also establish the vanishing invariant by constructing a class AII Hamiltonian directly.
For $M=A,B$, consider the matrices
\begin{align*}
    \widetilde H_M&=-\begin{pmatrix}&M\\M^\dag&\end{pmatrix},
    \qquad J_0=\begin{pmatrix}0&I\\-I&0\end{pmatrix}.
\end{align*}
The Hamiltonian satisfies $J_0\widetilde H_M(p)^*J_0^\dag=\widetilde H_M(-p)$. Zero eigenvalues may be present, but the projector $P_-=(\widetilde H_M^2-\widetilde H_M)/2$ onto the isolated $-1$ band has finite range. Read's theorem therefore applies to the gapped flat Hamiltonian $I-2P_-$.
The occupied bundle of the flat Hamiltonian consists of states $(v,M_p^\dag v)^\top/\sqrt2$ with $v \in \operatorname{Ran} M$.}
The two bundles therefore admit smooth global orthonormal frames $U_p,V_p$ that respect time reversal:
\begin{align}
    P_AU_p=U_p,\qquad U_p^\dag U_p&=I,\qquad
    A_pU_{-p}^*=-U_pJ_A,\notag\\
    P_BV_p=V_p,\qquad V_p^\dag V_p&=I,\qquad
    B_pV_{-p}^*=-V_pJ_B.
\end{align}
Here $J_A,J_B$ are the standard skew-symmetric matrices $i\sigma_y\otimes I$ with sizes matching the respective ranks; rank-zero components are omitted.
The two image bundles are orthogonal complements, so $Q(p)=(U_p,V_p)$ is unitary and gives the factorization
\begin{align}
    W(X,p)&=Q(p)D(X)Q(-p)^\top,\notag\\
    D(X)&=\operatorname{diag}(-J_A,-XJ_B).
\end{align}
The factor $Q$ depends only on $p$, so $W_3[Q]=0$. The factor $D$ depends only on $K$, so Eq.~\eqref{eq:alt62-nu} gives $\mu[D]=0$.
Equation~\eqref{eq:CII-read-congruence} then gives $\mu[w]=\mu[W]=0$.
\end{proof}

\subsubsection{Example}
If $W_3[q]$ is odd, at least one of $u,v$ cannot have finite range.
Exponentially decaying symmetry matrices can still realize an odd winding number.
For any globally defined unitary matrix $q:T^3\to U(2)$, set
\begin{align}
    v_\bk=J:=i\sigma_y,\qquad
    u_\bk=q_{-\tau\bk}Jq_\bk^\top.
    \label{eq:CII-model-sewing}
\end{align}
Then $u_\bk^\top=-u_{-\tau\bk}$, $v_\bk^\top=-v_{-\tau\bk}$, and $u_\bk q_\bk^*v_\bk^\dag=q_{-\tau\bk}$ hold.
A matrix $q$ with odd winding number therefore gives symmetry matrices with $\mu[u]-\mu[v]=1$.
To obtain $q$ with winding number $1$, take the standard $2 \times 2$ Dirac model
\begin{align}
    D_\bk&=d_0(\bk)1_2+i\sum_{j=1}^3d_j(\bk)\sigma_j,\notag\\
    d_0(\bk)&=-2+\cos k_x+\cos k_y+\cos k_z,\notag\\
    d_j(\bk)&=\sin k_j\quad(j=1,2,3),
\end{align}
and flatten the model.

\subsection{\texorpdfstring{$(d,d_\tau)=(3,2)$}{(d,d\_tau)=(3,2)}, class DIII}
\label{sec:3d-DIII}
Write $-\tau\bk=(k_x,k_y,-k_z)$. The symmetry condition is
\begin{align}
    (q_\bk u_\bk)^\top=-q_{-\tau\bk}u_{-\tau\bk}.
    \label{eq:alt63-symmetry}
\end{align}
With $X_\bk:=q_\bk u_\bk$, the symmetry condition reads $X_\bk^\top=-X_{-\tau\bk}$. We will prove $W_3[X]\in2\Z$ and hence show that $W_3[q]$ and $W_3[u]$ have the same parity.

Orient $T^3$ by $dk_x\wedge dk_y\wedge dk_z$ and take the half Brillouin zone $V=T^2\times[0,\pi]$. Transposition and momentum reversal by $-\tau$ both reverse the sign of the three-dimensional winding number, whereas multiplication by the constant $-1$ does not. Equation~\eqref{eq:alt63-symmetry} therefore makes the contributions from the two halves equal:
\begin{align}
    W_3[X]=2\int_V\omega_3(X).
\end{align}
With both boundary tori $T^2$ oriented by $dk_x\wedge dk_y$, we obtain
\begin{align}
 \begin{split}
  \frac12 W_3[X]
  &\equiv {\rm WZ}[X;\{k_z=\pi\}]\\
  &\quad-{\rm WZ}[X;\{k_z=0\}]\pmod1.
 \end{split}
 \label{eq:alt63-WZ-boundary}
\end{align}
On the fixed planes $k_z=0,\pi$, the condition $X_\bk^\top=-X_\bk$ forces the matrix size to be even, $2N$. The following lemma shows that the right-hand side of Eq.~\eqref{eq:alt63-WZ-boundary} vanishes.

\begin{lem}
\label{lem:alt63-skew-WZ}
If a smooth periodic unitary matrix $X:T^2\to U(2N)$ satisfies $X^\top=-X$, then
\begin{align}
    {\rm WZ}[X]\equiv0\pmod1.
    \label{eq:alt63-skew-WZ}
\end{align}
\end{lem}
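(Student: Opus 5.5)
The plan is to imitate the proof of Lemma~\ref{lem:CI-factorization} and Eq.~\eqref{eq:CI-WZ-PW}. We first factor the skew-symmetric unitary matrix globally on $T^2$ and then evaluate the WZ term through the PW formula~\eqref{eq:alt62-global-PW}.

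\emph{Factorization.} Pointwise, every skew-symmetric unitary matrix has a factorization $X_\bk=Q_\bk JQ_\bk^\top$ with $J=\bigoplus_{j=1}^N i\sigma_y$. The factor is unique up to $Q\mapsto QS$ with $S\in Sp(N)$, since the stabilizer of $J$ in $U(2N)$ is $Sp(N)$. The map $U(2N)\to U(2N)/Sp(N)$ is therefore a principal $Sp(N)$ bundle. Pulled back along $X$, it gives a principal $Sp(N)$ bundle over $T^2$. Because $Sp(N)$ is connected and simply connected ($\pi_0=\pi_1=0$), every such bundle over a two-dimensional complex is trivial. A smooth periodic global factor $Q:T^2\to U(2N)$ with $X=QJQ^\top$ therefore exists. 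Concretely, one can build $Q$ on the $1$-skeleton and extend over the $2$-cell. The loop obstruction lies in $\pi_1(Sp(N))=0$.

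\emph{PW evaluation.} Apply Eq.~\eqref{eq:alt62-global-PW} twice, to $(QJ)Q^\top$ and to $Q\cdot J$. Since $J$ is constant, ${\rm WZ}[QJ]={\rm WZ}[Q]$, ${\rm PW}[Q,J]=0$, and $m_\mu[QJ]=m_\mu[Q]$. Using ${\rm WZ}[Q^\top]=-{\rm WZ}[Q]$ and $m_\mu[Q^\top]=m_\mu[Q]$, we get
\[
{\rm WZ}[X]\equiv{\rm PW}[QJ,Q^\top]-\tfrac12\{m_xm_y-m_ym_x\}={\rm PW}[QJ,Q^\top]\pmod1 ,
\]
where $m_\mu=m_\mu[Q]$. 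With $A=Q^\dag dQ$, the PW two-form is $-\frac1{8\pi^2}\tr[J^{-1}AJ\,A^\top]$. Its transpose equals $-\tr[A J^\top A^\top{}^{\!\top}\ldots]$: transposing the trace of a product of two one-forms introduces a sign from anticommuting the forms. Using $J^\top=-J=J^{-1}$, the integrand should be shown to equal minus itself, so it vanishes pointwise. This is the analogue of the $\iota$-argument in the CI case, but now with no momentum action, so the cancellation must occur at the level of the form itself. Concretely, write $\tr[J^{-1}AJA^\top]=\tr[(J^{-1}AJ)\wedge A^\top]$. Then take the transpose inside the trace, $\tr[M\wedge N^\top]=-\tr[N\wedge M^\top]$ for matrix-valued one-forms, to obtain $-\tr[A\wedge J^\top A^\top J^{-\top}]=-\tr[J^{-1}AJA^\top]$ after cyclicity. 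Hence ${\rm pw}(QJ,Q^\top)=0$ and ${\rm WZ}[X]\equiv0$.

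\emph{Expected obstacle.} The main point to check carefully is the sign bookkeeping in the last identity, including signs from $J^\top=-J$. If the form does not vanish identically, the fallback is a deformation argument. The value of ${\rm WZ}[X]$ is invariant under homotopy within skew-symmetric maps only up to the integrality of $W_3$ over $T^2\times[0,1]$. We would then use that $[T^2,U(2N)/Sp(N)]$ is detected by the determinant windings. That would allow deforming $X$ to $\operatorname{diag}(e^{ik_x}J_1,e^{ik_y}J_1,1\ldots)$-type representatives, for which the WZ term is computed directly to be $0$.
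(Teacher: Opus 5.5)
Your proposal is correct. The sign check you flagged goes through: since $J^{-1}=J^\top=-J$, one has $JAJ^{-1}=J^{-1}AJ$. Your ``concretely'' computation therefore gives $\tr[J^{-1}AJ\,A^\top]=-\tr[J^{-1}AJ\,A^\top]$ pointwise, so ${\rm pw}(QJ,Q^\top)\equiv0$ and the fallback deformation argument is never needed. You can drop the garbled first attempt at the transpose. The winding correction cancels as you say because $m_\mu[QJ]=m_\mu[Q^\top]=m_\mu[Q]$.

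Your factorization step is the same as the paper's. The evaluation step differs. The paper does not use the global PW formula. It extends $Q$ to $\widetilde Q$ on a three-manifold $M$ with $\partial M=T^2$ and notes that $\widetilde X=\widetilde QJ\widetilde Q^\top$ is still skew-symmetric. Then $\omega_3(\widetilde X^\top)=-\omega_3(\widetilde X)$ and $\omega_3(-\widetilde X)=\omega_3(\widetilde X)$ force $\omega_3(\widetilde X)=0$ identically on $M$. That route is shorter and avoids both the $m_xm_y$ correction and the graded trace-sign bookkeeping.

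Your two-dimensional computation is the boundary shadow of the same fact. By the local identity $\omega_3(\widetilde QJ\widetilde Q^\top)=\omega_3(\widetilde Q)+\omega_3(\widetilde Q^\top)+d\,{\rm pw}(\widetilde QJ,\widetilde Q^\top)$, your pointwise vanishing of ${\rm pw}$ reproduces $\omega_3(\widetilde X)=0$. What your route adds is a direct parallel with the CI and CII calculations. It shows explicitly that the cross term $\int{\rm pw}(Q_\rho J,Q^\top)$, which is kept in ${\rm WZ}_\rho$, vanishes identically when $\rho$ is trivial.
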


\begin{proof}
Set $J:=\bigoplus_{j=1}^N i\sigma_y$. The pointwise factorization $X=QJQ^\top$ determines $Q$ up to right multiplication by $Sp(N)$.
Since $Sp(N)$ is connected and simply connected, there is no obstruction to choosing $Q$ periodically on $T^2$.
As in Sec.~\ref{sec:wz-pw}, extend $Q$ to a compact oriented three-dimensional manifold $M$ with $\partial M=T^2$ and set $\widetilde X=\widetilde QJ\widetilde Q^\top$. This extension also satisfies $\widetilde X=-\widetilde X^\top$, so Eq.~\eqref{eq:coh-w3-transform} gives $\omega_3(\widetilde X)=-\omega_3(\widetilde X)=0$. The defining integral of the WZ term therefore vanishes.
\end{proof}

Applying the lemma on both fixed planes reduces Eq.~\eqref{eq:alt63-WZ-boundary} to $W_3[X]\in2\Z$. Additivity of the winding number under multiplication then gives
\begin{align}
    W_3[q]=W_3[X]-W_3[u]\equiv W_3[u]\pmod2.
    \label{eq:alt63-winding-parity}
\end{align}

Thus, an odd $W_3[q]$ requires an odd $W_3[u]$, and the no-go theorem~\cite{Read-2017} rules out a finite-range realization of $u$.
The construction in Sec.~\ref{sec:oddmodel-chiral} gives an exponentially decaying model: choose $q$ by flattening a Dirac model with odd winding number and set $u=q^\dagger J$, $J=i\sigma_y$.

\subsection{\texorpdfstring{$(d,d_\tau)=(3,3)$}{(d,d\_tau)=(3,3)}, class BDI}
\label{sec:3d-BDI}
Consider the momentum-preserving BDI symmetry
\begin{align}
u_\bk q_\bk^*v_\bk^\dag=q_\bk,\qquad
u_\bk^\top= u_\bk, v_\bk^\top=v_\bk.
\label{eq:alt64-symmetry}
\end{align}
Assume that $q_\bk,u_\bk,v_\bk\in U(N)$ are smooth and periodic on all of $T^3$. We first define a $\Z_2$ invariant of the symmetry matrices $u,v$ that determines the parity of $W_3[q]$, then construct a model with an odd winding number.

\subsubsection{Spin Chern--Simons term and \texorpdfstring{$\Z_2$}{Z2} invariant}
For a family of symmetric unitary matrices $X:T^3\to U(N), X^\top =X$, let $\mathcal P_X$ be the real bundle defined by local Takagi factorizations $X=UU^\top$, with real Berry connection $A_X=\tfrac12[U^\dag dU+(U^\dag dU)^*]$. Choose a common auxiliary metric and fix a spin structure $\mathfrak s$. The class CI strong index associated with $\pi_3(R_{-1})=\Z_2$ is
\begin{align}
\zeta[X;\mathfrak s]
&:=\xi[A_X;\mathfrak s]-N\xi_{\rm triv}[\mathfrak s]\in \{0,1\}
\pmod 2,\label{eq:alt64-nu-definition_1}\\
\xi[A_X;\mathfrak s]
&=\frac{\eta_{iD_{A_X}}(0)+\dim\ker(iD_{A_X})}{2},
\label{eq:alt64-nu-definition_2}
\end{align}
as defined in Ref.~\cite{ShiozakiCI2026}.
Here $iD_{A_X}$ is the Dirac operator coupled to the real bundle $\mathcal P_X$, $\xi[A_X;\mathfrak s]$ is its reduced eta invariant, and $\xi_{\rm triv}$ is the reduced eta invariant for the trivial real line bundle.
The difference of reduced eta invariants is metric independent, and $\zeta[X;\mathfrak s]$ is quantized to $\Z_2$ on $T^3$.
In general, $\zeta[X;\mathfrak s]$ depends on the spin structure $\mathfrak s$.
The difference on the right-hand side of Eq.~\eqref{eq:alt64-nu-definition_1}, multiplied by $2\pi$, gives the spin CS term of the real bundle~\cite{DijkgraafWitten1990,JenquinSpinCS2005}:
\begin{align}
    {\rm CS}_{\rm spin}[A;\mathfrak s]
    =2\pi (\xi[A;\mathfrak s]-N\xi_{\rm triv}[\mathfrak s]) \quad \pmod{4\pi}. 
\end{align}
The spin CS term is defined for a general real Berry connection. In class CI, PHS quantizes its value to $0,2\pi$.

The properties of $\zeta$ needed below follow from Ref.~\cite{ShiozakiCI2026} and the definition just given.

For a fixed spin structure $\mathfrak s$, direct sums of symmetric unitary matrices $X,Y$ obey
\begin{align}
 \zeta[X\oplus Y;\mathfrak s]
 \equiv\zeta[X;\mathfrak s]+\zeta[Y;\mathfrak s]\pmod2.
 \label{eq:alt64-zeta-additivity}
\end{align}
A direct sum of local Takagi factors gives a direct sum of the real Berry connections and of the coupled Dirac operators. Additivity of the $\eta$ invariant, the kernel dimension, and the trivial-bundle subtraction then proves the direct-sum law from Eqs.~\eqref{eq:alt64-nu-definition_1} and \eqref{eq:alt64-nu-definition_2}.

To study changes of spin structure, let $\ell_j$ be the generator of $H^1(T^3;\Z_2)$ corresponding to the $k_j$ circle, and let $\mathfrak s_{\rm NS}$ denote the spin structure that is antiperiodic in every direction. The shift $\mathfrak s\mapsto\mathfrak s+\ell_j$ exchanges periodic and antiperiodic boundary conditions in the $j$th direction. With a flat metric, the nonzero eigenvalues of the trivial Dirac operator occur in positive and negative pairs. Zero modes occur only when all directions are periodic, so
\begin{align}
 \xi_{\rm triv}\!\left[\mathfrak s_{\rm NS}
       +\sum_{j=1}^3\epsilon_j\ell_j\right]
 &=\epsilon_1\epsilon_2\epsilon_3,
 \qquad \epsilon_j\in\{0,1\}.
 \label{eq:alt64-trivial-xi-spin}
\end{align}

For any $\ell\in H^1(T^3;\Z_2)$, write
$a=w_1(\mathcal P_X)$. The change in the invariant is
\begin{align}
 &\zeta[X;\mathfrak s+\ell]-\zeta[X;\mathfrak s]\notag\\
 &\quad\equiv\langle w_2(\mathcal P_X)\cup\ell,[T^3]\rangle\notag\\
 &\qquad-\xi_{\rm triv}[\mathfrak s+\ell+a]
          +\xi_{\rm triv}[\mathfrak s+\ell]\notag\\
 &\qquad+\xi_{\rm triv}[\mathfrak s+a]
          -\xi_{\rm triv}[\mathfrak s]\pmod2.
 \label{eq:alt64-zeta-spin-change}
\end{align}
The four trivial-bundle terms can be evaluated using Eq.~\eqref{eq:alt64-trivial-xi-spin}.
These four terms cancel when $w_1(\mathcal P_X)=0$, but must be retained in general.

Multiplication by a phase factor changes the invariant as follows. For $n_j\in\Z$ and $\ell=\sum_j n_j\ell_j$,
\begin{align}
 \zeta[e^{i\sum_j n_jk_j}X;\mathfrak s]
 &\equiv\zeta[X;\mathfrak s+\ell]\notag\\
 &\quad+N\bigl(\xi_{\rm triv}[\mathfrak s+\ell]
                     -\xi_{\rm triv}[\mathfrak s]\bigr)\pmod2.
 \label{eq:alt64-zeta-phase-shift}
\end{align}
The phase factor replaces the local Takagi factor $U$ by $e^{i\sum_j n_jk_j/2}U$. The real Berry connection remains unchanged; only the gluing around the $j$th direction acquires the factor $(-1)^{n_j}$. For the coupled Dirac operator, this twist is equivalent to the spin-structure shift $\mathfrak s\mapsto\mathfrak s+\ell$. The trivial-bundle subtraction in Eq.~\eqref{eq:alt64-nu-definition_1} produces the correction in Eq.~\eqref{eq:alt64-zeta-phase-shift}. A phase factor can therefore be accounted for solely by a spin-structure shift only when the correction vanishes or cancels.

The invariant also vanishes under a simple condition on momentum dependence: if $X$ is independent of $k_j$ and $\mathfrak s$ is antiperiodic in the $j$th direction, then
\begin{align}
 \zeta[X;\mathfrak s]=0\pmod2.
 \label{eq:alt64-zeta-two-variable}
\end{align}
The argument for fully antiperiodic boundary conditions in the cited work on the CI invariant also applies when only the $j$th direction is antiperiodic. With a flat product metric, the eigenvalues in the $j$th direction are nonzero half-integers. The full product Dirac spectrum is consequently paired between positive and negative energies, with no zero modes. The trivial-bundle contribution vanishes as well, by Eq.~\eqref{eq:alt64-trivial-xi-spin}.

Finally, for any smooth $g:T^3\to U(N)$, the invariant satisfies
\begin{align}
\zeta[gXg^\top;\mathfrak s]-\zeta[X;\mathfrak s]
\equiv W_3[g]\pmod 2.
\label{eq:alt64-nu-transformation}
\end{align}
To derive this transformation law, identify the real bundles, interpolate between their real connections, and apply the Atiyah--Patodi--Singer index theorem~\cite{AtiyahPatodiSinger1975I} on $T^3\times[0,1]$. The symmetry condition~\eqref{eq:alt64-symmetry} can be written as $u_\bk=q_\bk v_\bk q_\bk^\top$. Substituting $X=v$, $g=q$ into Eq.~\eqref{eq:alt64-nu-transformation} then gives the parity constraint
\begin{align}
W_3[q]\equiv\zeta[u;\mathfrak s]-\zeta[v;\mathfrak s]\pmod 2.
\label{eq:alt64-parity}
\end{align}
Whenever a gapped $q$ exists, the symmetry matrices $u,v$ therefore determine the parity of its winding number. Their invariant difference equals $W_3[q]\bmod2$, so the difference is independent of the spin structure. In particular, $u=v=1_N$ implies $q:T^3\to O(N)$ and $W_3[q]\in2\Z$.

\subsubsection{Realization with finite-range symmetry actions}
Choose the symmetry matrices
\begin{align}
u_\bk&=\operatorname{diag}(1,e^{ik_1},e^{ik_2},e^{i(k_1+k_2)}),\notag\\
v_\bk&=e^{ik_3}u_\bk.
\label{eq:alt64-torus-symmetries}
\end{align}
For the spin structure $\mathfrak s_{\rm NS}$ that is antiperiodic in all directions, the real-line-bundle calculation in Ref.~\cite{ShiozakiCI2026} gives
\begin{align}
&\zeta[e^{i(n_1k_1+n_2k_2+n_3k_3)};\mathfrak s_{\rm NS}]\notag\\
&\quad\equiv n_1n_2n_3\pmod2,
\qquad (n_1,n_2,n_3)\in\Z^3.
\end{align}
Additivity under direct sums, Eq.~\eqref{eq:alt64-zeta-additivity}, then gives
\begin{align}
\zeta[u;\mathfrak s_{\rm NS}]=0,\qquad
\zeta[v;\mathfrak s_{\rm NS}]=1.
\end{align}
Every gapped model with these symmetry matrices must therefore have an odd winding number.

To construct a model $q_\bk$ with these symmetries, denote the two-dimensional Hamiltonian in Eq.~\eqref{eq:D_four_band_H} by $H_{k_1,k_2}$. It satisfies $u_\bk H_{k_1,k_2}^* = -H_{k_1,k_2}u_\bk$. Define
\begin{align}
D_\bk
&=\frac{1+e^{-ik_3}}2\,1_4
  +\frac{1-e^{-ik_3}}2\,H_{k_1,k_2},\notag\\
\mathcal H_\bk
&=\begin{pmatrix}0&D_\bk^\dag \\D_\bk&0\end{pmatrix}.
\label{eq:alt64-finite-range-model}
\end{align}
Take $t>0$ and $t_2\ne0$. The symmetry in Eq.~\eqref{eq:D_four_band_C} gives
\begin{align}
u_\bk D_\bk^*v_\bk^\dag&=D_\bk,\notag\\
D_\bk^\dag D_\bk
&=\cos^2\frac{k_3}{2}\,1_4
  +\sin^2\frac{k_3}{2}\,H_{k_1,k_2}^2.
\end{align}
Thus $\mathcal H$ is fully gapped, with minimum positive energy $\min\{1,t,2|t_2|\}>0$.

We now show that the winding number of $D_\bk$ equals the Chern number of $H_{k_1,k_2}$.
Deform $H_{k_1,k_2}$ to $\operatorname{sgn}H_{k_1,k_2}$ without closing the gap. The matrix $D_\bk$ remains invertible, as follows from the expression for $D_\bk^\dag D_\bk$ above, so the winding number is unchanged. The positive- and negative-energy projectors $P_+,P_-$ of the flattened $H_{k_1,k_2}$ give
\begin{align}
q_\bk=P_+(k_1,k_2)+e^{-ik_3}P_-(k_1,k_2).
\label{eq:alt64-suspension-q}
\end{align}
Substituting this unitary into the winding-number formula and integrating over $k_3$ gives
\begin{align}
W_3[q]
&=\frac{i}{2\pi}\int_{T^2}\Tr(P_-dP_-dP_-)
={\rm ch}_1[H].
\label{eq:alt64-model-winding}
\end{align}

\begin{figure*}[t]
\centering
\includegraphics[width=0.6\textwidth]{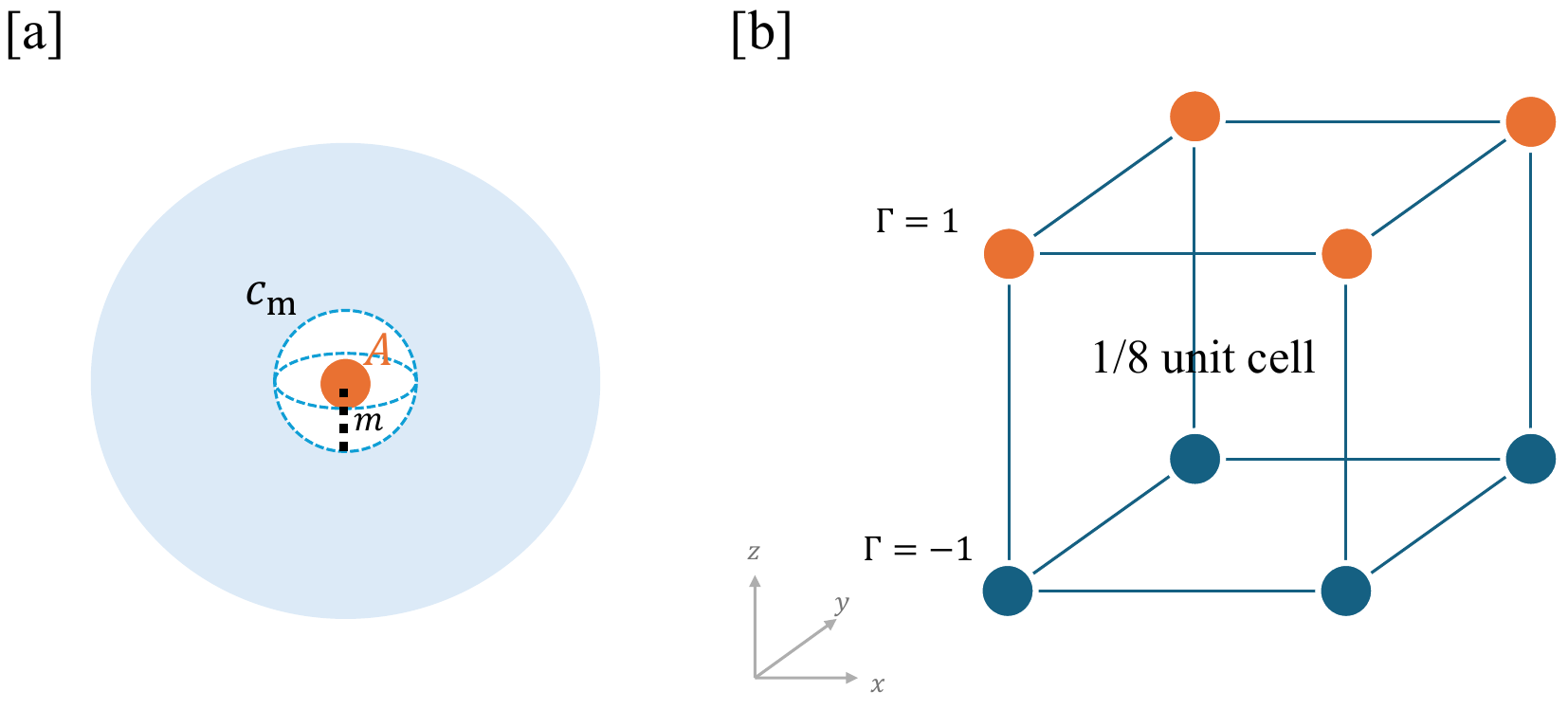}
\caption{Real-space SPT--LSM mechanism for three-dimensional class BDI.
[a] The shaded region represents a class AIII bulk with an odd winding number in a unit-monopole background.
The monopole binds a zero mode $c_{\rm m}$, indicated schematically by the blue dashed sphere.
The orange dot is a microscopic orbital $A$ at the inversion center whose chiral eigenvalue is opposite to that of the bound state.
The black dashed bond represents a symmetry-preserving hybridization with amplitude $m$ that gaps the orbital and the bound state.
[b] Periodic orbital arrangement for the symmetry matrices in Eq.~\eqref{eq:alt64-torus-symmetries}.
Each of the eight inequivalent inversion centers in a unit cell hosts one complex-fermion orbital.
Orange dots in the upper plane have $\Gamma=+1$, and blue dots in the lower plane have $\Gamma=-1$.
The displayed cube spans half a lattice period in each direction.
}
\label{fig:3d-BDI-SPT-LSM}
\end{figure*}

\subsubsection{Relation to SPT--LSM theorems}
\label{sec:3d-BDI-SPT-LSM}
The requirement of an odd winding number can also be interpreted as an SPT--LSM-type constraint on free fermions.
We develop this real-space interpretation using the number-conserving realization in Appendix~\ref{sec:summary-manybody}.

Taken together, the diagonal entries of $u,v$ in Eq.~\eqref{eq:alt64-torus-symmetries} contain each factor $e^{i\bm n\cdot\bk}$ with $\bm n\in\{0,1\}^3$ exactly once.
These diagonal symmetry matrices can be realized by antiunitary spatial inversion $PT$ acting on orbitals at positions $\bm n/2$ within the unit cell. The momentum factors record the cell displacements under inversion. The action is therefore non-cellwise in the fixed Bloch basis, even though it simply maps each orbital to its geometric inversion image.
The model can thus be viewed as placing one complex-fermion orbital at each of the eight inequivalent inversion centers in a unit cell, with the $n_3=0,1$ orbitals in opposite chiral subspaces.
This orbital arrangement is shown in Fig.~\ref{fig:3d-BDI-SPT-LSM}[b].
When each orbital is isolated, the chiral condition fixes its onsite energy to zero.
An isolated orbital can be gapped by coupling it to a zero mode with the opposite chiral eigenvalue, but no such partner is present at the same inversion center.
To explain how the surrounding three-dimensional phase supplies a partner, we now consider real space $\R^3$ with a single inversion center and impose no lattice translations.

For a three-dimensional class AIII insulator with cellwise chiral symmetry, the axion angle is $\theta\equiv\pi W_3[q]\pmod{2\pi}$~\cite{RyuSchnyderFurusakiLudwig2010}.
For an odd winding number, the Witten effect assigns to a unit monopole of the background $U(1)$ field a bound charge $Q/e\equiv1/2\pmod1$~\cite{RosenbergFranz2010,AokiFukayaKanKoshinoMatsuki2023}.
If chiral symmetry is preserved even at the monopole core, the positive- and negative-energy states pair with the same local density. Zero modes must then carry the charge deviation from half filling.
An odd number of zero modes therefore appears at a unit monopole.

We must also check compatibility with $(PT)^2=1$.
As a representative with winding number of absolute value one, we can choose a Dirac model of a conventional three-dimensional topological insulator with chiral and inversion symmetries.
Its antiunitary inversion acts as $(PT\psi)(\bx)=i\sigma_y \psi(-\bx)^*$ and satisfies $(PT)^2=-I$.
A monopole with magnetic charge $m\in\Z$ contributes an additional factor $(-1)^m$ through the gauge gluing under inversion, changing the square to $(PT)^2=(-1)^{m+1}I$.\footnote{Cover a sphere around the monopole with northern and southern gauge patches. The overlap relation is $\psi_N(\bx)=e^{im\phi}\psi_S(\bx)$.
Here $\phi$ is the azimuthal angle of $\bx$. The inversion image $-\bx$ has azimuthal angle $\phi+\pi$, so
$\psi_S(-\bx)^*=e^{im(\phi+\pi)}\psi_N(-\bx)^*$.
For the transformed wave function $\psi'=PT\psi$, define the northern component by $\psi'_N(\bx)=i\sigma_y \psi_S(-\bx)^*$.
Requiring $\psi'$ to obey the original gluing condition $\psi'_S=e^{-im\phi}\psi'_N$ fixes the southern component:
$\psi'_S(\bx)=e^{-im\phi}i\sigma_y \psi_S(-\bx)^*
 =e^{-im\phi}e^{im(\phi+\pi)}i\sigma_y \psi_N(-\bx)^*
=(-1)^m i\sigma_y \psi_N(-\bx)^*$.
Applying the antiunitary operation twice gives $((PT)^2\psi)_N=i\sigma_y [(-1)^m i\sigma_y \psi_N^*]^*=-(-1)^m\psi_N$.}
Thus a class AIII insulator with odd winding number is compatible with $(PT)^2=1$ in a unit-monopole background.
In Fig.~\ref{fig:3d-BDI-SPT-LSM}[a], the monopole-bound state is chosen to have the opposite chiral eigenvalue from the orbital at the center. The two modes can then be coupled and gapped while preserving the symmetry.

For a single inversion center, this cancellation of zero modes corresponds to a nontrivial third differential of the real-space AHSS,
\begin{align}
 d^3_{3,-3}:E^3_{3,-3}\cong\Z
 &\longrightarrow E^3_{0,-1}\cong\Z_2,\notag\\
 n&\longmapsto n\bmod2,
 \label{eq:3d-BDI-SPT-LSM-differential}
\end{align}
as in Ref.~\cite{Shiozaki-Xiong-Gomi}.
The $\Z$ on the left corresponds to the surrounding three-dimensional class AIII phase. The $\Z_2$ on the right records the parity of the zero modes left at the inversion center after cancellation by lower-dimensional cells.
For one zero mode at the center, the cancellation condition is
\begin{align}
 (d^3_{3,-3})^{-1}(1)=2\Z+1.
 \label{eq:3d-BDI-SPT-LSM-preimage}
\end{align}
An odd winding number is therefore required to gap the central zero mode by coupling it to the mode supplied by the surrounding three-dimensional phase.
Figure~\ref{fig:3d-BDI-SPT-LSM}[b] shows the periodic orbital arrangement, with the local parity condition applying at each inversion center. The AHSS in Eq.~\eqref{eq:3d-BDI-SPT-LSM-differential} concerns a single inversion center; it does not identify the $E^3$ groups of the full periodic system.

\section{\texorpdfstring{\NoCaseChange{Four dimensions}}{Four dimensions}}
\label{sec:c2f}
In this section, we study the parity of ${\rm ch}_2[H]$.
We assume that $H_\bk$ and the symmetry matrices are smooth and periodic on $T^4$, and that $H_\bk$ has a gap at zero energy. 
We denote the occupied bundle by $E$.
The fixed Bloch basis trivializes the full state bundle.
We use the orientation $dk_1\wedge dk_2\wedge dk_3\wedge dk_4$ for integration.

For classes C and D, we reduce the problem to a three-dimensional winding number and apply the symmetry invariants introduced above.
For classes AI and AII, evaluating ${\rm ch}_2[H]$ directly produces a Chern--Simons term ${\rm CS}_3$, whose dependence on the spin structure complicates the proof.
We therefore first compute the second Chern class $\langle c_2(E),[T^4]\rangle$ and then recover ${\rm ch}_2$ by including the contribution from the first Chern class.
The spin-structure dependence of the Chern--Simons term is discussed in Appendix~\ref{sec:c2f-spin-note}.

\subsection{Common preliminaries}
\label{sec:c2f-common}
\subsubsection{Chern character and second Chern class}
For a local orthonormal frame $\Phi$ of the occupied states and the projector $P=\Phi\Phi^\dag$, define
\begin{align}
    A=\Phi^\dag d\Phi,\qquad F(A)=dA+A^2.
\end{align}
The connection $A$ is anti-Hermitian. We use the following normalization and define the four-form $q_4(A)$ representing $c_2$:
\begin{align}
{\rm ch}_2[H]&=-\frac1{8\pi^2}\int_{T^4}\tr[F(A)^2],
    \label{eq:c2f-character-normalization}\\
    \begin{split}
    \langle c_2(E),[T^4]\rangle&=\int_{T^4}q_4(A),\\
    q_4(A)&=\frac1{8\pi^2}\{\tr[F(A)^2]-(\tr F(A))^2\}.
    \end{split}
    \label{eq:c2f-normalization}
\end{align}
Either quantity can be evaluated with any $U(N)$ connection on the occupied bundle.
The two quantities are related by
\begin{align}
    {\rm ch}_2[H]=\frac12\langle c_1(E)^2,[T^4]\rangle-\langle c_2(E),[T^4]\rangle.
    \label{eq:c2f-c1-rel}
\end{align}
For a matrix-valued $p$-form $X$ and a matrix-valued $q$-form $Y$, recall that
\begin{align}
    \tr(XY)=(-1)^{pq}\tr(YX),\qquad
    (XY)^\top=(-1)^{pq}Y^\top X^\top.
    \label{eq:c2f-graded}
\end{align}

\subsubsection{Time-reversal-symmetric connection}
\label{sec:c2f-tr-connection}
Let $-\tau$ denote the momentum action in Eq.~\eqref{eq:summary-momentum-action}, and write the class-AI and class-AII symmetries as $T_\bk H_\bk^*T_\bk^\dag=H_{-\tau\bk}$ and $T_\bk^\top=\pm T_{-\tau\bk}$. The occupied bundle satisfies
\begin{align}
    (-\tau)^*E\simeq\overline E,\qquad (-\tau)^*c_1(E)=-c_1(E).
    \label{eq:c2f-c1-symmetry}
\end{align}
In three or fewer dimensions, a complex bundle with $c_1=0$ is trivial. Whether the bundle admits a global frame on a half-domain $T^3\times I$ or $T^2\times I$ can therefore be determined from $c_1$ on $T^3$ or on the subspace $T^2$, respectively.

Using local frames $\Phi_i$, set
\begin{align}
    a_i=\Phi_i^\dag T^\top dT^*\Phi_i,\qquad
    B_i=A_i+\frac12a_i.
    \label{eq:c2f-tr-average}
\end{align}
The frame change $\Phi_j=\Phi_i g_{ij}$ gives $a_j=g_{ij}^\dag a_i g_{ij}$, so $B$ also defines a connection on the occupied bundle. Choosing $\Psi_{-\tau\bk}=T_\bk\Phi_\bk^*$ on the reflected patch yields
\begin{align}
\begin{gathered}
    (-\tau)^*A^\Psi=(A^\Phi+a^\Phi)^*,\qquad
    (-\tau)^*a^\Psi=-(a^\Phi)^*,\\
    (-\tau)^*B^\Psi=(B^\Phi)^*.
    \end{gathered}
    \label{eq:c2f-antiunitary}
\end{align}
Here, $(-\tau)^*$ includes the pullback of differential forms. The connection $B$ is thus preserved by time reversal, with $(-\tau)^*q_4(B)=q_4(B)$. The existence of equivariant connections and their construction by averaging were introduced in Ref.~\cite{DeNittisGomi2016}.

\subsubsection{Boundary formula for \texorpdfstring{$c_2$}{c2}}
Define the local three-form associated with $c_2$ by
\begin{align}
\begin{gathered}
    q_3(A)=\frac1{8\pi^2}\left\{\tr\left(A\,dA+\frac23A^3\right)
                              -\tr A\,d\tr A\right\},\\
    dq_3(A)=q_4(A).
    \end{gathered}
    \label{eq:c2f-local-CS}
\end{align}
We have $q_4(A^*)=q_4(A)$ and $q_3(A^*)=q_3(A)$. Extend the bundle and its connection from a closed oriented three-manifold $Y$ to an oriented four-manifold $Z$, and define
\begin{align}
    \mathcal Q_3[A]:=\int_Zq_4(\widetilde A)\pmod1,\qquad \partial Z=Y.
    \label{eq:c2f-CS}
\end{align}
\footnote{An extension exists because $\Omega^{SO}_3(BU(n))=0$. The values obtained from two extensions differ by the $c_2$ number of a closed four-manifold, which is an integer. No spin structure is needed.}
If a global frame exists, then $\mathcal Q_3[A]\equiv\int_Yq_3(A)\pmod1$.

With $\omega_3$ defined in Eq.~\eqref{eq:coh-w3}, the gauge transformation $A^g=g^{-1}Ag+g^{-1}dg$ gives
\begin{align}
    \alpha_2(A,g)&=\frac1{8\pi^2}
        \{\tr(A\,dg\,g^{-1})-\tr A\,d\log\det g\},\notag\\
    q_3(A^g)-q_3(A)&=-\omega_3(g)+d\alpha_2(A,g).
    \label{eq:c2f-gauge}
\end{align}
The one-form $d\log\det g$ is independent of the branch chosen for the logarithm.

Let $X_+=T^3\times[0,\pi]_{k_4}$, and orient both boundaries $Y_b=\{k_4=b\}$ ($b=0,\pi$) by $dk_1\wedge dk_2\wedge dk_3$. Since $\partial X_+=Y_0-Y_\pi$, we have
\begin{align}
    \int_{X_+}q_4(A)
       \equiv \mathcal Q_3[A_0]-\mathcal Q_3[A_\pi]\pmod1.
    \label{eq:c2f-boundary}
\end{align}

\subsubsection{Boundary quantity in two patches}
Divide $T^3$ into $V_N=T^2\times[0,\pi]_{k_3}$ and $V_S=T^2\times[-\pi,0]_{k_3}$.
Assume that the bundle admits a global frame on each patch $V_N,V_S$.
On the gluing surfaces $S_z=T^2 \times \{k_3=z \in \{0,\pi\}\}$, write the relation between the frames as $\Phi^S=\Phi^Ng_z$. Orient $S_z$ by $dk_1\wedge dk_2$.

Applying the gauge-transformation formula \eqref{eq:c2f-gauge} on each patch gives the decomposition
\begin{align}
    \mathcal Q_3[A]
       &\equiv\int_{V_N}q_3(A^N)+\int_{V_S}q_3(A^S)\notag\\
       &\quad+\left[\int_{S_z}\alpha_2(A^N,g_z)-{\rm WZ}[g_z]\right]_{z=0}^{\pi}
                    \pmod1.
    \label{eq:c2f-gluing}
\end{align}
Here, we use the notation $[u_z]_{z=0}^{\pi}=u_\pi-u_0$.
The signs of the two boundary contributions follow from $\partial V_N=S_\pi-S_0$.

\subsubsection{Particle-hole symmetry and reduction to three dimensions}
\label{sec:c2f-ph-transport}
For classes C and D, we express the four-dimensional Chern character as the winding number of a clutching matrix. The construction combines parallel transport of the occupied projector with gluing to its PHS-related inverse image~\cite{MonacoPeluso2023}. Let $y=(k_1,k_2,k_3)\in Y = T^3$ and $t=k_4$. Write the occupied projector as $P_t(y)$ and set $Q_0=1-P_0$. The symmetry conditions are
\begin{align}
    C_t(y)P_t(y)^*C_t(y)^\dag&=1-P_{-t}(cy),\notag\\
    C_t(y)^\top&=\varepsilon C_{-t}(cy).
    \label{eq:c2f-ph-projection-symmetry}
\end{align}
For class C, $cy=(k_1,-k_2,-k_3)$ and $\varepsilon=-1$; for class D, $c=\mathrm{id}$ and $\varepsilon=1$. If the full state space has dimension $2N$, the occupied and unoccupied subspaces each have dimension $N$.

On the positive half-domain, define parallel transport by solving~\cite{Kato1950}
\begin{align}
    K_t=[\partial_tP_t,P_t],\qquad
    \partial_tU_t^+=K_tU_t^+,\qquad U_0^+=1.
    \label{eq:c2f-ph-kato}
\end{align}
Because $P_t(y)$ is globally smooth in the fixed Bloch basis, uniqueness and smooth parameter dependence of solutions to the ordinary differential equation ensure that $U_t^+(y)$ is smooth and periodic on $Y$. The identities $K_t^\dag=-K_t$ and $[K_t,P_t]=\partial_tP_t$ then give
\begin{align}
    (U_t^+)^\dag U_t^+=1,\qquad P_t=U_t^+P_0(U_t^+)^\dag.
    \label{eq:c2f-ph-intertwining}
\end{align}
The construction requires no choice of frame for the occupied bundle and does not replace $P_0(y)$ with a constant projector.\footnote{Restricted to the occupied bundle, the parallel transport is the Wilson line of the Berry connection. In local frames, $w_t=\Phi_t^\dag U_t^+\Phi_0$ satisfies $\partial_tw_t=-(\Phi_t^\dag\partial_t\Phi_t)w_t$.}
Using PHS, we may choose the transport on the negative half-domain as
\begin{align}
\begin{gathered}
    U_t^-(y)=C_t(cy)U_t^+(cy)^*C_0(cy)^\dag,\\
    P_{-t}=U_t^-P_0(U_t^-)^\dag,\qquad U_0^-=1.
    \end{gathered}
    \label{eq:c2f-ph-negative}
\end{align}
Since $P_\pi=P_{-\pi}$, the matrices
\begin{align}
\begin{aligned}
    h&=U_\pi^+P_0+U_\pi^-Q_0,\\
    g&=(U_\pi^-)^\dag h\\
     &=(U_\pi^-)^\dag U_\pi^+P_0+Q_0
    \end{aligned}
    \label{eq:c2f-ph-hg}
\end{align}
are both unitary on $Y$. The maps $U_\pi^\pm$ send the initial occupied subspace to the same final occupied subspace, and likewise for the unoccupied subspace. Since $U_t^-$ is a homotopy from the identity,
\begin{align}
    W_3[h]=W_3[g].
    \label{eq:c2f-ph-winding}
\end{align}
Moreover, $C_\pi(y)U_\pi^\pm(y)^*=U_\pi^\mp(cy)C_0(y)$ and the exchange of the occupied and unoccupied subspaces by $C_0$ give
\begin{align}
    C_\pi(y)h(y)^*=h(cy)C_0(y).
    \label{eq:c2f-ph-sewing}
\end{align}
The resulting relation is the symmetry condition in one lower dimension.

Finally, we relate ${\rm ch}_2[H]$ to $W_3[h]$.
Extending the holonomy by the identity on the orthogonal complement gives a global unitary matrix~\cite{TradlerWilsonZeinalian2016}.
Even when the occupied bundle is nontrivial, the augmented system $H(y,t)\oplus[-H(y,0)]$ has a continuous global frame on the cut-open cylinder:
\begin{align}
    \begin{aligned}
    \widetilde\Phi_{\pm t}
       &=\begin{pmatrix}U_t^\pm P_0\\Q_0\end{pmatrix},\\
    \widetilde\Phi_{\pm t}^\dag\widetilde\Phi_{\pm t}&=1_{2N},\\
    \widetilde\Phi_\pi&=\widetilde\Phi_{-\pi}g.
    \end{aligned}
    \label{eq:c2f-ph-augmented-frame}
\end{align}
The frame can be smoothed near $t=0$ while keeping both ends fixed. The added subsystem depends only on $Y$ and contributes zero to ${\rm ch}_2$. Gluing the two ends therefore gives
\begin{align}
    {\rm ch}_2[H]=-W_3[g]=-W_3[h].
    \label{eq:c2f-ph-clutching-character}
\end{align}
It remains to evaluate the three-dimensional winding number $W_3[h]$ subject to the symmetry~\eqref{eq:c2f-ph-sewing}.

\subsection{\texorpdfstring{$(d,d_\tau)=(4,0)$}{(d,d\_tau)=(4,0)}, class AI}
\label{sec:c2f-ai40}
Let $-\tau\bk=-\bk$, and assume
\begin{align}
    T_\bk H_\bk^*T_\bk^\dag=H_{-\tau\bk},\qquad
    T_\bk^\top=T_{-\tau\bk}.
    \label{eq:c2f-ai40-symmetry}
\end{align}
We have $(-\tau)^*c_1(E)=-c_1(E)$, whereas $(-\tau)^*$ acts as the identity on $H^2(T^4;\Z)$. This cohomology group is torsion-free, so $c_1(E)=0$ and hence
\begin{align}
    {\rm ch}_2[H]=-\langle c_2(E),[T^4]\rangle.
    \label{eq:c2f-ai40-character}
\end{align}
It therefore suffices to show that $\langle c_2(E),[T^4]\rangle$ is even.

Since $c_1(E)=0$, we can choose a global frame $\Phi$ of occupied states on the half-domain $X_+=T^3\times[0,\pi]_{k_4}$. The symmetric connection $B=A+\frac12a$ of Sec.~\ref{sec:c2f-tr-connection} satisfies $(-\tau)^*q_4(B)=q_4(B)$. As $-\tau$ preserves the four-dimensional orientation, Stokes' theorem gives
\begin{align}
    \langle c_2(E),[T^4]\rangle
       =2\int_{X_+}q_4(B)
       =2\int_Y\{q_3(B_0)-q_3(B_\pi)\}.
    \label{eq:c2f-ai40-boundary}
\end{align}
We use the restrictions of the same frame $\Phi$ at both ends, so the boundary formula holds as an equality of real numbers.

Let $\iota\bk=-\bk$ denote inversion on the boundary $Y=T^3$, and define the sewing matrices at $s=0,\pi$ by
\begin{align}
    T_{\bk,s}\Phi_{\bk,s}^*=\Phi_{-\bk,s}g_s(\bk),\qquad
    g_s(\bk)^\top=g_s(-\bk).
    \label{eq:c2f-ai40-sewing}
\end{align}
The symmetric connection satisfies $B_s^*=(\iota^*B_s)^{g_s}$. The total derivative in Eq.~\eqref{eq:c2f-gauge} integrates to zero. Since $\iota$ reverses orientation,
\begin{align}
    \int_Yq_3(B_s)=-\int_Yq_3(B_s)-W_3[g_s].
\end{align}
Since $g_s$ satisfies the class-CI condition in Sec.~\ref{sec:3d-CI}, we have $W_3[g_s]\in2\Z$. Thus,
\begin{align}
    {\rm ch}_2[H]=-\langle c_2(E),[T^4]\rangle
       =W_3[g_0]-W_3[g_\pi]\in2\Z.
    \label{eq:c2f-ai40-final}
\end{align}

Thus, an odd Chern character ${\rm ch}_2[H]$ is impossible in the finite-dimensional framework considered here, even with a momentum-dependent time-reversal symmetry matrix.
The relation between this constraint and SPT--LSM theorems remains open: a formulation of SPT--LSM theorems for internal symmetries in terms of the real-space AHSS has yet to be developed.

\subsection{\texorpdfstring{$(d,d_\tau)=(4,1)$}{(d,d\_tau)=(4,1)}, class C}
\label{sec:c2f-c41}
Let $-\tau\bk=(k_1,-k_2,-k_3,-k_4)$, and assume
\begin{align}
    C_\bk H_\bk^*C_\bk^\dag=-H_{-\tau\bk},\qquad
    C_\bk^\top=-C_{-\tau\bk}.
    \label{eq:c2f-c41-symmetry}
\end{align}
Set $C_s=C|_{k_4=s}$ ($s=0,\pi$). The invariant $\mu$ defined in Eq.~\eqref{eq:alt62-nu} determines the parity through
\begin{align}
    {\rm ch}_2[H]\equiv\mu[C_0]-\mu[C_\pi]\pmod2.
    \label{eq:c2f-c41-result}
\end{align}

To derive the parity relation, take $c(k_1,k_2,k_3)=(k_1,-k_2,-k_3)$. The matrix $h$ constructed in Sec.~\ref{sec:c2f-ph-transport} satisfies
\begin{align}
    C_\pi h^*C_0^\dag=h\circ c,\qquad
    C_b^\top=-C_b\circ c.
    \label{eq:c2f-c41-CII-sewing}
\end{align}
These relations are the three-dimensional class-CII condition \eqref{eq:alt62-symmetry} with $(q,u,v)=(h,C_\pi,C_0)$. The three-dimensional parity formula therefore gives $W_3[h]\equiv\mu[C_\pi]-\mu[C_0]\pmod2$. Combining this congruence with ${\rm ch}_2[H]=-W_3[h]$ in Eq.~\eqref{eq:c2f-ph-clutching-character} proves the claimed parity relation.

\subsubsection{Locality and examples}
\label{sec:c2f-c41-locality}
The restrictions of a finite-range $C_\bk$ to $k_4=0,\pi$ are finite-range matrices $C_0,C_\pi$ satisfying the three-dimensional class-CII condition. Theorem~\ref{thm:CII-finite-range} then gives $\mu[C_0]=\mu[C_\pi]=0$, and Eq.~\eqref{eq:c2f-c41-result} implies that ${\rm ch}_2[H]$ is even.
Thus, finite-range symmetry operators cannot realize an odd Chern character.

As in Sec.~\ref{sec:2dclassC-model}, spectral flattening of the Hamiltonian provides a canonical construction of examples.
If a gapped Hamiltonian $H_\bk$ satisfies $H_\bk^\top = - H_{-\tau \bk}$, define its flattening by $Q_\bk = H_\bk (H_\bk^\dag H_\bk)^{-1/2}$ and set $C_\bk = Q_{-\tau\bk}$.
For example, choose the gamma matrices
\begin{align}
    \Gamma_0&=\sigma_x\otimes\sigma_y,&
    \Gamma_1&=\sigma_y\otimes1_2,\nonumber\\
    \Gamma_2&=\sigma_x\otimes\sigma_x,&
    \Gamma_3&=\sigma_x\otimes\sigma_z,\nonumber\\
    \Gamma_4&=\sigma_z\otimes1_2.
    \label{eq:c2f-4d-model-gamma}
\end{align}
The corresponding standard Dirac model is
\begin{align}
    H_\bk&=\sum_{a=0}^4d_a(\bk)\Gamma_a,\nonumber\\
    d_0(\bk)&=-3+\sum_{j=1}^4\cos k_j,\nonumber\\
    d_j(\bk)&=\sin k_j\quad(j=1,\ldots,4).
    \label{eq:c2f-4d-model-H}
\end{align}

\subsubsection{SPT--LSM constraints}
\label{sec:4d-C-SPT-LSM}
We now compare the parity and locality results with the SPT--LSM picture of anomaly cancellation on a mirror plane.
We do not impose lattice translations here. Consider the real-space reflection $\tau:(x_1,x_2,x_3,x_4)\mapsto(-x_1,x_2,x_3,x_4)$.
Place four-dimensional Chern insulators with the same ${\rm ch}_2=n$ on the two sides of the mirror plane. The boundaries of the two disconnected regions carry $|n|$ pairs of oppositely propagating three-dimensional Weyl states.
On the mirror plane, the spatial action is trivial, and each pair forms a three-dimensional class-C Dirac state with PHS squaring to $-1$.
A single pair admits no PHS-preserving mass term, whereas two pairs can be gapped.\footnote{Let $s_i$ be Pauli matrices distinguishing the Weyl partners. A representative boundary Hamiltonian and PHS are
$h_\partial=s_z\otimes(p_2\sigma_x+p_3\sigma_y+p_4\sigma_z)$ and $\mathcal C_\partial=(s_x\otimes\sigma_y)K$.
The Hamiltonian and PHS satisfy $\mathcal C_\partial^2=-1$ and $\mathcal C_\partial h_\partial(p)\mathcal C_\partial^{-1}=-h_\partial(-p)$.
The possible mass matrices $s_x\otimes1_2,s_y\otimes1_2$ are both forbidden because they do not change sign under $\mathcal C_\partial$.
For two pairs, $s_x\otimes1_2\otimes\rho_y$ is an allowed mass, where $\rho_i$ are Pauli matrices distinguishing the copies.}
The obstruction to gapping a single pair is the same anomaly as that of the boundary of the four-dimensional class-C $\Z_2$ phase with cellwise PHS~\cite{RyuSchnyderFurusakiLudwig2010}.

If the mirror plane initially supports a Dirac state with this nontrivial $\Z_2$ anomaly, the condition for gapping that state together with the $|n|$ pairs supplied by the surrounding bulk is $n+1\equiv0\pmod2$.
A symmetry-preserving gapped phase therefore requires an odd ${\rm ch}_2$.
The single Dirac state initially placed on the plane cannot be realized as a standalone three-dimensional lattice system with finitely many local orbitals and cellwise PHS. It must be supplied, for example, as the boundary of a higher-dimensional phase.
This boundary construction assumes an anomalous degree of freedom as input, so its assumptions differ from those of the finite-range no-go result in Sec.~\ref{sec:c2f-c41-locality}.

In our framework, a symmetry action satisfying $\mu[C_0]-\mu[C_\pi]=1$ in Eq.~\eqref{eq:c2f-c41-result} enforces an odd ${\rm ch}_2$ in a finite-dimensional Bloch model.
The model with $C_\bk=Q_{-\tau\bk}$ realizes odd parity through a symmetry action that is quasilocal but not of finite range.
The model's symmetry action differs from a crystalline symmetry acting only on the degrees of freedom at each point of the fixed plane.
The SPT--LSM picture of anomaly cancellation is therefore consistent both with the prohibition of odd values for finite-range actions and with their realization by exponentially decaying symmetry actions.
This comparison establishes consistency without identifying $\mu$ with an anomaly index in the real-space AHSS.

\subsection{\texorpdfstring{$(d,d_\tau)=(4,2)$}{(d,d\_tau)=(4,2)}, class AII}
\label{sec:c2f-aii42}
Let $-\tau\bk=(k_1,k_2,-k_3,-k_4)$, and assume
\begin{align}
    T_\bk H_\bk^*T_\bk^\dag=H_{-\tau\bk},\qquad
    T_\bk^\top=-T_{-\tau\bk}.
    \label{eq:c2f-aii42-symmetry}
\end{align}
As shown below, ${\rm ch}_2[H]$ is even.
Thus, an odd Chern character is impossible in the finite-dimensional framework, even with a momentum-dependent antiunitary symmetry matrix.

Since $c_1(E)$ need not vanish, we examine $\langle c_2(E),[T^4]\rangle$ and the contribution from the first Chern class separately.

\subsubsection{Evaluation of \texorpdfstring{$c_2$}{c2}}
We use the symmetric connection $B=A+\frac12a$ of Sec.~\ref{sec:c2f-tr-connection}. From $(-\tau)^*q_4(B)=q_4(B)$ and the boundary formula \eqref{eq:c2f-boundary}, we obtain
\begin{align}
    \begin{aligned}
    &\langle c_2(E),[T^4]\rangle=2\int_{X_+}q_4(B)\\
    &\quad\equiv2\{\mathcal Q_3[B_0]-\mathcal Q_3[B_\pi]\}\pmod2.
    \end{aligned}
    \label{eq:c2f-aii42-boundary}
\end{align}
We now prove that $\mathcal Q_3[B_s]\equiv0\pmod1$ at each boundary.

Suppress the index $s$ and write the boundary reflection as $m(k_1,k_2,k_3)=(k_1,k_2,-k_3)$. The relation $(-\tau)^*c_1=-c_1$ forces the first Chern number on the $k_1k_2$ plane to vanish, so we can choose a frame $\Phi$ on $V_N=T^2\times[0,\pi]_{k_3}$. On $V_S=T^2\times[-\pi,0]_{k_3}$, set $\Psi_{m\bk}=T_\bk\Phi_\bk^*$. The relation $m^*B^\Psi=(B^\Phi)^*$ and the orientation reversal by $m$ make the local three-form integrals cancel:
\begin{align}
    \int_{V_N}q_3(B^\Phi)+\int_{V_S}q_3(B^\Psi)=0.
    \label{eq:c2f-aii42-local-zero}
\end{align}

On the remaining gluing surfaces $S_z=\{k_3=z\}$ ($z=0,\pi$), the transition function $g=\Phi^\dag\Psi$ satisfies
\begin{align}
    g^\top=-g,\qquad B^g=B^*=-B^\top.
\end{align}
For brevity, write $B=B^\Phi|_{S_z}$. The preceding symmetry relations imply
\begin{align}
    dg\,g^{-1}&=-B-gB^\top g^{-1},\notag\\
    \tr(B\,dg\,g^{-1})&=0,\qquad
    2\tr B=-d\log\det g.
\end{align}
The first equality in the second line uses $\tr(B^2)=0$ and $\tr(BgB^\top g^{-1})=-\tr(BgB^\top g^{-1})=0$; the latter follows from $g^\top=-g$. The two-form in Eq.~\eqref{eq:c2f-gauge} therefore also vanishes:
\begin{align}
    \alpha_2(B,g)
       =\frac1{8\pi^2}\{\tr(B\,dg\,g^{-1})
                            -\tr B\,d\log\det g\}=0.
\end{align}
Furthermore, Lemma~\ref{lem:alt63-skew-WZ} for antisymmetric unitary matrices gives ${\rm WZ}[g]\equiv0\pmod1$.
Every term in the gluing formula \eqref{eq:c2f-gluing} therefore vanishes, yielding
\begin{align}
    \begin{gathered}
    \mathcal Q_3[B_s]\equiv0\pmod1\quad(s=0,\pi),\\
    \langle c_2(E),[T^4]\rangle\in2\Z.
    \end{gathered}
    \label{eq:c2f-aii42-c2-even}
\end{align}

\subsubsection{Contribution from the first Chern class}
Write $e_j=[dk_j/(2\pi)]$. From $(-\tau)^*c_1=-c_1$, we have
\begin{align}
    c_1(E)=n_{13}e_1e_3+n_{14}e_1e_4
                  +n_{23}e_2e_3+n_{24}e_2e_4.
    \label{eq:c2f-aii42-c1}
\end{align}
Each coefficient $n_{ij}$ is even. To see this, restrict to $T^2_{ij}$ and fix the other two coordinates at $0$. The remaining symmetry is two-dimensional class AII with one reversed coordinate, so Eq.~\eqref{eq:AII-mirror-even} applies. It follows that
\begin{align}
    \frac12\langle c_1(E)^2,[T^4]\rangle
       =n_{14}n_{23}-n_{13}n_{24}\in4\Z.
    \label{eq:c2f-aii42-square}
\end{align}
Combining the contribution from the first Chern class with $\langle c_2(E),[T^4]\rangle\in2\Z$ gives
\begin{align}
    {\rm ch}_2[H]
       =\frac12\langle c_1(E)^2,[T^4]\rangle-\langle c_2(E),[T^4]\rangle\in2\Z.
    \label{eq:c2f-aii42-final}
\end{align}

\subsubsection{\texorpdfstring{Gapless states bound to a $\pi$ flux and SPT--LSM constraints}{Gapless states bound to a pi flux and SPT--LSM constraints}}
\label{sec:4d-AII-SPT-LSM}

The even-parity constraint in Eq.~\eqref{eq:c2f-aii42-final} is consistent with the real-space AHSS picture. For simplicity, we do not impose lattice translations in the following construction.
Let $C_2:(x_1,x_2,x_3,x_4)\mapsto(-x_1,-x_2,x_3,x_4)$ be the real-space twofold rotation, and insert a $\pi$ flux along its fixed plane $x_1=x_2=0$.

First, consider a model of a four-dimensional Chern insulator without flux, with antiunitary twofold rotation symmetry $\mathcal T_0=C_2T$ satisfying $\mathcal T_0^2=1$.
Let $\phi$ be the polar angle in the normal plane. Although $\mathcal T_0$ maps the flux $\pi$ to $-\pi$, the gauge transformation $G(\phi)=e^{i\phi}$ restores the original flux background outside the core.
The symmetry preserving the flux background is therefore $\mathcal T_\pi=G\mathcal T_0$. Since $C_2$ maps $\phi$ to $\phi+\pi$, we have
\begin{align}
 \mathcal T_\pi^2
 =e^{i\phi}e^{-i(\phi+\pi)}\mathcal T_0^2=-1.
 \label{eq:4d-AII-flux-square}
\end{align}
The spatial action is trivial in the effective theory on the flux plane, where $\mathcal T_\pi$ therefore acts as class-AII time reversal.

To obtain the flux-bound states, we add the kinetic term in the fourth direction to the cylindrical-surface calculation for a three-dimensional topological insulator~\cite{RosenbergGuoFranz2010}.\footnote{For a general construction of anomalous states on a $\pi$ flux by dimensional extension, see Ref.~\cite{GeierFulgaLau2021}.}
Specifically, take a representative model with $n:={\rm ch}_2[H_{\rm bulk}]=1$ and regularize the flux core as a cylindrical hole of radius $R$, as shown in Fig.~\ref{fig:4d-AII-SPT-LSM}[a].
The inner wall $S^1_R\times\R^2_{x_3,x_4}$ carries a single three-dimensional Weyl state~\cite{QiHughesZhang2008}.
Take $R$ much larger than the bulk correlation length and use the low-energy theory on the inner wall. Setting $\hbar=e=1$ and writing the flux as $\Phi=2\pi\alpha$, we can choose the basis and coordinate orientation so that
\begin{align}
 h_\ell(k_3,k_4)
 &=v\left[k_3\sigma_x+k_4\sigma_y
       +\frac{\ell+\frac12-\alpha}{R}\sigma_z\right],
 \qquad \ell\in\Z.
 \label{eq:4d-AII-flux-cylinder}
\end{align}
At $k_4=0$, the Hamiltonian reduces to the cylindrical-surface Hamiltonian of Rosenberg et al.; the additional term $v k_4\sigma_y$ anticommutes with the other terms. The shift by $1/2$ comes from the spinor Berry phase $\pi$ around the cylinder.
For a $\pi$ flux, namely $\alpha=1/2$, the angular gap of the $\ell=0$ mode vanishes, giving a two-dimensional gapless Weyl state described by
\begin{align}
 h_0(k_3,k_4)&=v(k_3\sigma_x+k_4\sigma_y).
 \label{eq:4d-AII-flux-Dirac}
\end{align}
The gapless state is localized on the inner wall and propagates in the $x_3,x_4$ directions.
On the lowest mode, we may take $\mathcal T_\pi=i\sigma_yK$, which forbids the mass term $m\sigma_z$.

To obtain an SPT--LSM constraint, consider a single gapless Weyl state initially present on the two-dimensional fixed plane of $C_2T$.
As shown in Fig.~\ref{fig:4d-AII-SPT-LSM}[a], coupling the initially present state to the Weyl state bound to the $\pi$ flux can open a gap while preserving time-reversal symmetry.
Stacking bulk phases yields general $n$, for which the parity of the number of Weyl states on the flux plane is $n\bmod2$. Canceling the single state on the fixed plane and opening a gap therefore requires an odd ${\rm ch}_2[H_{\rm bulk}]$.
In the real-space AHSS interpretation of Ref.~\cite{Shiozaki-Xiong-Gomi}, this anomaly cancellation is described by a nontrivial second differential $d^2_{4,-4}:E^2_{4,-4}\to E^2_{2,-3}$. An invertible phase on a four-cell cancels the anomaly on the two-dimensional fixed plane.

The single class-AII Weyl state initially placed on the fixed plane has the same anomaly as the surface of a three-dimensional topological insulator and cannot be realized in a standalone two-dimensional lattice system preserving cellwise time-reversal symmetry.
A realization supplied by a higher-dimensional bulk falls outside the setting of finite-dimensional Bloch Hamiltonians and symmetry matrices considered here. The SPT--LSM odd-parity constraint therefore does not contradict the even-parity constraint in Eq.~\eqref{eq:c2f-aii42-final}.

\begin{figure*}[t]
\centering
\includegraphics[width=0.7\textwidth]{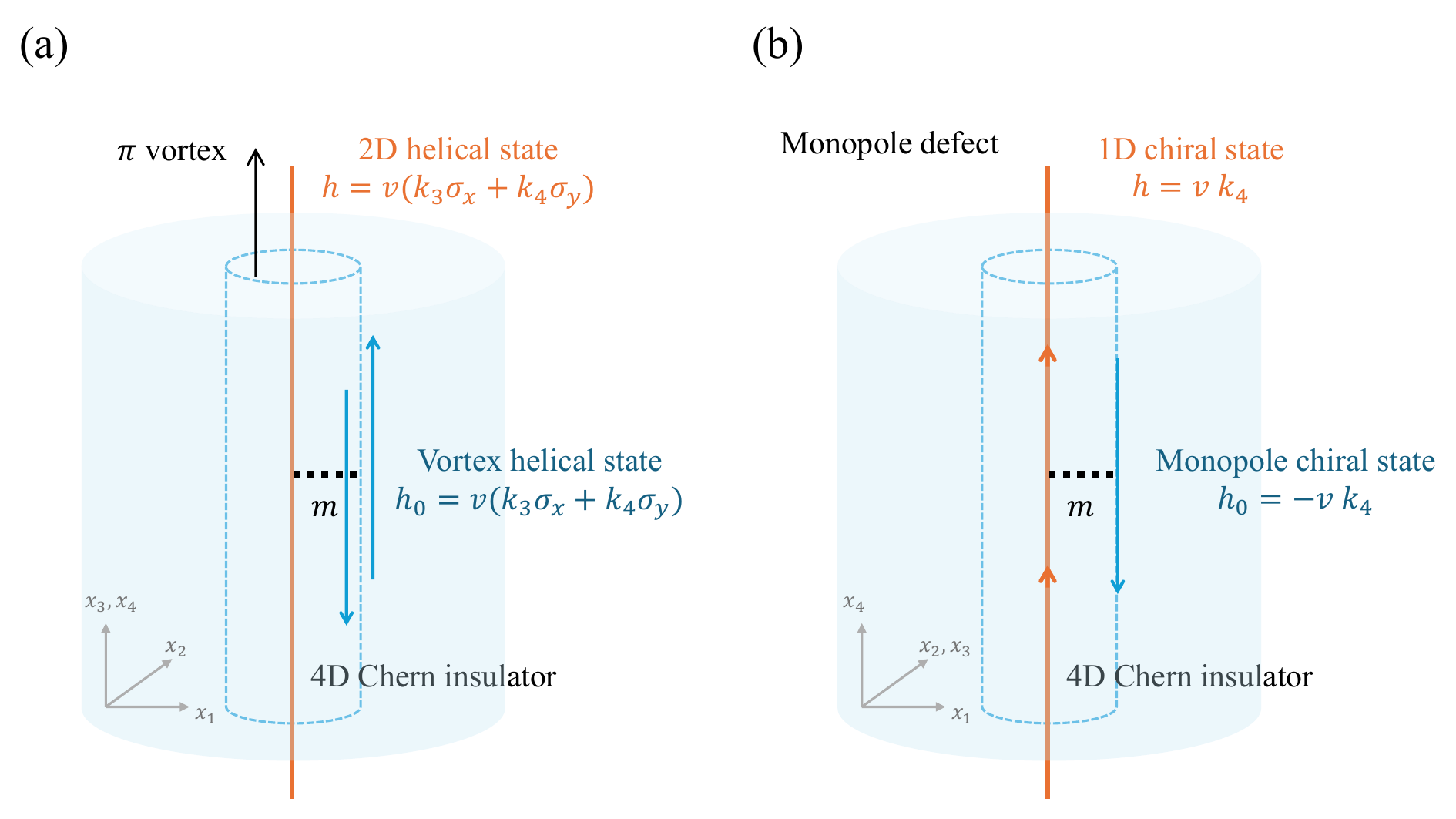}
\caption{Real-space SPT--LSM mechanisms in four dimensions.
The shaded regions represent four-dimensional Chern insulators with $|{\rm ch}_2|=1$.
[a] Class AII with $d_\tau=2$.
A $\pi$ flux pierces the $x_1x_2$ plane and extends along $x_3,x_4$, represented schematically by a single vertical axis.
The blue dashed cylinder marks the regularized flux core. The blue arrows indicate the bound two-dimensional helical Weyl state of Eq.~\eqref{eq:4d-AII-flux-Dirac}.
The orange line denotes the original helical state on the rotation-fixed plane.
The coupling $m$ gaps the two states while preserving time reversal.
[b] Class AI with $d_\tau=4$.
A unit monopole line extends along $x_4$, with transverse coordinates $x_1,x_2,x_3$; the dashed tube schematically marks its core.
The chiral channel $h=vk_4$ (orange) and the monopole-bound channel $h_0=-vk_4$ (blue) propagate in opposite directions.
The coupling $m$ gaps the two channels while preserving $PT$ with $(PT)^2=1$.
}
\label{fig:4d-AII-SPT-LSM}
\end{figure*}

\subsection{\texorpdfstring{$(d,d_\tau)=(4,3)$}{(d,d\_tau)=(4,3)}, class D}
\label{sec:c2f-d43}
We impose the symmetry
\begin{align}
\begin{gathered}
    -\tau\bk=(k_1,k_2,k_3,-k_4),\\
    C_\bk H_\bk^*C_\bk^\dag=-H_{-\tau\bk},\qquad
    C_\bk^\top=C_{-\tau\bk}.
    \end{gathered}
    \label{eq:c2f-d43-symmetry}
\end{align}
On the boundaries $Y_s=\{k_4=s\}$ ($s=0,\pi$), the matrices satisfy $C_s^\top=C_s$. We can therefore use the class-CI $\Z_2$ invariant $\zeta[C_s;\mathfrak s]$ introduced in Eq.~\eqref{eq:alt64-nu-definition_1}. Identify both boundaries with $Y=T^3$ and choose a common spin structure $\mathfrak s$. The parity is then determined entirely by the two boundary invariants:
\begin{align}
    \quad {\rm ch}_2[H]
       \equiv\zeta[C_0;\mathfrak s]-\zeta[C_\pi;\mathfrak s]\pmod2.
    \label{eq:c2f-d43-final}
\end{align}
We now prove the parity relation and show that the difference of boundary invariants is independent of the spin structure.

Take the matrix $h$ constructed by parallel transport in Sec.~\ref{sec:c2f-ph-transport}. Equations~\eqref{eq:c2f-ph-sewing} and \eqref{eq:c2f-ph-clutching-character} give
\begin{align}
    C_\pi&=hC_0h^\top,
    \label{eq:c2f-d43-h}\\
    {\rm ch}_2[H]&=-W_3[h].
    \label{eq:c2f-d43-winding}
\end{align}
The first relation is the three-dimensional class-BDI symmetry condition with $(q,u,v)=(h,C_\pi,C_0)$. Substituting $X=C_0$ and $g=h$ into the transformation law \eqref{eq:alt64-nu-transformation} for $\zeta$ gives
\begin{align}
    \zeta[C_\pi;\mathfrak s]-\zeta[C_0;\mathfrak s]
       \equiv W_3[h]\pmod2.
    \label{eq:c2f-d43-zeta-transformation}
\end{align}
Combining this congruence with Eq.~\eqref{eq:c2f-d43-winding} proves the parity relation.

The transformation law holds for any common spin structure $\mathfrak s$, while its right-hand side $W_3[h]$ is independent of $\mathfrak s$. Thus, under our standing assumption that a gapped $H$ exists, the difference in Eq.~\eqref{eq:c2f-d43-final} is independent of the spin structure and depends only on $C_0,C_\pi$. The argument requires neither a trivial occupied bundle nor a global Takagi factorization. In particular, if $C_0=C_\pi$, then ${\rm ch}_2[H]$ is even.

\subsubsection{Example and locality}
\label{sec:c2f-d43-locality}
Finite range imposes the following constraint on the symmetry matrix $C_\bk$.
\begin{thm}
    \label{thm:finite_range_d43}
    As throughout this section, assume that a gapped $H$ exists.
    If the unitary matrix $C_\bk$ satisfies the symmetry \eqref{eq:c2f-d43-symmetry} and has finite range, meaning that it can be written for a finite set ${\cal F} \subset\Z^4$ as
    \begin{align}
        C_\bk=\sum_{\bm{n}\in{\cal F}}C_{\bm{n}} e^{-i \bm{n} \cdot \bk},
    \end{align}
    then
    \begin{align}
        \zeta[C_0;\mathfrak s]-\zeta[C_\pi;\mathfrak s] \equiv 0 \quad \pmod2.
        \label{eq:c2f-d43-finite-range}
\end{align}
\end{thm}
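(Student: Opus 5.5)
The plan is to mirror the proof of Theorem~\ref{thm:CII-finite-range}: reduce the finite-range $C_\bk$ to a form that is linear in $X=e^{ik_4}$, split it into two projector blocks by unitarity, and trivialize each block with a flat-band no-go theorem.

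\emph{Step 1: normal form.} By Eq.~\eqref{eq:c2f-d43-final}, the quantity $\zeta[C_0;\mathfrak s]-\zeta[C_\pi;\mathfrak s]$ depends only on $C_0,C_\pi$ and equals ${\rm ch}_2[H]$ mod 2. First multiply $C_\bk$ by $e^{2i\ell k_4}$. This leaves $C_0$ and $C_\pi$ unchanged and preserves $C_\bk^\top=C_{-\tau\bk}$, so all $k_4$-exponents can be made nonnegative. Next, enlarge the unit cell in the $x_4$ direction by a large odd factor $L$, keeping the reflection plane $x_4=0$ fixed. The intracell basis change depends only on $k_4$; restricted to the new fixed planes $K_4=0,\pi$, it is a constant or $k_{1,2,3}$-independent real relabelling. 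I expect this to leave the boundary invariants unchanged, via the transformation law \eqref{eq:alt64-nu-transformation} with $W_3=0$ and the direct-sum law. The resulting matrix has the form
\begin{align*}
 W(X,p)=A(p)+XB(p),\qquad X=e^{iK_4},\quad p=(k_1,k_2,k_3).
\end{align*}

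\emph{Step 2: splitting by unitarity.} Unitarity gives $AA^\dag+BB^\dag=I$ and $AB^\dag=A^\dag B=0$. The symmetry $W^\top(X,p)=W(\bar X,p)$ forces $A^\top=A$ and $B^\top=B$ pointwise in $p$. Hence $P_A=AA^\dag$ and $P_B=BB^\dag$ are complementary finite-range projectors. Their images $E_A,E_B$ carry momentum-preserving antiunitary structures $v\mapsto A\bar v$ and $v\mapsto B\bar v$, each squaring to $+1$. These are real structures, so they define real bundles $\mathcal P_A,\mathcal P_B$ over $T^3$ whose direct sum is $\mathcal P_{C_0}$. At $K_4=\pi$ one gets $C_\pi=A-B$, which is the same pair with the real structure of the $B$ part multiplied by $-1$. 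The two fixed planes therefore share the real bundle but differ by this sign on $E_B$.

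\emph{Step 3: comparing the two planes (main obstacle).} If $\mathcal P_A$ and $\mathcal P_B$ admit global real frames $U,V$ with $AU^*=U$ and $BV^*=V$, then $Q=(U,V)$ is a unitary depending only on $p$. It gives $C_0=Q\,{\rm diag}(1,1)Q^\top$ and $C_\pi=Q\,{\rm diag}(1,-1)Q^\top$. The rescaling $V\mapsto iV$ is a constant unitary with $W_3=0$, so by \eqref{eq:alt64-nu-transformation} both invariants equal $\zeta[QQ^\top]$ mod 2 and their difference vanishes.

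The obstacle is the existence of such frames. In three dimensions a real bundle can have nonzero $w_1,w_2$, and the invariant $\zeta$ of a finite-range real structure $B$ on $T^3$ is exactly the class-CI-type data one needs to kill. So I would instead compare directly: $C_\pi=gC_0g^\top$ with $g=P_A+iP_B$, which is a finite-range unitary depending only on $p$. Equation~\eqref{eq:alt64-nu-transformation} then gives
\begin{align*}
 \zeta[C_\pi;\mathfrak s]-\zeta[C_0;\mathfrak s]\equiv W_3[g]\pmod 2.
\end{align*}
Since $g=P_A+iP_B$ is the exponential of a multiple of a projector, its winding number vanishes, $W_3[g]=0$. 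One can see this either via Read's no-go for the finite-range flat Hamiltonian $I-2P_B$, or directly from $g=\exp(i\pi P_B/2)$, which is homotopic to the identity through $\exp(is\pi P_B/2)$. This closes the argument.

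The remaining care goes into verifying Step 1. Specifically, I must check that the cell enlargement does not shift $\zeta$ through the spin-structure correction terms \eqref{eq:alt64-zeta-phase-shift}. I would handle this by keeping the enlarged-cell relabelling real and $p$-independent, and by using the assumed existence of a gapped $H$, which guarantees spin-structure independence of the difference.
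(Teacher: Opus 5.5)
There is a genuine gap, and it lies in Steps 1--2 rather than in the step you flagged. The coordinate $k_4$ is the one reversed by $-\tau$. In real space $\tau$ reverses $x_1,x_2,x_3$ and fixes $x_4$, so there is no reflection plane $x_4=0$. Multiplying by $e^{2i\ell k_4}$ therefore does not preserve $C_\bk^\top=C_{-\tau\bk}$: the transpose of the new matrix is $e^{2i\ell k_4}C_{-\tau\bk}$, while its value at $-\tau\bk$ is $e^{-2i\ell k_4}C_{-\tau\bk}$.

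More to the point, write $C_\bk=\sum_n C_n(p)e^{-ink_4}$. The symmetry forces $C_n^\top=C_{-n}$, so the $k_4$-exponents are symmetric about zero and cannot be shifted. After cell enlargement the best you can reach is $N^\top e^{-iK_4}+M+Ne^{iK_4}$. Your two-term form $A(p)+XB(p)$ satisfies $W^\top(X,p)=W(\bar X,p)$ only if $B=0$ (compare the coefficients of $X^{\pm1}$). So the claim $A^\top=A$, $B^\top=B$ is false, and Steps 2--3 have nothing to act on.

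Your identity $C_\pi=gC_0g^\top$ with $g=P_A+iP_B=\exp(i\pi P_B/2)$ is correct algebra for that form. But $W_3[g]=0$ holds for any continuous projector, so finite range would enter only through the unattainable normal form. The argument never confronts how the bundles change between $k_4=0$ and $k_4=\pi$, which is where ${\rm ch}_2$ lives.

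The correct analogue of Theorem~\ref{thm:CII-finite-range} linearizes in a momentum direction that $-\tau$ fixes, e.g.\ $C_\bk=A(k_2,k_3,k_4)+e^{ik_1}B(k_2,k_3,k_4)$. Then $A(p,k_4)^\top=A(p,-k_4)$ and $B(p,k_4)^\top=B(p,-k_4)$. So $\operatorname{Ran}A$ and $\operatorname{Ran}B$ carry $C_2T$-type antiunitary actions relating $k_4$ and $-k_4$, and they are real bundles only on the planes $k_4=0,\pi$. Two ingredients are then missing from your proposal.

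\emph{First}, you must reduce $\zeta[C_0;\mathfrak s]-\zeta[C_\pi;\mathfrak s]$ to the second-order invariant $\langle w_2(E_{B,0})-w_2(E_{B,\pi}),[T^2]\rangle$. The paper does this as follows:
\begin{itemize}
\item it works with $\mathfrak s_{\rm NS}$;
\item it applies the vanishing law \eqref{eq:alt64-zeta-two-variable} to the $k_1$-independent blocks;
\item it tracks the spin-structure shift produced by the $e^{ik_1}$ twist;
\item it uses $C_\pi=hC_0h^\top$ to show that the resulting $w_1$-corrections agree on the two planes.
\end{itemize}

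\emph{Second}, you must show that finite range forces $E_{B,0}\simeq E_{B,\pi}$ as real bundles. The paper proceeds in three steps:
\begin{itemize}
\item it encodes $E_B$ in a real-structured projector $P=M+Ne^{ik_4}+N^\top e^{-ik_4}$;
\item it shifts $p$ into complex momentum so that $N$ attains constant maximal rank, which is where the Laurent-polynomial hypothesis does the real work;
\item at the shifted momentum it conjugates $2P(\cdot,0)-I$ to $2P(\cdot,\pi)-I$ by the complex-orthogonal involution $S=I-2R$, with $R$ the projector onto $\operatorname{Ran}N$, and turns this into a real isomorphism via the polar decomposition of $S$.
\end{itemize}
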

The proof is given in Appendix~\ref{app:d4-finite-range}.

An example with ${\rm ch}_2[H]=1$ uses the Hamiltonian in Eq.~\eqref{eq:c2f-4d-model-H} with only the symmetry matrix changed.
Let $Q_\bk$ be the flattening of the Hamiltonian in Eq.~\eqref{eq:c2f-4d-model-H}, and set $C_\bk = Q_{k_1,k_2,k_3,-k_4} (1_2 \otimes \sigma_y)$.

\subsubsection{SPT--LSM constraints}
\label{sec:4d-D-SPT-LSM}
The real-space action is $\tau:(x_1,x_2,x_3,x_4)\mapsto(-x_1,-x_2,-x_3,x_4)$, whose fixed set is the $x_4$ axis.
One can consider an SPT--LSM construction in which an oppositely propagating state bound to a monopole line in a four-dimensional Chern insulator cancels a chiral state on the fixed axis.
Below, we use a particle-number-conserving realization and do not impose lattice translations.

First, the model in Eq.~\eqref{eq:c2f-4d-model-H} with ${\rm ch}_2=1$ satisfies, for $J=1_2\otimes\sigma_y$,
\begin{align}
 JH_\bk^*J^\dag=-H_{-\tau\bk},\qquad JJ^*=-I.
 \label{eq:4d-D-monopole-parent}
\end{align}
Thus, PHS squares to $-1$ in the model without defects.
Insert a unit monopole into the three-dimensional transverse space $(x_1,x_2,x_3)$ and extend it along the $x_4$ direction.
Outside the flux core, applying inversion and complex conjugation twice produces an additional $-1$ from the monopole transition function, as in the patch calculation of Sec.~\ref{sec:3d-BDI-SPT-LSM}.
The PHS that preserves the monopole background therefore squares to $+1$ and acts as class-D PHS on the states bound to the monopole line.

The states bound to the monopole line can also be obtained by dimensional extension of the monopole zero modes in a three-dimensional class-AIII insulator~\cite{AokiFukayaKanKoshinoMatsuki2023}.
Let $H_\perp$ be the transverse Hamiltonian and $\G$ its chiral operator.
For a zero mode satisfying $H_\perp\psi_0=0$ and $\G\psi_0=s\psi_0$ ($s=\pm1$), adding motion in the fourth direction gives, at low energy,
\begin{align}
 (H_\perp+vk_4\G)\psi_0=svk_4\psi_0.
 \label{eq:4d-D-monopole-line}
\end{align}
The zero mode thus becomes a state propagating in one direction along the monopole line.
If a single oppositely propagating chiral state is initially placed on the axis, the two states can be gapped while preserving class-D PHS.
With a suitable choice of basis and velocities, the theory of the two states can be written as
\begin{align}
 \begin{aligned}
 h_{\rm line}(k_4)&=vk_4\sigma_z+m\sigma_y,\\
 h_{\rm line}(k_4)^*&=-h_{\rm line}(-k_4).
 \end{aligned}
 \label{eq:4d-D-monopole-gap}
\end{align}
A gap opens for $m\ne0$.
The gapped pair provides a concrete example of a nontrivial four-dimensional bulk canceling the anomaly on the fixed axis.

However, a standalone unidirectional state cannot be realized in a local one-dimensional lattice system with finitely many bands.
The construction assumes this anomalous degree of freedom and the monopole background as inputs. It therefore does not contradict Theorem~\ref{thm:finite_range_d43}, whose setting consists of periodic finite-dimensional Bloch matrices.
Equation~\eqref{eq:c2f-d43-final} directly shows that a gapped phase compatible with a symmetry action satisfying $\zeta[C_0;\mathfrak s]-\zeta[C_\pi;\mathfrak s]=1$ must have odd ${\rm ch}_2$.
The model above realizes odd parity with the symmetry action $C_\bk=Q_{-\tau\bk}J$. The action is quasilocal but not of finite range, consistent with the finite-range no-go result.
As in class C, we do not identify the model's symmetry action with a crystalline symmetry that simply maps each orbital to its geometric image. A general identification of the difference of $\zeta$ with real-space defect anomalies remains open.

\subsection{\texorpdfstring{$(d,d_\tau)=(4,4)$}{(d,d\_tau)=(4,4)}, class AI}
\label{sec:c2f-ai44}
Assume the momentum-fixing symmetry
\begin{align}
    T_\bk H_\bk^*T_\bk^\dag=H_\bk,\qquad T_\bk^\top=T_\bk.
    \label{eq:c2f-ai44-symmetry}
\end{align}
We show using local frames that the occupied bundle is the complexification of a real bundle.

Let $H$ be an $N\times N$ matrix with $n$ occupied states, and flatten it so that $H^2=1_N$. Cover $T^4$ by sufficiently small patches $O_i$ and take a Takagi factorization on each patch:
\begin{align}
\begin{gathered}
    T=U_iU_i^\top,\qquad U_j=U_iS_{ij},\\
    S_{ij}\in O(N),\qquad S_{ij}S_{jl}S_{li}=1_N.
    \end{gathered}
    \label{eq:c2f-ai44-local-Takagi}
\end{align}
The transition matrix $S_{ij}=U_i^\dag U_j$ is real orthogonal: it is unitary and satisfies $S_{ij}S_{ij}^\top=1_N$.

In the local Takagi basis, $\widetilde H_i=U_i^\dag H U_i$ is real symmetric and therefore admits a real orthonormal frame of negative-energy states:
\begin{align}
    \begin{gathered}
    \widetilde H_i\widetilde\Phi_i=-\widetilde\Phi_i,\\
    \widetilde\Phi_i\in\Mat_{N\times n}(\R),\\
    \widetilde\Phi_i^\top\widetilde\Phi_i=1_n.
    \end{gathered}
    \label{eq:c2f-ai44-local-real-frame}
\end{align}
On an overlap, $S_{ij}\widetilde\Phi_j$ and $\widetilde\Phi_i$ are bases of the same negative-energy subspace, so
\begin{align}
    S_{ij}\widetilde\Phi_j=\widetilde\Phi_iV_{ij},\qquad
    V_{ij}=\widetilde\Phi_i^\top S_{ij}\widetilde\Phi_j\in O(n).
    \label{eq:c2f-ai44-local-transition}
\end{align}
Since $S_{ij}S_{jl}S_{li}=1_N$, we have
\begin{align}
    V_{ij}V_{jl}V_{li}=1_n.
    \label{eq:c2f-ai44-local-cocycle}
\end{align}
The matrices $V_{ij}$ therefore define the transition functions of a real bundle $V$. In the original basis, the occupied frames $\Phi_i=U_i\widetilde\Phi_i$ satisfy
\begin{align}
    \Phi_j=U_iS_{ij}\widetilde\Phi_j=\Phi_iV_{ij}.
    \label{eq:c2f-ai44-local-occupied-frame}
\end{align}
The occupied frames are glued by the same transition functions as the real bundle, so $E\simeq V\otimes_\R\C$.

For the complexification of a real bundle, the characteristic classes obey~\cite{Milnor-Stasheff}
\begin{align}
    c_2(E)=-p_1(V),\qquad 2c_1(E)=0.
    \label{eq:c2f-ai44-classes}
\end{align}
Since $H^2(T^4;\Z)$ is torsion-free, $c_1(E)=0$. Moreover, the cohomology of $T^4$ with $\Z_2$ coefficients is generated by degree-one classes whose squares vanish, so the square of any degree-two class also vanishes. Using $p_1(V)\bmod2=w_2(V)^2$, we obtain
\begin{align}
    \langle c_2(E),[T^4]\rangle
       &=-\langle p_1(V),[T^4]\rangle
       \label{eq:c2f-ai44-local-pontryagin}\\
       &\equiv\langle w_2(V)^2,[T^4]\rangle=0\pmod2.
\end{align}
Consequently,
\begin{align}
    {\rm ch}_2[H]=-\langle c_2(E),[T^4]\rangle\in2\Z.
    \label{eq:c2f-ai44-final}
\end{align}

Thus, an odd Chern character is impossible in the finite-dimensional framework, even with a momentum-dependent antiunitary symmetry matrix.

\subsubsection{Real-space AHSS}
The even-parity constraint can also be understood through the real-space AHSS. Consider a local model without translation symmetry and with an antiunitary symmetry acting as $\bx\mapsto-\bx$ in real space. The inversion center is a class-AI zero-cell with no protected zero-energy states. A one-cell has symmetry class A and can carry an integer number of chiral states.
A Chern insulator on a two-cell, together with its symmetry image, changes the number of chiral states on a one-cell by an even integer. The corresponding differential is
\begin{align}
    d^1_{2,-2}:\Z\longrightarrow\Z,\qquad n\longmapsto2n.
\end{align}
The remaining invariant on the one-cell is the parity $E^2_{1,-2}=\coker d^1_{2,-2} = \Z/2\Z$. With no contribution from three-cells, this parity survives to the $E^3$ page.

The four-dimensional even-parity constraint means that the odd part of the Chern number on a four-cell corresponds to the remaining boundary anomaly on the one-cell. Equivalently, the following map is nontrivial:
\begin{align}
\begin{gathered}
    d^3_{4,-4}:E^3_{4,-4}=\Z\longrightarrow E^3_{1,-2}=\Z/2\Z,\\
    n\longmapsto n\bmod2.
    \end{gathered}
    \label{eq:c2f-ai44-AHSS}
\end{align}
We use states bound to a monopole line to verify that the map is nontrivial.
A monopole in four-dimensional space is a line defect of codimension three.
For a four-dimensional Chern insulator with $|{\rm ch}_2|=1$ and no defects, one may take $(PT)^2=-1$. In a unit-monopole background, the same patch gluing as in Sec.~\ref{sec:3d-BDI-SPT-LSM} gives $(PT)^2=1$.

The states bound to the line can be obtained by extending the monopole zero modes of a three-dimensional chiral insulator into the fourth direction.
(States bound to monopole lines in four-dimensional lattice models have also been studied~\cite{TynerJuricic2024}.)
For a model with winding number of magnitude one, a unit monopole supports a zero mode satisfying $H_\perp\psi_0=0$ and $\G\psi_0=s\psi_0$ ($s=\pm1$)~\cite{AokiFukayaKanKoshinoMatsuki2023}.
Here, $H_\perp$ is the three-dimensional transverse Hamiltonian, and $\G$ is its chiral operator.
In the dimensional extension to a four-dimensional phase with $|{\rm ch}_2|=1$, adding motion in the fourth direction gives, at low energy,
\begin{align}
 H_4(k_4)=H_\perp+vk_4\G,\qquad
 H_4(k_4)\psi_0=svk_4\psi_0.
 \label{eq:c2f-ai44-monopole-chiral}
\end{align}
The zero mode thus becomes a state propagating in one direction along the monopole line.
As in Fig.~\ref{fig:4d-AII-SPT-LSM}[b], coupling the monopole-bound state to an oppositely propagating chiral state on the one-cell can open a gap while preserving $PT$ symmetry.
Two-cells contribute an even number of states. Canceling an odd number of the original states therefore requires an odd ${\rm ch}_2$, giving the SPT--LSM constraint in Eq.~\eqref{eq:c2f-ai44-AHSS}.

\section{\texorpdfstring{\NoCaseChange{Boundary actions induced by cellwise symmetries and ASPT}}{Boundary actions induced by cellwise symmetries and ASPT}}
\label{sec:oddmodel-construction}
\label{sec:aspt-realization}

Sections~\ref{sec:1d}--\ref{sec:c2f} established how the topology and locality of a symmetry action constrain the Chern or winding number of a gapped Hamiltonian. We now examine how the momentum-dependent actions that allow odd values can arise at the boundary of a higher-dimensional bulk from its cellwise symmetries. 

We construct the models with odd Chern or winding numbers considered in this paper as gapped boundaries of finite-range bulk models in one higher dimension.
When a cellwise bulk symmetry is restricted to the boundary subspace, the momentum dependence of the boundary states induces a non-cellwise symmetry action. Figure~\ref{fig:aspt-boundary-overview}(a) illustrates the construction.
For internal symmetries, deriving a non-onsite boundary action from an onsite bulk action is also used in SPT boundary theories~\cite{ChenLiuWen2011,ElseNayak2014}.
For a crystalline operation $\tau$, we likewise start from a cellwise symmetry that preserves the half-space: the symmetry does not reverse the added direction normal to the boundary.

Each construction reproduces the original symmetry action and Hamiltonian exactly.
In each case, we also show that the bulk can be deformed to an atomic limit while preserving the symmetries.
With the induced boundary symmetry action fixed, the allowed Chern or winding numbers have odd parity, in place of the even parity required by a cellwise action.
These realizations correspond to boundary constructions of ASPT phases in the following sense: a trivial bulk supports a boundary invertible phase that the conventional symmetry action would forbid~\cite{WangQiGu2019,WangGu2020}.
The constructions establish quasilocal boundary realizations for free fermions; we do not identify many-body anomaly indices in this paper.

We first treat classes D and C, then the chiral classes BDI, DIII, CII, and CI.
For the chiral classes, we restrict the construction to symmetry actions satisfying the condition in Eq.~\eqref{eq:oddmodel-scalar-energy}.
We have not obtained boundary realizations for classes AI and AII, consistent with the ``No (finite dim.)'' entries in Table~\ref{tab:result} and the prohibition of odd topological numbers.

\begin{figure*}[t]
\centering
\includegraphics[width=0.6\textwidth]{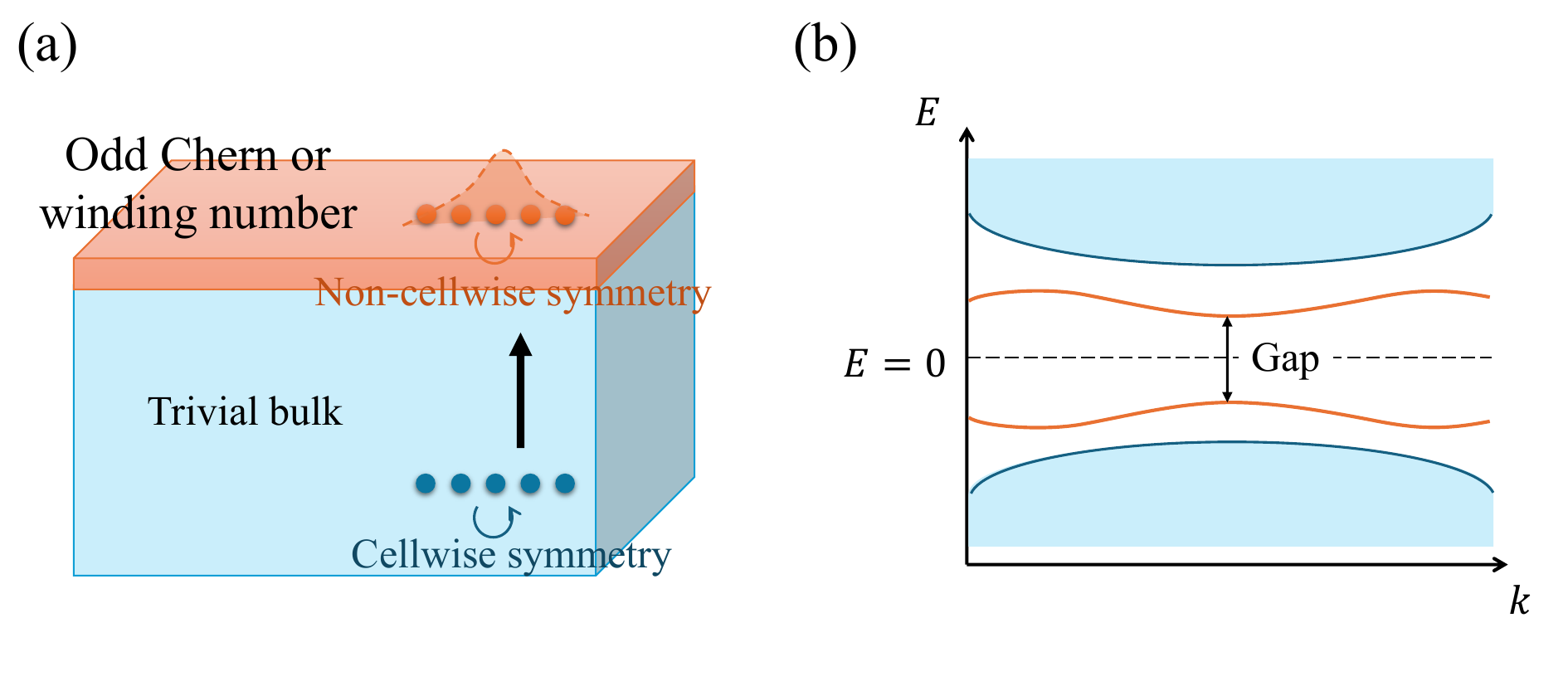}
\caption{Boundary realization of anomalous SPT phases.
(a) Restricting a cellwise symmetry of a finite-range bulk to the boundary subspace induces a non-cellwise boundary action.
The orange boundary supports a gapped phase with an odd Chern or winding number. The blue bulk can be deformed to an atomic limit while preserving the retained symmetries.
(b) Schematic half-space spectrum after the boundary is gapped.
For sufficiently small $\delta>0$, the orange boundary bands, described by $H^{\rm bdy}_{\bk}=\delta H_\bk$, remain separated from the blue bulk continua and are gapped at $E=0$ (dashed line).
The horizontal axis denotes momentum parallel to the boundary.}
\label{fig:aspt-boundary-overview}
\end{figure*}

\subsection{Boundary realizations for classes D and C}
\label{sec:oddmodel-CD}
\label{sec:general-aspt-boundary}

\subsubsection{Canonical construction of the models to be realized at the boundary}

Let $d$ be even.
Consider a gapped finite-range Hamiltonian $H_\bk$ with a cellwise PHS represented by a unitary matrix $C_0$ satisfying
\begin{align}
 C_0H_\bk^*C_0^\dagger&=-H_{-\tau\bk},
 &C_0^\top&=\epsilon_0 C_0.
 \label{eq:CD-seed-symmetry}
\end{align}
The spectrally flattened Hamiltonian satisfies
\begin{align}
 &Q_\bk:=\operatorname{sgn}H_\bk,\quad
 Q_\bk^2=I,\quad [Q_\bk,H_\bk]=0,\notag\\
 &C_0Q_\bk^*C_0^\dagger=-Q_{-\tau\bk}.
 \label{eq:standard-construction-Q}
\end{align}
Thus, defining
\begin{align}
 C_\bk&=-iQ_{-\tau\bk}C_0,\notag\\
 C_\bk H_\bk^*C_\bk^\dagger&=-H_{-\tau\bk},
 &C_\bk^\top&=\epsilon_C C_{-\tau\bk},\notag\\
 \epsilon_C&=-\epsilon_0
 \label{eq:oddmodel-CD}
\end{align}
reverses the square of the PHS. Class D with $\epsilon_0=1$ therefore becomes class C with $\epsilon_C=-1$, while class C with $\epsilon_0=-1$ becomes class D with $\epsilon_C=1$. Since the construction leaves $H$ unchanged, it preserves the Chern number.
For two-dimensional class C with internal symmetry, choose $C_0$ as the redundancy matrix of the BdG representation. The construction then agrees with $C_\bk=g_{-\bk}C_0$ derived from the non-onsite $\Z_4^F$ action $g_\bk=-iQ_\bk$ in Ref.~\cite{KobayashiInamuraShiozaki2026}.

The matrix $C$ constructed this way generally does not have finite range. Some symmetry classes nevertheless admit finite-range realizations, as shown in Table~\ref{tab:result}.
Examples are given in the preceding sections devoted to the individual classes.

\subsubsection{Parent model and particle-hole symmetry}

Using $H,C_0$ from Eq.~\eqref{eq:CD-seed-symmetry}, we apply the generalized SSH chain construction of Ref.~\cite{ShiozakiDetached2025} to define the $(d+1)$-dimensional bulk model
\begin{align}
 H_{d+1}(\bk,k_{d+1})
 &=\cos k_{d+1}\,\rho_x\otimes I
   -\sin k_{d+1}\,\rho_y\otimes H_\bk.
 \label{eq:aspt-parent-H}
\end{align}
Here,
$\rho_{x},\rho_{y},\rho_{z}$ are Pauli matrices acting on the two added internal orbitals.
If $H$ has finite range, the parent model has finite range in every direction.
The bulk is gapped because
\begin{align}
 H_{d+1}^2
 =\cos^2 k_{d+1}\,I
  +\sin^2 k_{d+1}\,(I_\rho\otimes H_\bk^2)>0.
\end{align}
The parent model has the PHS
\begin{align}
 C^{\rm bulk}&=\rho_y\otimes C_0,\notag\\
 C^{\rm bulk}H_{d+1}(\bk,k_{d+1})^*
 (C^{\rm bulk})^\dagger
 &=-H_{d+1}(-\tau\bk,-k_{d+1}),\notag\\
 C^{\rm bulk}(C^{\rm bulk})^*
 &=-\epsilon_0 I=\epsilon_C I.
 \label{eq:aspt-parent-symmetry}
\end{align}
The parent PHS $C^{\rm bulk}$ therefore exchanges the symmetry classes according to ${\rm D} \leftrightarrow {\rm C}$.
The PHS $C^{\rm bulk}$ is cellwise and does not reverse the added real-space direction $x_{d+1}$, even when a crystalline operation is involved.

The parent model also has an auxiliary chiral symmetry,
\begin{align}
 \Gamma^{\rm bulk}=\rho_z\otimes I,\qquad
 \{\Gamma^{\rm bulk},H_{d+1}\}=0.
 \label{eq:aspt-parent-chiral}
\end{align}
To gap the boundary, we break the auxiliary chiral symmetry while retaining the desired PHS.

\subsubsection{Momentum-dependent symmetry induced at the boundary}

For simplicity, we present the boundary-state construction only for
\begin{align}
 H_\bk^2=E_\bk^2I,\qquad E_\bk>0.
\end{align}
\footnote{For a general gapped $H_\bk$, replace the profiles below with functions of the positive-definite Hermitian matrix $|H_\bk|=(H_\bk^2)^{1/2}$. The eigenvalues $h,-h$ exchanged by PHS have the same decay profile, so the expression for the boundary symmetry action remains unchanged.}
Consider the half-infinite system $x_{d+1}\le0$, and let $P_\bk^{(\pm)}=(I\pm Q_\bk)/2$.
In the model of Eq.~\eqref{eq:aspt-parent-H}, each state with energy $+E_\bk$ gives a boundary zero mode in $\ket{\rho+}$, and each state with energy $-E_\bk$ gives one in $\ket{\rho-}$.
The two types of zero modes share a decay profile, described by a real scalar function $\eta_\bk(x_{d+1})$.
The function $\eta_\bk$ decays exponentially away from the boundary and is normalized by $\sum_{x_{d+1}\le0}\eta_\bk(x_{d+1})^2=1$.
The isometry into the boundary zero-mode subspace is
\begin{align}
 \begin{split}
  V_\bk v
  &=\sum_{x_{d+1}\le0}\eta_\bk(x_{d+1})\ket{x_{d+1}}\\
  &\quad\otimes\bigl(\ket{\rho+}\otimes P_\bk^{(+)}
       +\ket{\rho-}\otimes P_\bk^{(-)}\bigr)v.
 \end{split}
 \label{eq:aspt-boundary-isometry}
\end{align}
Denoting the half-space Hamiltonian by $\widehat H_{d+1,\bk}$, we obtain
\begin{align}
 V_\bk^\dagger V_\bk=I,\qquad
 \widehat H_{d+1,\bk}V_\bk=0.
 \label{eq:aspt-boundary-zero}
\end{align}
No local frame is needed to define the isometry, and $V_\bk$ is defined throughout the Brillouin zone.
Exactly flat boundary bands are a special feature of this model.

Since PHS gives $E_{-\tau\bk}=E_\bk$, we can choose identical profiles at $\bk$ and $-\tau\bk$. Consequently,
\begin{align}
 C_0P_\bk^{(\pm)*}&=P_{-\tau\bk}^{(\mp)}C_0,\notag\\
 \eta_{-\tau\bk}(x_{d+1})&=\eta_\bk(x_{d+1}).
 \label{eq:aspt-boundary-profile-phs}
\end{align}
The matrix $\rho_y$ exchanges $\ket{\rho+}$ and $\ket{\rho-}$. At the same time, $C_0K$ exchanges the positive- and negative-energy sectors while leaving the profile unchanged. Hence,
\begin{align}
 C^{\rm bulk}V_\bk^*
 =V_{-\tau\bk}\bigl(-iQ_{-\tau\bk}C_0\bigr),
 \label{eq:aspt-boundary-intertwining}
\end{align}
so PHS maps the boundary zero-mode subspace to itself. Restricting PHS and the auxiliary chiral symmetry to this subspace gives
\begin{align}
 C_\bk^{\rm bdy}
 &=V_{-\tau\bk}^\dagger C^{\rm bulk}V_\bk^*
 =-iQ_{-\tau\bk}C_0=C_\bk,\notag\\
 \Gamma_\bk^{\rm bdy}
 &=V_\bk^\dagger\Gamma^{\rm bulk}V_\bk=Q_\bk.
 \label{eq:aspt-induced-symmetry}
\end{align}
Restricting the cellwise bulk PHS to the boundary therefore reproduces the momentum-dependent PHS in Eq.~\eqref{eq:oddmodel-CD}.

\subsubsection{Detached topological boundary state}
If the Hamiltonian $H_\bk$ has a nonzero Chern number ${\rm ch}_{d/2}[H]$, a uniform gap cannot be opened in the flat boundary bands while preserving the auxiliary chiral symmetry $\Gamma^{\rm bulk}$.
To see why, let $q^{\rm bulk}$ denote the lower-left block of the flattened parent Hamiltonian in a basis that diagonalizes the auxiliary chiral symmetry $\Gamma^{\rm bulk}$. The relation $|W_{d+1}[q^{\rm bulk}]|=2|{\rm ch}_{d/2}[H]|$ shows that the boundary states are topologically nontrivial and cannot have a uniform gap.
Topologically nontrivial gapless boundary states that are spectrally separated from the bulk are called detached boundary states
~\cite{AltlandFragility2024,NakamuraDetached2025,ShiozakiDetached2025}.
For the model in Eq.~\eqref{eq:aspt-parent-H}, the obstruction to a uniform boundary gap can also be shown directly.

The occupied and unoccupied bands of $Q_\bk$ have Chern numbers ${\rm ch}_{d/2}[H]$ and $-{\rm ch}_{d/2}[H]$, respectively.
Let $H^{\rm bdy}_\bk$ denote the boundary Hamiltonian. Suppose, for contradiction, that $H^{\rm bdy}_\bk$ has a uniform gap. We can then define its flattening $Q^{\rm bdy}_{\bk} = \operatorname{sgn} H^{\rm bdy}_\bk$.
Chiral symmetry then gives $\{Q_\bk,Q^{\rm bdy}_\bk\}=0$, so $Q^{\rm bdy}_\bk$ exchanges the occupied and unoccupied bands of $Q_\bk$.
The action of $Q^{\rm bdy}_\bk$ preserves the Chern number, which would require ${\rm ch}_{d/2}[H] = -{\rm ch}_{d/2}[H]$. This contradicts the assumption that ${\rm ch}_{d/2}[H]$ is nonzero.

\subsubsection{Gapped boundary and trivialization of the bulk}
Breaking the auxiliary chiral symmetry $\Gamma^{\rm bulk}$ removes the obstruction to gapping the flat boundary bands.
Add to $H_{d+1}$ a perturbation proportional to $H_\bk$ with uniform strength along $x_{d+1}$:
\begin{align}
 H_{d+1,\delta}&=H_{d+1}+\delta \, I_\rho \otimes H_\bk,
 \qquad \delta>0.
 \label{eq:aspt-boundary-perturbation}
\end{align}
The added term breaks $\Gamma^{\rm bulk}$ but preserves the PHS represented by $C^{\rm bulk}$.
Because $(I_\rho\otimes H_\bk)V_\bk=V_\bk H_\bk$, the boundary Hamiltonian is exactly
\begin{align}
 \widehat H_{d+1,\delta,\bk}V_\bk=V_\bk(\delta H_\bk),
 \qquad H_\bk^{\rm bdy}=\delta H_\bk.
 \label{eq:aspt-gapped-boundary}
\end{align}
Each eigenvalue $h$ of $H_\bk$ produces two $(d+1)$-dimensional bulk energies, $\delta h \pm \sqrt{\cos^2 k_{d+1} + h^2 \sin^2 k_{d+1}}$. Choosing $\delta$ to satisfy $0<\delta\max_\bk\|H_\bk\|<\tfrac12\min\{1,\min_\bk\|H_\bk^{-1}\|^{-1}\}$\footnote{We use the operator norm induced by the Euclidean vector norm $\|\cdot\|_2$, namely $\|A\|:=\sup_{\|v\|_2=1}\|Av\|_2$. For a Hermitian matrix $H_\bk$, this gives $\|H_\bk\|=\max_{h\in{\rm Spec}H_\bk}|h|$ and, when the matrix is invertible, $\|H_\bk^{-1}\|^{-1}=\min_{h\in{\rm Spec}H_\bk}|h|$.} therefore gaps the boundary while keeping it separated from the bulk bands, producing a ``detached gapped boundary state.'' Figure~\ref{fig:aspt-boundary-overview}(b) shows the resulting spectrum schematically.
Multiplication by a positive constant leaves the occupied bands unchanged, so
\begin{align}
 {\rm ch}_{d/2}[H^{\rm bdy}]={\rm ch}_{d/2}[H].
 \label{eq:aspt-CD-chern}
\end{align}
The boundary thus has the desired momentum-dependent PHS $C^{\rm bdy}_\bk$ and odd Chern number ${\rm ch}_{d/2}[H]$.

The bulk can also be deformed to an atomic limit while retaining the desired PHS, provided that the auxiliary chiral symmetry is not imposed. To construct the deformation, first return the perturbation coefficient $\delta$ to zero without closing the bulk gap.
The matrix $\Gamma^{\rm bulk}$ anticommutes with the bulk Hamiltonian $H_{d+1}$ and satisfies the PHS relation
\begin{align}
C^{\rm bulk} (\Gamma^{\rm bulk})^* = -\Gamma^{\rm bulk} C^{\rm bulk},
\end{align}
so the bulk Hamiltonian can be continuously deformed to the trivial atomic limit $m \Gamma^{\rm bulk}$.
The deformation is not required to preserve either the boundary gap or the boundary subspace that is spectrally separated from the bulk.
Thus, whenever cellwise symmetries give a $2\Z$ classification in even-dimensional class D or C, an odd Chern number and a momentum-dependent PHS can be realized at the boundary of a trivial bulk.

\subsection{Boundary realizations for the chiral classes}
\label{sec:oddmodel-chiral}

In this subsection, we describe boundary realizations for classes BDI, DIII, CII, and CI.

\subsubsection{Models to be realized at the boundary}

Let $d$ be odd.
We start from an invertible matrix $D_\bk$ with odd winding number $W_d[D]\in2\Z+1$ and construct symmetry matrices $u,v$ satisfying Eq.~\eqref{eq:summary-chiral-symmetries}.
Choose $D_\bk$ to have an odd winding number and a finite Fourier expansion, and impose
\begin{align}
 D_\bk^\dagger D_\bk=D_\bk D_\bk^\dagger
 &=E_\bk^2I,\notag\\
 E_\bk>0,\qquad E_{-\tau\bk}&=E_\bk.
 \label{eq:oddmodel-scalar-energy}
\end{align}
\footnote{Boundary realizations of models that do not satisfy this condition are outside the scope of this paper.}
Here, $E_\bk$ is real and positive. For $d=1,3$, concrete choices are
\begin{align}
 d=1:\quad D_k&=\begin{pmatrix}e^{ik}&0\\0&1\end{pmatrix},\notag\\
 d=3:\quad D_\bk&=\left(-2+\sum_{j=1}^3\cos k_j\right)I_2
       +i\sum_{j=1}^3\sin k_j\,\sigma_j.
 \label{eq:oddmodel-3d}
\end{align}
Both choices have $W_d[D]=1$ and satisfy Eq.~\eqref{eq:oddmodel-scalar-energy}. In one dimension, adding the trivial block $1$ makes the matrix size even and allows the model to accommodate $\epsilon_T=-1$ as well.
Let $q_\bk$ be the flattening of $D_\bk$:
\begin{align}
    q_\bk=D_\bk/E_\bk.
\end{align}

Let $J$ be a constant unitary matrix satisfying
\begin{align}
 \begin{aligned}
  J^\top&=\epsilon_TJ,\\
  J&=I\quad (\epsilon_T=1),\\
  J&=\bigoplus i\sigma_y\quad (\epsilon_T=-1).
 \end{aligned}
 \label{eq:oddmodel-J}
\end{align}
When $\epsilon_T=-1$, we take $q$ to have even matrix size.
The condition for DIII and CI is $(q_\bk u_\bk)^\top=\epsilon_Tq_{-\tau\bk}u_{-\tau\bk}$, so it suffices to set
\begin{align}
 u_\bk=q_\bk^\dagger J
 \quad\Longrightarrow\quad q_\bk u_\bk=J.
 \label{eq:oddmodel-transpose}
\end{align}
For BDI and CII, we instead set
\begin{align}
 v_\bk=J,\qquad u_\bk=q_{-\tau\bk}Jq_\bk^\top.
 \label{eq:oddmodel-diagonal}
\end{align}
For DIII and CI, the construction gives $D_\bk u_\bk=E_\bk J$. For BDI and CII, it gives $u_\bk D_\bk^*J^\dagger=D_{-\tau\bk}$ and $u_\bk^\top=\epsilon_Tu_{-\tau\bk}$. These identities establish Eq.~\eqref{eq:summary-chiral-symmetries}.

We define the Hamiltonian, chiral symmetry, and TRS by
\begin{align}
 &H_\bk=\begin{pmatrix}0&D_\bk^\dagger\\D_\bk&0\end{pmatrix},
 \qquad \G=\begin{pmatrix}I&0\\0&-I\end{pmatrix},
 \notag\\
 &T_\bk = \begin{cases}
    \begin{pmatrix}J&0\\0&q_{-\tau\bk} J q_\bk^\top\end{pmatrix} & (\text{BDI,CII})\\
    \epsilon_T\begin{pmatrix}0&q_{-\tau\bk}^\dag J\\J q_\bk^*&0\end{pmatrix} & (\text{DIII,CI})
 \end{cases}.
 \label{eq:oddmodel-chiral-H}
\end{align}

\subsubsection{Finite-range parent model}
\label{sec:oddmodel-parent}

We now realize $H_\bk,T_\bk,\G$ from the preceding construction at the boundary of a finite-range bulk Hamiltonian.
The sign $c \in \{\pm 1\}, T_\bk\G^*=c\G T_\bk$ distinguishes BDI and CII from DIII and CI.

For the symmetries constructed above, multiplying $T_\bk$ by a suitable positive scalar function $f_\bk$ yields a finite-range matrix $F_\bk=f_\bk T_\bk$:
\begin{align}
 \begin{aligned}
  c=1:\quad F_\bk&=f_\bk T_\bk\\
  &=\begin{pmatrix}
  E_\bk^2J&0\\0&D_{-\tau\bk}JD_\bk^\top
  \end{pmatrix},\qquad f_\bk=E_\bk^2,\\
  c=-1:\quad F_\bk&=f_\bk T_\bk\\
  &=\epsilon_T\begin{pmatrix}
  0&D_{-\tau\bk}^\dagger J\\
  J D_\bk^*&0
  \end{pmatrix},\qquad f_\bk=E_\bk.
 \end{aligned}
 \label{eq:oddmodel-F-examples}
\end{align}
Because $E_\bk^2I=D_\bk^\dagger D_\bk$ is also a finite Fourier series, $F_\bk$ has finite range in both cases. In addition, $F_\bk$ obeys
\begin{align}
 F_\bk^\top&=\epsilon_T F_{-\tau\bk},\notag\\
 F_\bk H_\bk^*&=H_{-\tau\bk}F_\bk,
 &F_\bk\G^*&=c\G F_\bk.
 \label{eq:oddmodel-F-properties}
\end{align}

Introduce two auxiliary two-level systems, with Pauli matrices $\rho_\mu,\kappa_\mu$ ($\mu=x,y,z$), respectively. In $d+1$ dimensions, let $k_{d+1}$ denote the momentum along the added direction and define the generalized SSH model
\begin{align}
 K_\bk&=\begin{pmatrix}0&F_\bk^\top\\
                         F_\bk^*&0\end{pmatrix}_\kappa,
 &K_\bk^2&=f_\bk^2I,\notag\\
 B_{\bk,k_{d+1}}
 &=\frac{I-K_\bk}{2}+e^{ik_{d+1}}\frac{I+K_\bk}{2},\notag\\
 H_{d+1}(\bk,k_{d+1})
 &=\begin{pmatrix}0&B_{\bk,k_{d+1}}\\
                    B_{\bk,k_{d+1}}^\dagger&0\end{pmatrix}_{\rho}.
 \label{eq:oddmodel-parent}
\end{align}
The parent Hamiltonian depends only on $F$ and therefore has finite range in every direction. Its square gives
\begin{align}
 H_{d+1}(\bk,k_{d+1})^2
 &=\left(\cos^2\frac{k_{d+1}}2
     +f_\bk^2\sin^2\frac{k_{d+1}}2\right)I,\notag\\
 \Delta_{\rm bulk}&:=\min\{1,\min_\bk f_\bk\}>0,
 \label{eq:oddmodel-parent-gap}
\end{align}
so the bulk spectrum is separated from zero by at least $\Delta_{\rm bulk}$.

We use the momentum-independent matrices
\begin{align}
 \tilde T&=\begin{pmatrix}0&\epsilon_T I\\I&0\end{pmatrix}_\kappa,
 &\tilde \Gamma&=\begin{pmatrix}
    \G&0\\
    0&c\G^*
 \end{pmatrix}_\kappa,\notag\\
 T^{\rm bulk}&=I_\rho\otimes \tilde T,
 &\G^{\rm bulk}&=\rho_z\otimes \tilde \Gamma.
 \label{eq:oddmodel-bulk-symmetry}
\end{align}
Transposing $F_\bk\G^*=c\G F_\bk$ and using $\G^\dagger=\G$ yields $\G F_\bk^\top=cF_\bk^\top\G^*$. The resulting symmetry relations are
\begin{align*}
 \tilde TK_\bk^*\tilde T^\dagger&=K_{-\tau\bk},\qquad [\tilde \Gamma,K_\bk]=0,\\
 \tilde T\tilde \Gamma^*&=c\tilde \Gamma \tilde T,
\end{align*}
and $H_{d+1}$ has the symmetries
\begin{align}
 T^{\rm bulk}H_{d+1}(\bk,k_{d+1})^*
 (T^{\rm bulk})^\dagger
 &=H_{d+1}(-\tau\bk,-k_{d+1}),\notag\\
 \G^{\rm bulk} H_{d+1}(\bk,k_{d+1}) (\G^{\rm bulk})^\dag &= - H_{d+1}(\bk,k_{d+1}),\notag\\
 T^{\rm bulk}(T^{\rm bulk})^*&=\epsilon_T I,\notag\\
 T^{\rm bulk}(\G^{\rm bulk})^*&=c\G^{\rm bulk}T^{\rm bulk}.
 \label{eq:oddmodel-bulk-algebra}
\end{align}
Both symmetry matrices $T^{\rm bulk},\G^{\rm bulk}$ are momentum independent, so the parent model has cellwise symmetries.

\subsubsection{Actions induced at the boundary}

Restrict the layers to the half-space $x_{d+1}=n\leq0$ and replace $e^{\mp ik_{d+1}}$ by the shift operators $\ket{x_{d+1}}\mapsto\ket{x_{d+1}\pm1}$. Remove hopping terms that cross the boundary; in particular, the shift corresponding to $e^{-ik_{d+1}}$ sends $\ket0$ to zero.

First consider $f_\bk=1$. For each original degree of freedom, only one of the four auxiliary degrees of freedom survives at the boundary. To see this, write $\rho_z\ket{\rho\pm}=\pm\ket{\rho\pm}$. Each eigenvalue $\pm1$ of $K_\bk$ defines an SSH chain with the two degrees of freedom in the $\rho$ space. In the $K_\bk=+1$ sector, $B=e^{ik_{d+1}}$ couples $\ket{n,\rho-}$ to $\ket{n-1,\rho+}$, leaving only $\ket{0,\rho+}$ unpaired at the boundary. In the $K_\bk=-1$ sector, $B=I$ pairs $\rho+$ with $\rho-$ within each layer, so no unpaired degree of freedom remains.
For general $f_\bk>0$, the intercell coupling $(1+f_\bk)/2$ exceeds the intracell coupling $(1-f_\bk)/2$ in magnitude in the $K_\bk=+f_\bk$ sector. The inequality is reversed in the $K_\bk=-f_\bk$ sector. Boundary zero modes therefore occur only on the $\rho+$ side of the sector with the stronger intercell coupling. Their number equals the dimension of the original model before the auxiliary spaces were introduced.

An explicit expression for the zero modes follows from
\begin{align}
 \tilde V_\bk&=\frac{(i)^{\frac{1-\epsilon_T}{2}}}{\sqrt2}\begin{pmatrix}I\\T_\bk^*\end{pmatrix},
 \notag\\
 \tilde V_\bk^\dagger \tilde V_\bk&=I,\qquad K_\bk \tilde V_\bk=f_\bk \tilde V_\bk.
\end{align}
For $\epsilon_T=-1$, we choose the constant phase $i$ so that the induced TRS below is exactly $T_\bk$.
Let $\eta_\bk(x_{d+1})$ be a real, normalized decay function. The isometry into the boundary zero-mode subspace is
\begin{align}
 V_\bk v
 &=\sum_{x_{d+1} \leq0}\eta_\bk(x_{d+1})
   \ket{x_{d+1}}\otimes\ket{\rho+}\otimes \tilde V_\bk v,\quad V_\bk^\dagger V_\bk=I.
 \label{eq:oddmodel-boundary-frame}
\end{align}
For the half-space Hamiltonian $\widehat{H}_\bk$, the identity $\widehat{H}_\bk V_\bk=0$
shows that the boundary bands are flat at zero energy.
The rest of the half-space spectrum is separated from zero by at least $\Delta_{\rm bulk}$.

The identities $\tilde T\tilde V_\bk^*=\tilde V_{-\tau\bk}T_\bk$, $\tilde\Gamma\tilde V_\bk=\tilde V_\bk\G$, and $f_{-\tau\bk}=f_\bk$ determine the action of the bulk symmetries on the boundary subspace:
\begin{align}
 T^{\rm bulk}V_\bk^*
 &=V_{-\tau\bk}T_\bk,\notag\\
 \G^{\rm bulk}V_\bk&=V_\bk\G.
 \label{eq:oddmodel-induced}
\end{align}
The boundary therefore realizes the prescribed $T_\bk$ and constant $\G$ without modification.

\subsubsection{Boundary gap and trivialization of the bulk}
\label{sec:oddmodel-ASPT}

The finite-range Hamiltonian
\begin{align}
 \tilde H_\bk=\begin{pmatrix}
    H_\bk&0\\
    0&H_{-\tau\bk}^*
 \end{pmatrix}_\kappa
 \label{eq:oddmodel-mass}
\end{align}
obeys the following identities as a consequence of Eq.~\eqref{eq:oddmodel-F-properties}, the anticommutation relation $\{\G,H_\bk\}=0$, and the complex conjugate of that relation:
\begin{align}
 [K_\bk,\tilde H_\bk]&=0,
 &\tilde H_\bk \tilde V_\bk&=\tilde V_\bk H_\bk,\notag\\
 \tilde T \tilde H_\bk^*\tilde T^\dagger&=\tilde H_{-\tau\bk},
 &\{\tilde \Gamma,\tilde H_\bk\}&=0.
\end{align}
These identities imply that the perturbation $I_\rho\otimes\tilde H_\bk$ commutes with $H_{d+1}$ and preserves $T^{\rm bulk},\G^{\rm bulk}$. Adding the perturbation with equal strength in every layer gives
\begin{align}
 \widehat{H}_{\delta,\bk}
 &=\widehat{H}_\bk+\delta\, I_\rho\otimes \tilde H_\bk,\notag\\
      \widehat{H}_{\delta,\bk} V_\bk
 &=V_\bk H^{\rm bdy}_\bk, \quad H^{\rm bdy}_\bk =\delta H_\bk.
 \label{eq:oddmodel-gapped-boundary}
\end{align}
The perturbation $\delta I_\rho\otimes\tilde H_\bk$ breaks the auxiliary anticommutation relation with $\rho_z\otimes I$ but retains the physical chiral symmetry $\G^{\rm bulk}$ and TRS.

Let $M_H:=\max_\bk\|H_\bk\|$.
For $0<\delta M_H<\Delta_{\rm bulk}/2$, the boundary bands are gapped and separated from the rest of the half-space spectrum [Fig.~\ref{fig:aspt-boundary-overview}(b)]. The original $d$-dimensional Hamiltonian is thus reproduced as the positive rescaling $\delta H_\bk$, with the same symmetries $T_\bk,\G$ and winding number.

The bulk also admits a deformation to the atomic limit $\rho_x\otimes I$ that preserves TRS and the physical chiral symmetry. First return $\delta$ to zero, then follow the path
\begin{align*}
 H_{d+1}\longrightarrow\rho_z\otimes\tilde H
 \longrightarrow\rho_x\otimes I.
\end{align*}
For each arrow, interpolate between the adjacent Hamiltonians with weights $\cos\theta$ and $\sin\theta$, respectively ($0\leq\theta\leq\pi/2$). The Hamiltonians in the first pair anticommute because $[K,\tilde H]=0$; those in the second pair also anticommute. Each Hamiltonian in the displayed sequence is gapped and preserves $T^{\rm bulk},\G^{\rm bulk}$, so both interpolation paths have finite range and remain gapped. The boundary gap is not required to remain open along these paths.

Models that have odd winding numbers and satisfy Eq.~\eqref{eq:oddmodel-scalar-energy} can therefore be realized at the boundary of a trivial finite-range bulk. Both the parent Hamiltonian and the perturbation that gaps its boundary have finite range. The induced boundary symmetry $T_\bk=F_\bk/f_\bk$, however, is generally a quasilocal action with exponentially decaying matrix elements.

\section{\texorpdfstring{\NoCaseChange{Conclusions and outlook}}{Conclusions and outlook}}
\label{sec:conclusion}

We have related the parity of Chern and winding numbers of finite-dimensional Bloch Hamiltonians to the topology of momentum-dependent antiunitary symmetry actions. The relations cover all 15 cases in up to four spatial dimensions with a strong $2\Z$ classification for cellwise symmetries, and were derived with a cellwise chiral operator whenever chiral symmetry is present (Table~\ref{tab:result}). Whenever a fixed symmetry action enforces odd parity, any symmetric gapped phase must have nontrivial topology. For chiral classes in one and three dimensions, we also defined a $\Z_2$ invariant ${\rm WP}_d[H,\G]$ for momentum-dependent $\G_\bk$ in the fixed original Bloch basis. The parity invariant is independent of the eigenframes of $\G$ and obstructs a simultaneous deformation of $H$ and $\G$ to momentum-independent operators that preserves the gap and the chiral condition.

Explicit models and no-go theorems distinguish cases admitting finite-range symmetry actions with odd invariants from cases that admit quasilocal actions but no finite-range ones. In particular, we proved finite-range no-go theorems for the crystalline symmetries of three-dimensional CII and four-dimensional D. For the two-dimensional AII and four-dimensional AI and AII settings considered here, odd values are impossible for finite-dimensional Bloch Hamiltonians and symmetry matrices, even with arbitrary continuous momentum dependence.

Extending the results to quantum many-body systems remains an open problem.
Interactions or bosonic degrees of freedom may permit odd values excluded by our results.
For four-dimensional class AI with internal symmetry, one possible approach is to extend methods that extract indices from symmetries restricted to finite regions~\cite{KawagoeShirley2025,ShirleyZhangJiLevin2026,ChavdaKobayashi2026} to antiunitary actions.
For a system with onsite charge conservation, the question is whether such an index would require a four-dimensional quantum Hall response corresponding to an odd value of the free-fermion invariant ${\rm ch}_2$~\cite{QiHughesZhang2008}.
Constructing the index and realizing an odd response are both open problems.

A further extension would relax our assumptions on the symmetry algebra.
In recent lattice models with non-onsite $U(1)$ symmetries, vector and axial charges generate an Onsager algebra, and products of the corresponding transformations include actions that move degrees of freedom on the lattice.
The resulting lattice construction realizes a chiral anomaly in the continuum limit~\cite{ChatterjeePaceShao2025}.
The present analysis keeps the square and commutation relations of each AZ class fixed and excludes realizations in which products of internal symmetries generate additional nontrivial lattice translations.
Bloch phases arising from orbital positions remain allowed when they preserve the original algebraic relations.

Another direction is to study symmetry actions that enforce nontrivial phases in $\Z_2$ classifications, rather than constrain the parity of integer invariants. 
A two-dimensional DIII example~\cite{Lu2024MagneticLSM} has an odd number of Majorana Kramers pairs per primitive unit cell. 
TRS squares to fermion parity, and the commutator of magnetic translations is also fermion parity. 
Every symmetry-preserving gapped quadratic Hamiltonian then belongs to the nontrivial $\Z_2$ phase. 
A band-theoretic formulation of SPT--LSM constraints could determine when the topology and locality of a symmetry action enforce a nontrivial phase.

Physical applications require concrete realizations of non-cellwise symmetry actions with nontrivial invariants and compatible gapped phases, either in standalone systems or as ASPT phases at boundaries of higher-dimensional bulks. Relating the invariants to boundary transport and to states bound to defects or hinges is another important task. In our ASPT constructions, the bulk can be deformed to an atomic limit, yet its boundary carries a nontrivial symmetry action and supports a topologically nontrivial gapped phase. These constructions offer a setting in which boundary properties extend beyond the information contained in the bulk classification alone.

\begin{acknowledgments}
The author thanks Mikio Furuta for pointing out that momentum-dependent $C_\bk$ can allow an odd Chern number and for identifying the relation between the parity of the Chern number and the second Stiefel--Whitney class of the real bundle defined by $C_\bk$, as discussed in Sec.~\ref{sec:2d_classD}. Those observations motivated this work.
The author used GPT-6 Astra (OpenAI) to assist with proof development, literature searches, manuscript drafting and language editing, and the preparation of preliminary schematic figures.
Proposals generated by the model contributed specifically to the boundary construction for chiral classes in Sec.~\ref{sec:oddmodel-chiral}, the parallel-transport argument in Sec.~\ref{sec:c2f-ph-transport}, and the finite-range no-go proof in Appendix~\ref{app:d4-finite-range}.
The model also drew attention to the spin-structure dependence of the Chern--Simons action discussed in Appendix~\ref{sec:c2f-spin-note}.
The author verified and revised the AI-assisted material and takes full responsibility for the manuscript.
This work was supported by JSPS KAKENHI
Grant Nos. JP22H05118, JP26H01305, and JP26K00629.
\end{acknowledgments}

\appendix

\section{Symmetries in fermionic many-body systems}
\label{sec:summary-manybody}

This appendix summarizes how the single-particle symmetry conditions used in the main text can be realized in fermionic many-body systems.

In the many-body literature, an internal symmetry is called onsite if it is a product of actions on individual local degrees of freedom~\cite{ChenLiuWen2011,Seifnashri2024}. A non-onsite internal symmetry can also admit a symmetric gapped invertible phase~\cite{ShirleyZhangJiLevin2026}. We use cellwise for both internal and crystalline symmetries whose matrices are momentum independent in the fixed unit-cell orbital basis. The term onsite retains its many-body meaning and is not extended to crystalline symmetries that move degrees of freedom in space.
Related work on interacting fermions has studied ASPT states, which are symmetric gapped states realized on the boundary of a topologically trivial bulk~\cite{WangQiGu2019,WangGu2020}. Section~\ref{sec:aspt-realization} discusses their relation to the boundary realizations of momentum-dependent symmetries considered here.

An AZ class imposes conditions on the matrix $H_\bk$, but the same conditions can arise from different many-body symmetry actions~\cite{Wen2012}.
In particular, particle-hole symmetry of complex fermions with $U(1)$ symmetry must be distinguished from the particle-hole redundancy of Nambu fermions (see, e.g., Ref.~\cite{ChiuTeoSchnyderRyu2016}).
We first construct a particle-number-conserving realization for every class considered here, then discuss more general fermionic realizations.

\subsection{Particle-number-conserving realizations}

Consider a finite periodic system in which each component of $\hat c_\bk$ annihilates an independent complex fermion. Define the particle-number-conserving free-fermion Hamiltonian by
\begin{align}
 \widehat H=\sum_\bk\left(
 \hat c_\bk^\dag H_\bk\hat c_\bk-\frac12\tr H_\bk\right),
 \qquad \widehat P_f=(-1)^{\widehat F},
 \label{eq:summary-fock-H}
\end{align}
where $\widehat F = \sum_\bk \hat c_\bk^\dag \hat c_\bk$ is the total particle number. The constant term ensures the same energy reference before and after particle-hole exchange. With $\epsilon_T,\epsilon_C\in\{+1,-1\}$ as in Table~\ref{tab:AZ}, Eqs.~\eqref{eq:summary-TRS} and \eqref{eq:summary-PHS} are realized, respectively, by
\begin{align}
 \widehat{\mathcal T}\hat c_\bk\widehat{\mathcal T}^{-1}
 &=T_\bk^\dag\hat c_{-\tau\bk},
 &\widehat{\mathcal T}i\widehat{\mathcal T}^{-1}&=-i,\notag\\
 \widehat{\mathcal C}\hat c_\bk\widehat{\mathcal C}^{-1}
 &=C_\bk^\top(\hat c_{-\tau\bk}^\dag)^\top,
 &\widehat{\mathcal C}i\widehat{\mathcal C}^{-1}&=i,
 \label{eq:summary-fock-actions}
\end{align}
and both many-body symmetry operators commute with $\widehat H$. The operators $\widehat{\mathcal T}$ and $\widehat{\mathcal C}$ implement antiunitary time reversal and unitary particle-hole exchange, respectively.
Their squares act on the fermion operators as
\begin{align}
 \widehat{\mathcal T}^2\hat c_\bk\widehat{\mathcal T}^{-2}
 &=\epsilon_T\hat c_\bk,\qquad
 \widehat{\mathcal C}^2\hat c_\bk\widehat{\mathcal C}^{-2}
 =\epsilon_C\hat c_\bk.
 \label{eq:summary-fock-squares}
\end{align}
Thus, $+1$ in Table~\ref{tab:AZ} denotes the identity action, while $-1$ denotes the action of fermion parity $\widehat P_f$.
With these many-body actions, classes AI and AII can be realized as particle-number-conserving systems with time reversal squaring to $1$ and $\widehat P_f$, respectively. Classes D and C can be realized with particle-hole exchange squaring to $1$ and $\widehat P_f$, respectively. Classes BDI, DIII, CII, and CI have both symmetries, whose squares are $(1,1)$, $(\widehat P_f,1)$, $(\widehat P_f,\widehat P_f)$, and $(1,\widehat P_f)$, respectively.

Chiral symmetry in a particle-number-conserving system can be realized as antiunitary particle-hole exchange.
For the momentum-dependent condition~\eqref{eq:summary-chiral}, take
\begin{align}
 \widehat{\mathcal S}\hat c_\bk\widehat{\mathcal S}^{-1}
 &=\G_\bk^\top(\hat c_\bk^\dag)^\top,\qquad
 \widehat{\mathcal S}i\widehat{\mathcal S}^{-1}=-i.
 \label{eq:summary-fock-chiral}
\end{align}
Combining complex conjugation with the reordering of creation and annihilation operators transforms the single-particle matrix in Eq.~\eqref{eq:summary-fock-H} as $H_\bk\mapsto-\G_\bk H_\bk\G_\bk$. Invariance of the many-body Hamiltonian is therefore equivalent to the chiral condition. Moreover, $\G_\bk^2=I$ implies that $\widehat{\mathcal S}^2$ acts trivially on fermion operators. The combination of antiunitarity and particle-hole exchange leaves the momentum label unchanged.
A momentum-independent $\G_\bk$ gives a cellwise action confined to each unit cell. In general, the action mixes creation operators in different cells.

The geometric cell motion factored out in Eq.~\eqref{eq:summary-fock-actions} is $\bm R\mapsto\tau\bm R$ in both cases.
The case $d_\tau=0$ corresponds to an internal action, the case $d_\tau=1$ involves a reflection, and the cases $d_\tau=2,3$ involve reversal of two and three coordinates, respectively.
The two-dimensional AII and D settings in this paper can be realized, respectively, by time reversal combined with a reflection and by unitary particle-hole exchange combined with spatial inversion.
If $T_\bk,C_\bk$ are momentum independent, the actions are cellwise: orbitals in cell $\bm R$ are mapped only to orbitals in cell $\tau\bm R$. Momentum dependence describes additional motion and mixing between cells. Finite range means that the support stays within a finite distance of $\tau\bm R$, independent of system size; for a quasilocal action, the matrix elements decay exponentially with distance from the mapped cell. Even ordinary crystalline symmetries can produce additional cell displacements, depending on the orbital positions and the chosen unit cell. The locality conditions in Sec.~\ref{sec:summary-locality} concern the single-particle action and are distinct from realizability by a finite-depth local circuit.

\subsection{BdG Hamiltonians and additional symmetries}
\label{sec:summary-bdg-symmetries}

The Nambu spinor
$\hat\Psi_\bk=(\hat c_\bk,(\hat c_{-\bk}^\dag)^\top)^\top$
contains redundant particle and hole components. A BdG Hamiltonian therefore satisfies the identity
\begin{align}
 \Sigma_x H_\bk^*\Sigma_x=-H_{-\bk},\qquad
 \Sigma_x=\begin{pmatrix}0&I\\I&0\end{pmatrix}.
 \label{eq:summary-bdg-redundancy}
\end{align}
The identity expresses the Nambu redundancy, not a physical symmetry~\cite{ChiuTeoSchnyderRyu2016}.
A superconductor subject only to this redundancy belongs to class D. Adding physical time reversal without a spatial action gives class BDI if its square acts as $1$ on fermion operators, and class DIII if its square acts as $\widehat P_f$. The time-reversal matrix may depend on momentum.

Using the Nambu redundancy, one can express an additional PHS as a unitary symmetry and an additional TRS as a unitary matrix action that reverses the sign of the energy. Both correspondences are invertible.
Indeed, Eqs.~\eqref{eq:summary-PHS} and \eqref{eq:summary-TRS} give
\begin{align}
 g_\bk:=C_{-\bk}\Sigma_x,\qquad
 &g_\bk H_\bk g_\bk^\dag=H_{\tau\bk},
 \label{eq:summary-bdg-spatial}\\
 \widetilde\G_\bk:=T_{-\bk}\Sigma_x,\qquad
 &\widetilde\G_\bk H_\bk\widetilde\G_\bk^\dag=-H_{\tau\bk}.
 \label{eq:summary-bdg-trs-product}
\end{align}
Conversely, unitary matrices $g_\bk,\widetilde\G_\bk$ satisfying the respective Hamiltonian relations define PHS and TRS through
\begin{align}
 C_\bk=g_{-\bk}\Sigma_x,\qquad
 T_\bk=\widetilde\G_{-\bk}\Sigma_x.
 \label{eq:summary-bdg-partner-inverse}
\end{align}
Since $\Sigma_x$ is constant, these correspondences preserve the cellwise, finite-range, or quasilocal character of the matrices.

A many-body realization must also respect the Nambu constraint, which follows from $\Sigma_x(\hat\Psi_\bk^\dag)^\top = \hat\Psi_{-\bk}$. We therefore also impose
\begin{align}
 \Sigma_xX_\bk^*\Sigma_x=X_{-\bk},\qquad
 X=C,T,g,\widetilde\G.
 \label{eq:summary-bdg-trs-compatibility}
\end{align}
The correspondences between PHS or TRS and unitary matrix actions preserve Nambu compatibility. Multiplying $C_\bk$ by a constant phase leaves the AZ PHS condition and the sign of its square unchanged, but can affect Nambu compatibility. We therefore choose a compatible phase when interpreting the matrices in BdG systems.
On the many-body Hilbert space, $g_\bk$ corresponds to a unitary symmetry $\widehat{\mathcal U}$ and $T_\bk$ to an antiunitary symmetry $\widehat{\mathcal T}$, acting as
\begin{align}
 \widehat{\mathcal U}\hat\Psi_\bk\widehat{\mathcal U}^{-1}
 &=g_\bk^\dag\hat\Psi_{\tau\bk},
 &\widehat{\mathcal U}i\widehat{\mathcal U}^{-1}&=i,\notag\\
 \widehat{\mathcal T}\hat\Psi_\bk\widehat{\mathcal T}^{-1}
 &=T_\bk^\dag\hat\Psi_{-\tau\bk},
 &\widehat{\mathcal T}i\widehat{\mathcal T}^{-1}&=-i.
 \label{eq:summary-bdg-trs-action}
\end{align}
Both many-body operations preserve the Hamiltonian. By contrast, $\widetilde\G_\bk$ exchanges the positive and negative energies of the BdG matrix; the corresponding symmetry of the many-body Hamiltonian is the antiunitary operator $\widehat{\mathcal T}$.

The correspondences also preserve the signs of the squares. If $C_\bk^\top=\epsilon_C C_{-\tau\bk}$ and $T_\bk^\top=\epsilon_T T_{-\tau\bk}$, the compatibility condition gives
\begin{align}
 g_{\tau\bk}g_\bk=\epsilon_C I,\qquad
 \widetilde\G_{\tau\bk}\widetilde\G_\bk=\epsilon_T I.
 \label{eq:summary-bdg-partner-squares}
\end{align}
Thus, an additional unitary $\Z_2$ symmetry with $g_{\tau\bk}g_\bk=I$ gives a PHS squaring to $+1$. If $g_{\tau\bk}g_\bk=-I$, the square acts as fermion parity on fermion operators. This is a $\Z_4^F$-type action and corresponds to a PHS squaring to $-1$.

For $d_\tau=0$, $\widetilde\G_\bk$ is a chiral operator up to a phase. Setting $\G_\bk=z\widetilde\G_\bk$, with $z=1$ for $\epsilon_T=+1$ and $z=i$ for $\epsilon_T=-1$, gives
\begin{align}
 \G_\bk^\dag=\G_\bk,\qquad
 \G_\bk^2=I,\qquad
 \G_\bk H_\bk\G_\bk=-H_\bk.
 \label{eq:summary-bdg-chiral-normalization}
\end{align}
The normalized operator obeys $\Sigma_x\G_\bk^*\Sigma_x=\epsilon_T\G_{-\bk}$. Conversely, a chiral operator obeying this Nambu relation defines TRS through $T_\bk=z^{-1}\G_{-\bk}\Sigma_x$.
The resulting $\G_\bk$ generally depends on momentum. Section~\ref{sec:summary-chiral-general-frame} develops the framework for such operators: for gapped systems in $d=1,3$, it constructs frames that define an off-diagonal block $q_\bk$ and introduces the parity invariant ${\rm WP}_d[H,\G]$ in the fixed original orbital basis. For $d_\tau>0$, Eq.~\eqref{eq:summary-bdg-trs-product} also involves the spatial action $\tau$ and must be distinguished from a chiral operator that anticommutes with $H_\bk$ at the same momentum.

The same correspondences express additional unitary symmetries and crystalline chiral symmetries as momentum-dependent PHS or TRS. Although we do not classify systems subject to both Nambu redundancy and an additional symmetry, the parity constraints derived from the additional PHS or TRS alone remain valid when the redundancy is also imposed.
For example, Ref.~\cite{KobayashiInamuraShiozaki2026} constructs a non-onsite $\Z_4^F$ action $g_\bk=-iQ_\bk$ from the flattened Hamiltonian $Q_\bk=H_\bk(H_\bk^2)^{-1/2}$. The corresponding PHS matrix is $C_\bk=g_{-\bk}\Sigma_x=-iQ_{-\bk}\Sigma_x$, which satisfies the Nambu compatibility condition in this phase convention. Details are given in Sec.~\ref{sec:2dclassC-model}.

\section{Chern--Simons terms and spin structures}
\label{sec:c2f-spin-note}
On a general closed oriented four-manifold, $\int_X{\rm ch}_2(E)=\tfrac12\langle c_1(E)^2,[X]\rangle-\langle c_2(E),[X]\rangle$ can be half-integral. An extension to an oriented four-manifold therefore does not, by itself, define an absolute CS invariant for ${\rm ch}_2$ $\bmod1$. On a spin manifold, Wu's formula ensures that the integral of $c_1^2$ is even, allowing a CS term defined $\bmod1$.
We therefore fix a spin structure $\mathfrak s$ on $Y$, extend it together with the bundle and connection to a spin four-manifold $Z$, and define
\begin{align}
\begin{gathered}
    {\rm CS}_3[A;\mathfrak s]
       :=-\frac1{8\pi^2}\int_Z\tr F(\widetilde A)^2\pmod1,\\
    \partial(Z,\widetilde{\mathfrak s})=(Y,\mathfrak s).
    \end{gathered}
    \label{eq:c2f-ch2-spin-definition}
\end{align}
An extension exists because $\Omega^{\rm Spin}_3(BU(n))=0$. The resulting CS invariant is independent of the chosen extension.
When a global frame exists, the CS invariant agrees $\bmod1$ with the integral of the usual local form
\begin{align}
    {\rm cs}_3(A)=-\frac1{8\pi^2}\tr\left(A\,dA+\frac23A^3\right).
\end{align}

However, ${\rm CS}_3[A;\mathfrak s]$ may depend on the spin structure $\mathfrak s$.
Changing the spin structure by $\alpha\in H^1(Y;\Z_2)$ gives the transformation law~\cite{MooreChernSimons2019}
\begin{align}
    &{\rm CS}_3[A;\mathfrak s+\alpha]-{\rm CS}_3[A;\mathfrak s]\notag \\
       &\qquad\equiv\frac12\langle\alpha\cup(c_1(E)\bmod2),[Y]\rangle
       \pmod1.
    \label{eq:c2f-ch2-spin-change}
\end{align}
Here, the $\Z_2$ values $0,1$ on the right-hand side represent $0,1/2\in\R/\Z$, respectively.

We derive the transformation law directly from Eq.~\eqref{eq:c2f-ch2-spin-definition}.
Let $Z_0,Z_1$ be spin extensions of $(Y,\mathfrak s,E)$ and $(Y,\mathfrak s+\alpha,E)$, respectively. Forget the spin structures and glue the extensions as $X=Z_1\cup_Y(-Z_0)$, identifying the same bundle and connection on the boundary. Denote the glued bundle by $E_X$.
The boundary spin structures need not agree, so $X$ need not be spin.
The difference of the two CS values equals $\int_X{\rm ch}_2(E_X)$ $\bmod1$. Writing $\bar c_1:=c_1(E_X)\bmod2$, the integrality of $c_2$ and Wu's formula give
\begin{align}
 \int_X{\rm ch}_2(E_X)
 &\equiv\frac12\langle c_1(E_X)^2,[X]\rangle\notag\\
 &\equiv\frac12\langle w_2(TX)\cup\bar c_1,[X]\rangle\pmod1.
\end{align}

We construct a two-cocycle representing $w_2(TX)$ in terms of $\alpha$ using local spin lifts of the transition functions.
All cochains below have $\Z_2$ coefficients.
For the transition functions $g_{ij}\in SO(3)$ on $Y$, fix lifts defining one spin structure,
$\tilde g_{ij}\in{\rm Spin}(3)$, satisfying
$\tilde g_{ij}\tilde g_{jk}\tilde g_{ki}=1$.
Specify the second spin structure relative to the first by
$\tilde h_{ij}=\tilde g_{ij}(-1)^{a_{ij}}$,
$a\in Z^1(Y;\Z_2)$,
$[a]=\alpha$.

Represent the extensions of these spin structures to $Z_0,Z_1$ by the transition functions
$\tilde g^{(Z_0)}_{ij},\tilde h^{(Z_1)}_{ij}\in{\rm Spin}(4)$.
The extended transition functions satisfy $\tilde g^{(Z_0)}_{ij}\tilde g^{(Z_0)}_{jk}
 \tilde g^{(Z_0)}_{ki}=1$, $\tilde h^{(Z_1)}_{ij}\tilde h^{(Z_1)}_{jk}
 \tilde h^{(Z_1)}_{ki}=1$, respectively.
To compensate for the difference $a$ between the boundary spin structures, extend $a$ to $Z_1$
as a one-cochain,
$\tilde a\in C^1(Z_1;\Z_2)$, with $\tilde a|_Y=a$.
(Since $\Omega^{\rm SO}_3(B\Z_2)=\Z_2$, an extension as a one-cocycle need not exist,
but an extension as a one-cochain always does.)

On $Z_1$, introduce the sign-corrected lifts
\begin{align}
 \tilde h^{\prime(Z_1)}_{ij}
 :=\tilde h^{(Z_1)}_{ij}(-1)^{\tilde a_{ij}}.
\end{align}
The corrected lifts satisfy $\tilde h^{\prime(Z_1)}_{ij}
 \tilde h^{\prime(Z_1)}_{jk}
 \tilde h^{\prime(Z_1)}_{ki}
 =(-1)^{(d\tilde a)_{ijk}}$.
Although the corrected lifts are not generally transition functions of a spin structure,
they provide local spin lifts of the $SO(4)$ transition functions.
On the boundary,
$\tilde h^{\prime(Z_1)}|_Y
=\tilde h(-1)^a=\tilde g$,
so the corrected lifts glue to $\tilde g^{(Z_0)}$.
We thus obtain local spin lifts of the transition functions over all of $X$.

The obstruction to the cocycle condition is zero on $Z_0$
and $d\tilde a$ on $Z_1$.
Since $d\tilde a|_Y=da=0$, the two local obstructions glue into the two-cocycle
\begin{align}
 z_X|_{Z_0}=0,\qquad z_X|_{Z_1}=d\tilde a.
\end{align}
By the definition of the obstruction class for a spin lift,
$[z_X]=w_2(TX)$.

Using $d\bar c_1=0$, we obtain, with $\Z_2$ coefficients,
\begin{align}
 \langle w_2(TX)\cup\bar c_1,[X]\rangle
 &=\braket{d\tilde a\cup\bar c_1,[Z_1]}\notag\\
 &=\braket{a\cup\bar c_1,[Y]}.\notag
\end{align}
This proves Eq.~\eqref{eq:c2f-ch2-spin-change}.

In the boundary calculations for classes AI and AII, the main text uses $\mathcal Q_3$ associated with $c_2$ to avoid the complications caused by the spin-structure dependence of the ${\rm CS}$ term.

\section{Proof of Theorem~\ref{thm:finite_range_d43}}
\label{app:d4-finite-range}
Following the proof of Theorem~\ref{thm:CII-finite-range}, we enlarge the unit cell in the $x_1$ direction and multiply the symmetry matrix by $e^{i n k_1}$ so that we may assume
\begin{align}
    C_\bk = A(k_2,k_3,k_4) + B(k_2,k_3,k_4) e^{ik_1}.
\end{align}
The transformation $C_\bk\mapsto e^{ink_1}C_\bk$ preserves the symmetry \eqref{eq:c2f-d43-symmetry}. When Eq.~\eqref{eq:alt64-zeta-phase-shift} is applied at both boundaries, the trivial-bundle corrections cancel in the difference because $C_0,C_\pi$ have the same matrix size. Hence,
\begin{align}
 \begin{aligned}
 &\zeta[e^{ink_1}C_0;\mathfrak s]-\zeta[e^{ink_1}C_\pi;\mathfrak s]\\
 &\quad\equiv\zeta[C_0;\mathfrak s+n\ell_1]
                  -\zeta[C_\pi;\mathfrak s+n\ell_1]\\
 &\quad\equiv\zeta[C_0;\mathfrak s]-\zeta[C_\pi;\mathfrak s]\pmod2.
 \end{aligned}
 \label{eq:app-d4fr-phase-invariance}
\end{align}
The last congruence holds because, when a gapped $H$ exists, the winding number $W_3[h]$ on the right-hand side of Eq.~\eqref{eq:c2f-d43-zeta-transformation} is independent of the spin structure.
The second Chern number ${\rm ch}_2$ is also invariant under unit-cell enlargement. The right-hand side of the desired $\bmod2$ constraint \eqref{eq:c2f-d43-final} is therefore unchanged by unit-cell enlargement.

Write $p=(k_2,k_3)$.
Unitarity of $C_\bk$ and the symmetry $C_\bk^\top = C_{-\tau\bk}$ give
\begin{align}
 AB^\dagger&=A^\dagger B=0,\notag\\
 AA^\dagger+BB^\dagger
 &=A^\dagger A+B^\dagger B=I,\notag\\
 A(p,k_4)^\top&=A(p,-k_4),\qquad B(p,k_4)^\top=B(p,-k_4).
 \label{eq:app-d4fr-C-partial}
\end{align}
Thus, $P_A = AA^\dagger, P_B = BB^\dagger$ are complementary orthogonal projectors, and $A,B$ have constant ranks.
The image bundles carry the antiunitary actions
$v\mapsto A(p,-k_4)\bar v$ and $v\mapsto B(p,-k_4)\bar v$, respectively.
These actions relate the fibers at $k_4$ and $-k_4$, and each squares to the identity.
On the fixed planes $k_4=b\in\{0,\pi\}$, the actions therefore define the real bundles
\begin{align}
 (E_{A,b})_p&=\{v\in\operatorname{Ran}A(p,b):A(p,b)\bar v=v\},\notag\\
 (E_{B,b})_p&=\{v\in\operatorname{Ran}B(p,b):B(p,b)\bar v=v\}.
\end{align}
With these antiunitary actions, the image bundles of $A,B$ can be viewed as three-dimensional $C_2T$-symmetric bundles.

For a three-dimensional $C_2T$-symmetric insulator with vanishing first Chern class, the difference of the second SW numbers on the two fixed planes $b=0,\pi$ is known to give a strong invariant. The nontrivial phase is a second-order phase with chiral hinge states~\cite{AhnYang2019}.
This known strong invariant suggests that the quantity $\zeta[C_0;\mathfrak s]-\zeta[C_\pi;\mathfrak s]$ likewise characterizes the second-order phase of the $C_2T$-symmetric systems defined by $A,B$.
Since $A,B$ define complementary image bundles, only one of them supplies independent data.
We now establish a direct relation between the two invariants without assuming a vanishing first Chern class.

\begin{lem}
\label{lem:app-d4fr-SW-boundary}
If a gapped $H$ exists, then, for any spin structure $\mathfrak s$,
\begin{align}
 \begin{aligned}
 &\zeta[C_0;\mathfrak s]-\zeta[C_\pi;\mathfrak s]\\
 &\quad\equiv\langle w_2(E_{B,0})- w_2(E_{B,\pi}),[T^2]\rangle\pmod2.
 \end{aligned}
 \label{eq:app-d4fr-SW-boundary}
\end{align}
\end{lem}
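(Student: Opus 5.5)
We mimic the factorization step of Theorem~\ref{thm:CII-finite-range}, now for real bundles on the fixed boundary tori $Y_b=\{k_4=b\}$.

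\textbf{Step 1 (factorization on each boundary).} On $Y_b$ write $C_b(k_1,p)=A(p,b)+e^{ik_1}B(p,b)$, with $A(p,b),B(p,b)$ symmetric and $P_A+P_B=I$. The fixed-point sets $E_{A,b},E_{B,b}$ are real bundles over $T^2_p$ whose complexifications are $\operatorname{Ran}A,\operatorname{Ran}B$. Choose local real orthonormal frames $U_i,V_i$ of these bundles. Then $A=U_iU_i^\top$ and $B=V_iV_i^\top$ locally. The matrix $\mathcal U_i=(U_i,\,e^{ik_1/2}V_i)$ is then a local Takagi factor of $C_b$. Its transition functions are $\operatorname{diag}(S^A_{ij},S^B_{ij})$, extended by the sign $-1$ on the $B$ block around the $k_1$ circle. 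Consequently the real bundle $\mathcal P_{C_b}$ on $T^3$ is $\pi^*E_{A,b}\oplus(\pi^*E_{B,b}\otimes L_1)$, where $\pi:T^3\to T^2_p$ is the projection and $L_1$ is the real line bundle with $w_1(L_1)=\ell_1$.

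\textbf{Step 2 (evaluate $\zeta$).} We compute $\zeta[C_b;\mathfrak s]$ for this bundle, using additivity \eqref{eq:alt64-zeta-additivity}. The $A$ block is independent of $k_1$. The twisted block reduces, via the phase-shift rule \eqref{eq:alt64-zeta-phase-shift}, to a $k_1$-independent bundle evaluated at spin structure $\mathfrak s+\ell_1$. Independence from $k_1$ is exploited by choosing the spin structure antiperiodic in $k_1$, so that \eqref{eq:alt64-zeta-two-variable} kills that term. We first reduce to such an $\mathfrak s$: because a gapped $H$ exists, the left-hand side of \eqref{eq:app-d4fr-SW-boundary} does not depend on $\mathfrak s$. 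The right-hand side does not either. So it suffices to treat a single convenient $\mathfrak s$, antiperiodic in $k_1$.

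With this choice, the $A$ block contributes zero. The $B$ block becomes $\zeta$ of $\pi^*E_{B,b}$ at $\mathfrak s+\ell_1$, which is periodic in $k_1$, plus trivial-bundle corrections. Those corrections depend only on ranks, which agree at $b=0,\pi$, and therefore cancel in the difference.

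\textbf{Step 3 (pullback term).} It remains to show that, for a real bundle $F$ on $T^2$ pulled back to $T^3=S^1_{k_1}\times T^2$ with spin structure periodic along $S^1$,
\[
\zeta[\pi^*F]\equiv\langle w_2(F),[T^2]\rangle+(\text{terms depending only on } w_1(F) \text{ and } \mathfrak s) \pmod 2 .
\]
We plan to get this in one of two ways. The first is to use the spin-change formula \eqref{eq:alt64-zeta-spin-change} with $\ell=\ell_1$, combined with the vanishing at antiperiodic $k_1$; this gives exactly $\langle w_2(\pi^*F)\cup\ell_1,[T^3]\rangle=\langle w_2(F),[T^2]\rangle$ plus $\xi_{\rm triv}$ terms involving $a=w_1$. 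The second is a direct mod~2 index argument on $S^1\times T^2$, in which periodic $k_1$ zero modes reduce to the mod~2 index of the 2D Dirac operator twisted by $F$.

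The $w_1$-dependent $\xi_{\rm triv}$ terms must cancel between $b=0$ and $b=\pi$. Indeed $w_1(E_{B,b})$ is the parity of the $\det B$ winding on the fixed plane, and this is constant in $b$ because $\operatorname{Ran}B$ is a bundle over the whole $T^3_{(p,k_4)}$. This bookkeeping of the $\xi_{\rm triv}[\mathfrak s+\ell+a]$ terms is the main obstacle, and we will evaluate it explicitly using \eqref{eq:alt64-trivial-xi-spin}. Finally, since $w_2(E_{A,b})+w_2(E_{B,b})+w_1w_1$ is constrained by $\pi^*E_A\oplus\pi^*E_B$ being trivial on the $T^2$ factor, only the $B$ data survive, as in the statement.
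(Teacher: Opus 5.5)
Your Steps 2 and 3 follow the paper's route. You pass to a spin structure antiperiodic in $k_1$ (the paper takes $\mathfrak s_{\rm NS}$), use \eqref{eq:alt64-zeta-two-variable} to kill the $k_1$-independent pieces, trade $e^{ik_1}$ for $\mathfrak s+\ell_1$, and apply \eqref{eq:alt64-zeta-spin-change}. This leaves $\langle w_2(E_{B,b}),[T^2]\rangle+b_{2,b}b_{3,b}$, with $b_{j,b}=\langle w_1(E_{B,b}),[\gamma_j]\rangle$.

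\textbf{The gap.} It is at the step you call the main obstacle: your reason for $b_{j,0}=b_{j,\pi}$ is wrong. That $\operatorname{Ran}B$ is a complex bundle over all of $T^3_{(p,k_4)}$, with the antiunitary map $v\mapsto B(p,-k_4)\bar v$ reversing $k_4$, does not force equal $w_1$ on the two fixed planes. For such a Real bundle on a $(k_j,k_4)$ torus, the two fixed-circle values of $w_1$ differ by its Chern number mod $2$.

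For a counterexample, take $B=P$ and $A=1-P$, where $P$ is the occupied projector of $\sin k_2\,\sigma_x+\sin k_4\,\sigma_y+(m+\cos k_2+\cos k_4)\sigma_z$ with $-2<m<0$. Then $C=A+e^{ik_1}B$ is unitary with $C_\bk^\top=C_{-\tau\bk}$. Here $E_{B,b}$ is the real occupied line bundle of the real Hamiltonian at $k_4=b$: it is M\"obius along $k_2$ for $b=0$ and trivial for $b=\pi$. Such a $C$ admits no gapped $H$, and this is exactly where the hypothesis must enter. In your proposal it is used only to make the left-hand side spin-structure independent.

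\textbf{How the paper closes it.} Use $C_\pi=hC_0h^\top$ from Eq.~\eqref{eq:c2f-d43-h}, which gives $\mathcal P_{C_0}\simeq\mathcal P_{C_\pi}$. Combine this with your own Step 1 identification $\mathcal P_{C_b}\simeq E_{A,b}\oplus(L_1\otimes E_{B,b})$. Since $w_1(\mathcal P_{C_b})$ only sees $a_{j,b}+b_{j,b}$, you need the mixed planes. The Whitney formula with $\ell_1^2=0$ gives $b_{j,b}\equiv\langle w_2(\mathcal P_{C_b}),[T^2_{k_1,k_j}]\rangle+r_B\langle w_1(\mathcal P_{C_b}),[\gamma_j]\rangle$. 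The right-hand side is the same at $b=0,\pi$.

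\textbf{Two smaller points.}
First, $\zeta$ is not an invariant of the real bundle alone: for example $\zeta[gg^\top;\mathfrak s]\equiv W_3[g]$ although $\mathcal P_{gg^\top}$ is trivial. Additivity \eqref{eq:alt64-zeta-additivity} needs a literal direct sum, and your local real frames do not give one, since they cannot be globalized when $w_1,w_2\neq0$. The paper instead takes global complex frames $V_{A,b},V_{B,b}$, which exist because the ranges carry real structures, so $c_1=0$ on $T^2$. It writes $C_b=V_b\operatorname{diag}(X_{A,b},e^{ik_1}X_{B,b})V_b^\top$ and applies \eqref{eq:alt64-nu-transformation} with $W_3[V_b]=0$.

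Second, your closing claim that $E_{A,b}\oplus E_{B,b}$ is trivial is false in general, since it is the real bundle of $C_b|_{k_1=0}$. It is also unnecessary, because the $A$ block already contributes zero.
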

\begin{proof}
By Eq.~\eqref{eq:c2f-d43-zeta-transformation}, the difference on the left-hand side is independent of the spin structure,
so we use $\mathfrak s_{\rm NS}$, which is antiperiodic in all directions.
On each fixed plane $b=0,\pi$, the bundles $\operatorname{Ran}A_b$ and $\operatorname{Ran}B_b$ have real structures, so their first Chern classes vanish.
We may therefore choose global complex orthonormal frames $V_{A,b},V_{B,b}$ for the two bundles.
Setting $V_b=(V_{A,b},V_{B,b})$, we can write
\begin{align}
 X_{A,b}&=V_{A,b}^\dagger A_b V_{A,b}^*,\qquad
 X_{B,b}=V_{B,b}^\dagger B_b V_{B,b}^*,\notag\\
 C_b(k_1,p)&=V_b(p)\operatorname{diag}
       (X_{A,b}(p),e^{ik_1}X_{B,b}(p))V_b(p)^\top.
 \label{eq:app-d4fr-C-block}
\end{align}
The matrices $X_{A,b},X_{B,b}$ are symmetric and unitary,
and the second matrix defines a real bundle isomorphic to $E_{B,b}$.
Since $V_b$ depends only on $p$, we have $W_3[V_b]=0$.
Equation~\eqref{eq:alt64-nu-transformation} and additivity under direct sums therefore give
\begin{align}
 \zeta[C_b;\mathfrak s_{\rm NS}]
 &\equiv\zeta[X_{A,b};\mathfrak s_{\rm NS}]
       +\zeta[e^{ik_1}X_{B,b};\mathfrak s_{\rm NS}]\pmod2.
 \label{eq:app-d4fr-zeta-split}
\end{align}
Since $X_{A,b},X_{B,b}$ are independent of $k_1$,
Eq.~\eqref{eq:alt64-zeta-two-variable} gives
$\zeta[X_{A,b};\mathfrak s_{\rm NS}]
=\zeta[X_{B,b};\mathfrak s_{\rm NS}]=0$.

Write the first SW numbers of $E_{B,b}$ as
\begin{align}
 b_{j,b}:=\langle w_1(E_{B,b}),[\gamma_j]\rangle,
 \qquad j=2,3.
 \label{eq:app-d4fr-first-SW}
\end{align}
Here, $\gamma_j$ is the circle in the $j$th direction of $T^2_{k_2,k_3}$.
Pulling $E_{B,b}$ back to $T^3$ gives $w_1(E_{B,b})=b_{2,b}\ell_2+b_{3,b}\ell_3$.
Equation~\eqref{eq:alt64-trivial-xi-spin} gives
$\xi_{\rm triv}[\mathfrak s_{\rm NS}]
=\xi_{\rm triv}[\mathfrak s_{\rm NS}+\ell_1]=0$,
so the trivial-bundle correction in the phase-factor transformation law \eqref{eq:alt64-zeta-phase-shift} vanishes.
Furthermore, the spin-structure transformation law \eqref{eq:alt64-zeta-spin-change} gives
\begin{align}
 \zeta[e^{ik_1}X_{B,b};\mathfrak s_{\rm NS}]
 &\equiv\zeta[X_{B,b};\mathfrak s_{\rm NS}+\ell_1]\notag\\
 &\equiv\langle w_2(E_{B,b}),[T^2]\rangle
       +b_{2,b}b_{3,b}\pmod2.
 \label{eq:app-d4fr-spin-correction}
\end{align}
The product correction comes from
$\xi_{\rm triv}[\mathfrak s_{\rm NS}+\ell_1+b_{2,b}\ell_2+b_{3,b}\ell_3]
=b_{2,b}b_{3,b}$. Thus,
\begin{align}
 \zeta[C_b;\mathfrak s_{\rm NS}]
 &\equiv\langle w_2(E_{B,b}),[T^2]\rangle
       +b_{2,b}b_{3,b}\pmod2.
 \label{eq:app-d4fr-zeta-reduction}
\end{align}

We next show that the correction terms are equal at the two boundaries, using the isomorphism of the corresponding real bundles.
The existence of a gapped $H$ implies the relation in Eq.~\eqref{eq:c2f-d43-h},
$C_\pi=hC_0h^\top$,
so the real bundles associated with $\mathcal P_{C_0}$ and $\mathcal P_{C_\pi}$ are isomorphic.
Let $L_1$ be the real line bundle with a sign reversal after one circuit of the $k_1$ circle.
By Eq.~\eqref{eq:app-d4fr-C-block}, the real bundle associated with $\mathcal P_{C_b}$ is isomorphic to
$E_{A,b}\oplus(L_1\otimes E_{B,b})$.
We leave the pullbacks of $E_{A,b},E_{B,b}$ to $T^3$ implicit.

Let $r_B=\operatorname{rank}B$.
Using $w_1(L_1)=\ell_1$, $\ell_1^2=0$, and the Whitney sum formula,
we obtain, on the mixed plane $T^2_{k_1,k_j}$,
\begin{align}
 b_{j,b}
 &\equiv\left\langle w_2(\mathcal P_{C_b}),[T^2_{k_1,k_j}]\right\rangle\notag\\
 &\quad+r_B\left\langle w_1(\mathcal P_{C_b}),[\gamma_j]\right\rangle
 \pmod2,\qquad j=2,3.
 \label{eq:app-d4fr-first-SW-bundle}
\end{align}
Indeed, the contribution to the second SW number containing $\ell_1$ is
$r_B\langle w_1(E_{A,b}),[\gamma_j]\rangle+(r_B-1)b_{j,b}$.
Adding the term involving the first SW number leaves $b_{j,b}$.
The right-hand side of Eq.~\eqref{eq:app-d4fr-first-SW-bundle} is the same at
$b=0,\pi$ because the boundary bundles are isomorphic. Therefore,
\begin{align}
 b_{j,0}&=b_{j,\pi}\quad(j=2,3),\notag\\
 b_{2,0}b_{3,0}&=b_{2,\pi}b_{3,\pi}\pmod2.
 \label{eq:app-d4fr-first-SW-match}
\end{align}
Subtracting the two boundary values in Eq.~\eqref{eq:app-d4fr-zeta-reduction}
cancels the correction terms and proves the claim.
\end{proof}

The problem is now to evaluate the strong index of the $C_2T$ insulator defined by $\operatorname{Ran} B$,
\begin{align}
    \Delta[E_B]:= \langle w_2(E_{B,0})- w_2(E_{B,\pi}),[T^2]\rangle.
\end{align}
Equations~\eqref{eq:app-d4fr-SW-boundary} and \eqref{eq:c2f-d43-final} give $\Delta[E_B]\equiv{\rm ch}_2[H]\pmod2$. Invariance of the second Chern number under unit-cell enlargement therefore implies that $\Delta[E_B]$ is unchanged by enlargement in the $x_4$ direction.
Choose the unit cell large enough that $B$ and both products $BB^\dagger,B^\dagger B$
have range one in the $x_4$ direction.
Consider the orthogonal projector
\begin{align}
 \begin{aligned}
 P&:=U^\dagger\Pi U,\\
 \Pi&=\frac12
 \begin{pmatrix}
 BB^\dagger&B\\
 B^\dagger&B^\dagger B
 \end{pmatrix},\\
 U&=\frac1{\sqrt2}
 \begin{pmatrix}
 I&iI\\ I&-iI
 \end{pmatrix}.
 \end{aligned}
\end{align}
The projector property of $BB^\dag$ and the transpose symmetry imply
\begin{align}
 P^2=P=P^\dagger,\qquad
 P(p,k_4)^*=P(p,-k_4).
\end{align}
The map $v\mapsto U^\dagger(v,B^\dagger v)^\top/\sqrt2$
transforms the antiunitary action $B$ on $\operatorname{Ran}B$ into complex conjugation.
Thus, the real bundle $F_b$ defined by the image of $P(p,b)$ is isomorphic to $E_{B,b}$ ($b=0,\pi$).
With the chosen unit cell, the projector has the expansion
\begin{align}
 \begin{aligned}
 &P(p,k_4)=M(p)+N(p)e^{ik_4}+N(p)^\top e^{-ik_4},\\
 &M^\top=M=M^*, \quad N^*=N.
 \end{aligned}
 \label{eq:app-d4fr-P-partial}
\end{align}
The equality of the boundary SW numbers follows from the next lemma.

\begin{lem}
Assume finite range in the $p$ directions. Let $F_b$ be the real bundles defined by the images of the orthogonal projectors $P(p,b=0,\pi)$ in Eq.~\eqref{eq:app-d4fr-P-partial}. Then $\langle w_2(F_0),[T^2]\rangle=\langle w_2(F_\pi),[T^2]\rangle$.
\end{lem}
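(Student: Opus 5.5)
The plan is to construct an explicit real orthogonal bundle isomorphism $F_0\to F_\pi$ over $T^2_p$, built algebraically from the Fourier coefficients in Eq.~\eqref{eq:app-d4fr-P-partial}. Since isomorphic real bundles have equal Stiefel--Whitney classes, this gives $\langle w_2(F_0),[T^2]\rangle=\langle w_2(F_\pi),[T^2]\rangle$ directly.

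First I will expand $P^2=P$ in powers of $z=e^{ik_4}$. With $P=M+Nz+N^\top z^{-1}$, comparing coefficients gives
\begin{align}
 N^2=0,\qquad MN+NM=N,\qquad M^2+NN^\top+N^\top N=M .
\end{align}
Let $K_p=\operatorname{Ran}N(p)$ and $L_p=\operatorname{Ran}N(p)^\top=(\ker N(p))^\perp$. The relation $N^2=0$ gives $K_p\subset\ker N=L_p^\perp$, so $K_p\perp L_p$. Hence $N=\Pi_KN\Pi_L$, $\Pi_LN=0$ and $N\Pi_L=N$, where $\Pi_K,\Pi_L$ are the real orthogonal projectors onto $K_p,L_p$.

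Next set $R_p:=I-2\Pi_{L_p}\in O(n)$. The relations above give $RNR=-N$ and, by transposition, $RN^\top R=-N^\top$. If in addition $[M,\Pi_L]=0$, then
\begin{align}
 R\,P(p,0)\,R=M-N-N^\top=P(p,\pi).
\end{align}
In that case $R$ maps $\operatorname{Ran}P(p,0)$ isometrically and $\R$-linearly onto $\operatorname{Ran}P(p,\pi)$. Because $M$, $N$ and $R$ are real, this map respects the real structures and so identifies $F_0\cong F_\pi$.

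To prove $[M,\Pi_L]=0$, I will transpose $MN+NM=N$ to get $N^\top M+MN^\top=N^\top$. On $L=\operatorname{Ran}N^\top$ this says $M$ acts as $I-N^\top MN^{\top+}$ composed appropriately. Together with the $z^0$ relation, I will show that $L$ and $K$ are invariant subspaces of $M$, with $M$ acting on $L\oplus K$ as the projector $\tfrac12\bigl(\begin{smallmatrix}I&\ast\\ \ast&I\end{smallmatrix}\bigr)$ built from the partial isometry part of $N$. The guiding picture is the one-dimensional range-one Wannier structure: $\operatorname{Ran}P$ is a constant piece $G\subset(K\oplus L)^\perp$ plus the graph of $z$ times a partial isometry $L\to K$. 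Since $R$ acts as $+1$ on $G$ and $K$ and as $-1$ on $L$, it flips $z\mapsto -z$, which is exactly the passage from $k_4=0$ to $k_4=\pi$.

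The main obstacle is continuity in $p$. The projector $\Pi_{L_p}$ is continuous only if $\operatorname{rank}N(p)$ is constant, and nothing obvious forces that. This is where I expect to use the finite range in the $p$ directions. I would first try to show that the rank is constant from the relations themselves, via $\operatorname{rank}N=\operatorname{tr}(NN^\top)$ expressed through $M$, e.g. $\operatorname{tr}(NN^\top)=\tfrac12\operatorname{tr}(M-M^2)$ if $M$ has the block form above.

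If the rank can jump, the fallback is to replace $\Pi_L$ by a continuous substitute. One option is the spectral projector of the real symmetric matrix $N^\top N$ onto its large eigenvalues; since $N$ is a partial isometry scaled by $\tfrac12$ in the expected normal form, its singular values should be pinned at $1/2$. Another is to deform $P$ within finite-range real projectors so as to make the rank generic, and then invoke invariance of $w_2$ under continuous deformation. Establishing the pinning of the singular values of $N$ from the algebraic relations is the step I expect to require the real work.
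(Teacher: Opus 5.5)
There is a genuine gap, and it is the continuity step you yourself flag. Your remedies for it rest on a normal form that is false.

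\textbf{What is correct.} Your pointwise algebra is right. The involution $R_p=I-2\Pi_{L_p}$ is the real-momentum analogue of the involution $S=I-2R$ in the paper, which projects onto $\operatorname{Ran}N$ instead of $\operatorname{Ran}N^\top$; either choice works. The commutation $[M,\Pi_L]=0$ needs no normal form: transposing $MN+NM=N$ gives $MN^\top=N^\top(I-M)$, so the real symmetric $M$ preserves $L_p$ and therefore commutes with $\Pi_{L_p}$.

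\textbf{Why the proposed remedies fail.} In general $M|_{L_p}$ has eigenvalues $m$ with $m(1-m)$ equal to the squared singular values of $N$. So the singular values lie in $(0,\frac12]$ and vary with $p$; they are not pinned at $\frac12$. Hence $\operatorname{rank}N=\operatorname{tr}(NN^\top)$ fails, and there is no spectral gap in $N^\top N$ from which to build a continuous projector.

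\textbf{A concrete counterexample.} Take the trigonometric-polynomial unit vector $(X,Y,Z)=(\sin k_2,\cos k_2\sin k_3,\cos k_2\cos k_3)$. Set $c=(1+Z)/2$, $\alpha=(X,Y,0,0)^\top$, $\beta=(-Y,X,0,0)^\top$, $M=c(e_1e_1^\top+e_2e_2^\top)+(1-c)(e_3e_3^\top+e_4e_4^\top)$, and $N=\frac12(e_3\alpha^\top+\beta e_4^\top)$.
\begin{itemize}
\item All projector relations hold, the family is real, and it has finite range.
\item Nevertheless $N(0)=0$.
\item As $p\to0$, your map $R_p|_{F_0}$ tends to the reflection of $\operatorname{span}(e_1,e_2)$ that negates $\alpha/|\alpha|$. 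This limit depends on the direction of approach, so your isomorphism does not extend across $p=0$.
\item A small real deformation cannot restore constant rank either. On a region where $\operatorname{rank}N\equiv2$, the eigenline of $M|_{L_p}$ with eigenvalue near $1$ would be a continuous line field near $\operatorname{span}(e_1,e_2)$ extending $\alpha/|\alpha|$. But that direction makes a full turn around $p=0$, so no such extension over the disk exists.
\end{itemize}

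\textbf{The missing idea} is to leave the real torus; this is where the paper uses finite range in $p$.
\begin{itemize}
\item Pick $r_N=\max_p\operatorname{rank}N(p)$ columns of $N$ forming $V$, chosen so that $d=\det(V^\top V)\not\equiv0$. Since $d$ is a Laurent polynomial, shifting $p\mapsto p-i\lambda a$ with generic $a$ and large $\lambda$ lets one monomial dominate. Then $d\neq0$, and hence $\operatorname{rank}N\equiv r_N$, on the whole shifted torus.
\item Along the shift, $Q_b=2P(\cdot,b)-I$ remains a complex symmetric involution. Therefore $(\operatorname{Re}Q_b)^2=I+(\operatorname{Im}Q_b)^2\ge I$, and the positive eigenbundle of $\operatorname{Re}Q_b$ keeps the isomorphism class of $F_b$.
\item On the shifted torus, use the bilinear projector $V(V^\top V)^{-1}V^\top$ rather than a Hermitian one, which would destroy the transpose relations. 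It gives a complex orthogonal $S$ with $SQ_0S^{-1}=Q_\pi$.
\item Finally, take the polar decomposition $S=UH$ and the path $S_u=UH^u$. Along it, $\operatorname{Re}(S_uQ_0S_u^{-1})$ stays gapped and interpolates between $U(\operatorname{Re}Q_0)U^\top$, with $U$ real orthogonal, and $\operatorname{Re}Q_\pi$. This yields $F_0\cong F_\pi$.
\end{itemize}
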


\begin{proof}
The projector condition gives
\begin{align}
 N^2=O,\quad MN+NM=N.
\end{align}
If $N(p) \equiv O$, then $P(p,0)=P(p,\pi)$, and the claim follows.
Assume $r_N=\max_{p\in T^2}\operatorname{rank}N(p)>0$.
At a point $p \in T^2$ where the rank is maximal, choose $r_N$ independent columns spanning $\operatorname{Ran} N(p)$. Let $V(p)$ be the matrix formed by these columns.
The function $d(p):= \det (V(p)^\top V(p))$ is a finite Laurent polynomial by the finite-range assumption, and the choice of columns ensures that it is not identically zero:
\begin{align}
    d(p) = \sum_m d_m e^{i m \cdot p}.
\end{align}
Choose $a\in\mathbb R^2$ so that the values $a\cdot m$ are all distinct. Extend the domain to complex momenta and shift $p\mapsto p-i\lambda a$. For sufficiently large $\lambda$,
the term $d_m e^{\lambda m \cdot a}$ with maximal $a\cdot m$ uniformly dominates the sum of the remaining terms.
Thus, for some $\lambda_*>0$,
\begin{align}
 d(p-i\lambda_*a)\ne0\qquad(p\in T^2).
\end{align}
The minors of $N$ of order $(r_N+1)$ vanish identically as Laurent polynomials,
so $\operatorname{rank}N\le r_N$ also holds at complex momenta.
At the endpoint $\lambda=\lambda_*$, however, $d\ne0$ implies that
the $r_N$ columns of $V$ are independent.
The endpoint matrix $N$ therefore has rank $r_N$,
and the columns of $V$ span its image.

The deformation to complex momentum preserves the isomorphism classes of the real bundles on the fixed planes $k_4 = 0,\pi$. To see this, set
\begin{align}
 Q_{b,\lambda}(p)
 :=2P(p-i\lambda a,b)-I.
\end{align}
The transpose symmetry remains valid at complex momentum $p$, so the Laurent-polynomial identities
$Q_{b,\lambda}^\top=Q_{b,\lambda}$ and
$Q_{b,\lambda}^2=I$ hold.
The real and imaginary parts of $Q_{b,\lambda}$ are therefore real symmetric, and
\begin{align}
 (\re Q_{b,\lambda})^2=I+(\im Q_{b,\lambda})^2\ge I.
\end{align}
Thus, $\re Q_{b,\lambda}$ has no eigenvalues in $(-1,1)$. Its positive-eigenvalue bundle remains isomorphic to the bundle at $\lambda=0$, namely $F_b$.

For the rest of the proof, evaluate $M,N,V$ at the complex momentum $p-i\lambda_*a$. Define the projector onto $\operatorname{Ran} N$ by
\begin{align}
 R=V(V^\top V)^{-1}V^\top, \quad R^2=R, \quad R^\top = R. 
\end{align}
The definition gives $RN=N$. Every vector $v$ in $\operatorname{Ran} N$ has the form $v=Nu$, and $Nv=N^2u=O$, so $NR=O$.
Moreover, $MN=N(I-M)$ implies that $M$ preserves $\operatorname{Ran}N$, giving $MR=RMR$. Taking the transpose also gives $RM=RMR$.
Together, these relations give
\begin{align}
    RN=N,\qquad NR=0,\qquad [R,M]=0. 
\end{align}
Now introduce
\begin{align}
    S:= I-2 R.
\end{align}
Then $S^\top=S, S^2=I$, and
\begin{align}
    \begin{aligned}
    SMS^{-1}&=M,\\
    SNS^{-1}&=-N,\\
    S N^\top S^{-1}&=-N^\top.
    \end{aligned}
\end{align}
Thus, $S$ is complex orthogonal and relates the matrices on the two fixed planes $k_4=0,\pi$ by a similarity transformation:
\begin{align}
    SQ_{0,\lambda_*}S^{-1} = Q_{\pi,\lambda_*}.
\end{align}

It remains to compare the real bundles on the two fixed planes.
Because $S$ is generally not real, the similarity transformation does not directly define an isomorphism of real bundles.
The polar decomposition provides a path to a real orthogonal matrix. We use this path to deform the associated real symmetric matrices.

Let $S=UH, H = \sqrt{S^\dagger S}$ be the polar decomposition, where $U$ is unitary and $H$ is positive-definite Hermitian.
Since $S^\top S=I$ implies $H^\top=H^{-1}$ and $UU^\top=I$, the matrix $U$ is real orthogonal.
Moreover, the family
\begin{align}
 S_u:=UH^u,\qquad 0\le u\le1
\end{align}
consists of complex orthogonal matrices, with $S_0=U$ and $S_1=S$.
Along this path, define
\begin{align}
 \widetilde Q_u:=S_uQ_{0,\lambda_*}S_u^{-1}.
\end{align}
Then $\widetilde Q_u^\top=\widetilde Q_u$ and $\widetilde Q_u^2=I$. Hence,
$\operatorname{Re}\widetilde Q_u,\operatorname{Im}\widetilde Q_u$ are real symmetric and satisfy $(\operatorname{Re}\widetilde Q_u)^2=I+(\operatorname{Im}\widetilde Q_u)^2\ge I$.
The real symmetric matrix $\operatorname{Re}\widetilde Q_u$ therefore stays gapped at zero throughout the deformation,
and its positive-eigenvalue real bundle has a fixed isomorphism class as $u$ varies.
At the endpoints,
\begin{align}
 \re \widetilde Q_0 &=U(\operatorname{Re}Q_{0,\lambda_*})U^\top,\notag\\
 \re \widetilde Q_1&=\operatorname{Re}Q_{\pi,\lambda_*}.\notag
\end{align}
Since $U$ is real orthogonal, the real bundle at the start of this path is isomorphic to
the positive-eigenvalue real bundle of $\operatorname{Re}Q_{0,\lambda_*}$.
The deformation $\operatorname{Re}\widetilde Q_u$ preserves the isomorphism class between the two endpoints. Hence, the positive-eigenvalue real bundles of $\operatorname{Re}Q_{0,\lambda_*}$ and $\operatorname{Re}Q_{\pi,\lambda_*}$ are isomorphic.

Returning from complex to real momenta then gives $F_0\simeq F_\pi$.
Consequently,
\begin{align}
 \Delta[E_B]
 &=\langle w_2(E_{B,0})- w_2(E_{B,\pi}),[T^2]\rangle
 \equiv0\pmod2. \qedhere
\end{align}

\end{proof}

\bibliography{PT_evenodd_cellwise}


\end{document}